\newcommand{\nop}[1]{}
\newtheorem{example}{Example}
\newtheorem{definition}{Definition}
\definecolor{Xiang}{rgb}{1,0,0}
\begin{document}

\title{Continuous Subgraph Matching via Cost-Model-based Dynamic Vertex Dominance Embeddings (Technical Report)}


\author{Yutong Ye}
\affiliation{
  \institution{East China Normal University}
  \country{China}}
\affiliation{
  \institution{Kent State University}
  \country{USA}}
\email{ytye@ecnu.edu.cn}
\email{yyt8@kent.edu}

\author{Xiang Lian}
\affiliation{
  \institution{Kent State University}
  \country{USA}
}
\email{xlian@kent.edu}

\author{Nan Zhang}
\affiliation{
  \institution{East China Normal University}
  \country{China}
}
\email{51255902058@stu.ecnu.edu.cn}

\author{Mingsong Chen}
\affiliation{
  \institution{East China Normal University}
  \country{China}
}
\email{mschen@sei.ecnu.edu.cn}
\renewcommand{\shortauthors}{Yutong Ye et al.}

\begin{abstract}
  In many real-world applications such as social network analysis, knowledge graph discovery, biological network analytics, and so on, graph data management has become increasingly important and has drawn much attention from the database community. While many graphs (e.g., Twitter, Wikipedia, etc.) are usually evolving over time, it is of great importance to study the \textit{continuous subgraph matching} (CSM) problem, a fundamental, yet challenging, graph operator, which continuously monitors subgraph matching results over dynamic graphs with a stream of edge updates. To efficiently tackle the CSM problem, we carefully design a general CSM processing framework, based on novel \textit{\underline{D}ynam\underline{I}c \underline{V}ertex Dom\underline{IN}ance \underline{E}mbedding} (DIVINE), which maps vertex neighborhoods into an embedding space to enable efficient subgraph matching and incremental maintenance under dynamic updates. Inspired by low pruning power for high-degree vertices, we propose a new \textit{degree grouping} technique to decompose high-degree star patterns into groups of lower-degree star substructures, and devise \textit{degree-aware star substructure synopses} (DAS$^3$) over embeddings of star substructure groups. We develop efficient algorithms to incrementally maintain dynamic graphs and answer CSM queries by traversing DAS$^3$ synopses and applying our designed \textit{vertex dominance} and \textit{range pruning strategies}. Through extensive experiments, we confirm the efficiency of our proposed DIVINE approach over both real and synthetic graphs.
\end{abstract}

\begin{CCSXML}
<ccs2012>
 <concept>
  <concept_id>00000000.0000000.0000000</concept_id>
  <concept_desc>Do Not Use This Code, Generate the Correct Terms for Your Paper</concept_desc>
  <concept_significance>500</concept_significance>
 </concept>
 <concept>
  <concept_id>00000000.00000000.00000000</concept_id>
  <concept_desc>Do Not Use This Code, Generate the Correct Terms for Your Paper</concept_desc>
  <concept_significance>300</concept_significance>
 </concept>
 <concept>
  <concept_id>00000000.00000000.00000000</concept_id>
  <concept_desc>Do Not Use This Code, Generate the Correct Terms for Your Paper</concept_desc>
  <concept_significance>100</concept_significance>
 </concept>
 <concept>
  <concept_id>00000000.00000000.00000000</concept_id>
  <concept_desc>Do Not Use This Code, Generate the Correct Terms for Your Paper</concept_desc>
  <concept_significance>100</concept_significance>
 </concept>
</ccs2012>
\end{CCSXML}

\ccsdesc[500]{Data Models and Languages~Graphs, social networks, web data, and semantic web}

\keywords{Vertex Dominance Embedding, Continuous Subgraph Matching}

\received{April 2025}
\received[revised]{July 2025}
\received[accepted]{August 2025}

\maketitle

\section{Introduction}
For the past few decades, the \textit{subgraph matching} problem has been extensively studied as a fundamental operator in the graph data management \cite{orogat2022smartbench,wasserman1994social,karlebach2008modelling,szklarczyk2015string,chen2009monitoring,zhang2022relative} for many real-world applications such as social network analysis, knowledge graph discovery, and pattern matching in biological networks. Given a large-scale data graph $G$, a subgraph matching query finds all the subgraphs of $G$ that are isomorphic to a given query graph $q$. 

While many previous works \cite{he2008graphs,shang2008taming,bonnici2013subgraph,bi2016efficient,juttner2018vf2++,han2019efficient,bhattarai2019ceci,sun2020memory} usually considered the subgraph matching over static graphs, real-world graph data are often dynamic and changing over time. For example, in social networks, friend relationships among users may be subject to changes (e.g., adding or breaking up with friends); similarly, in collaboration networks, people may start to collaborate on some project and then suddenly stop the collaboration for years. In these scenarios, it is important, yet challenging, to conduct the subgraph matching over such a large-scale, dynamic data graph $G_D$, upon updates (e.g., edge insertions/deletions). In other words, we need to continuously monitor subgraphs (e.g., user communities or collaborative teams) in $G_D$ that follow query graph patterns $q$ in real-world applications (such as online advertising to communities on social networks or finding collaboration teams from bibliographic networks).

\begin{figure}[t]
    \centering
    \subfigure[query graph $q$]{
        \includegraphics[height=3cm]{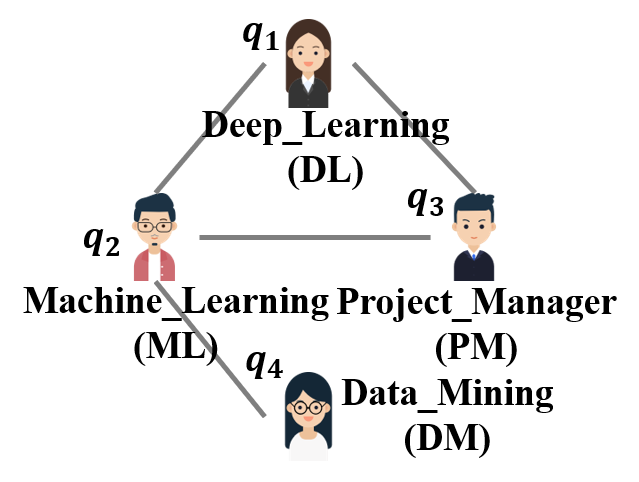}
        \label{subfig:query_graph}
    }
    \qquad
    \subfigure[dynamic collaboration network $G_D$]{
        \includegraphics[height=3.5cm]{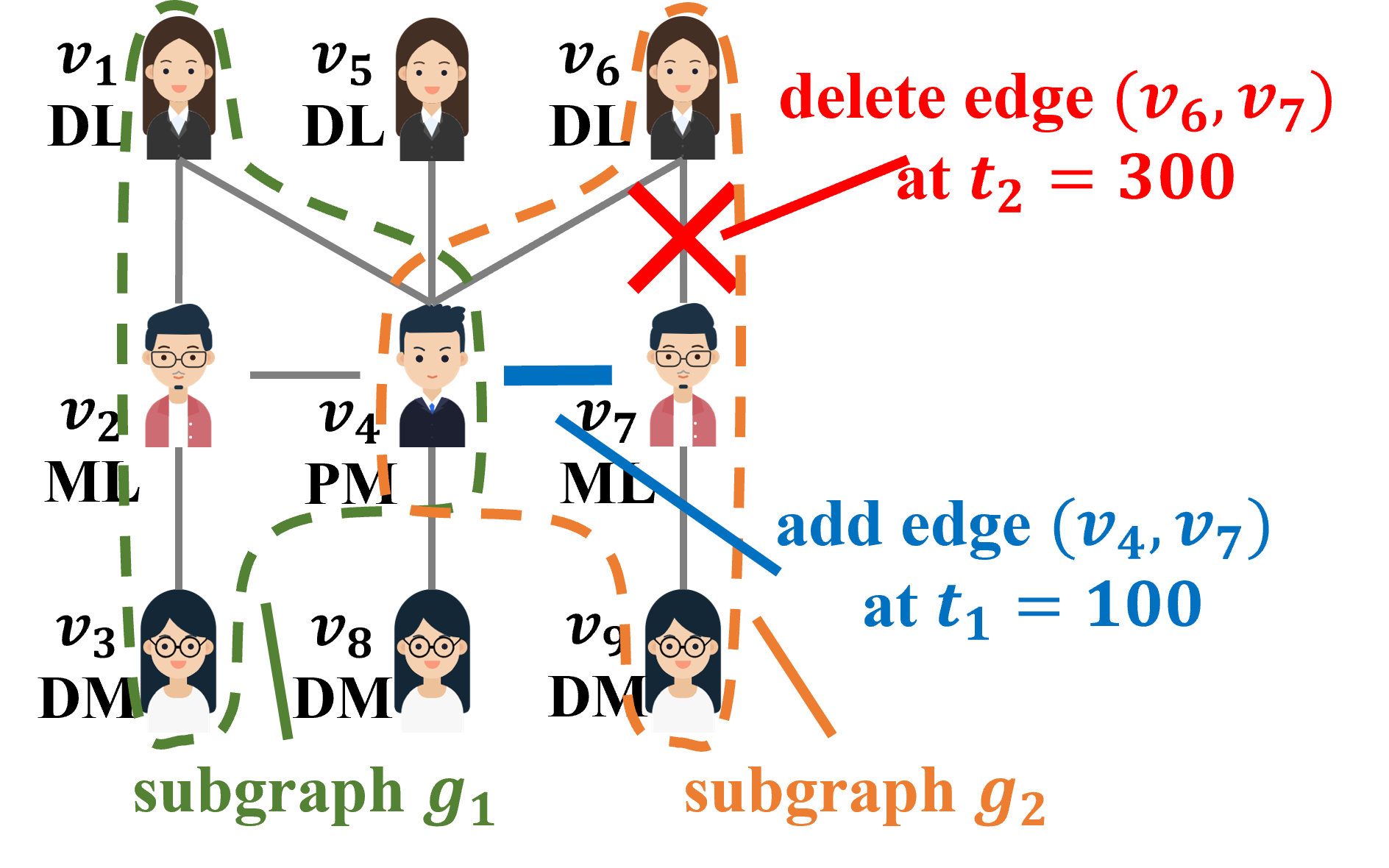}
        \label{subfig:initial_graph}
    }
    \caption{An example of the subgraph matching in dynamic collaboration networks $G_D$.}
    \label{fig:matching}
\end{figure}

Below, we give an example of the subgraph matching over dynamic collaboration networks in the expert team search application.

\begin{example} \textbf{(Monitoring Project Teams in Dynamic Collaboration Networks \cite{fei2013expfinder})} 
Due to numerous requests from departments/companies for recruiting (similar) project teams, a job search advisor may want to monitor the talent market (especially teams of experts) in dynamically changing collaboration networks. 

Consider a toy example of a collaboration network $G_D$ in Figure \ref{fig:matching}. Figure~\ref{subfig:query_graph} shows a desirable expert team pattern $q$, which is represented by a graph of 4 experts $q_1$$\sim$$q_4$, where each expert vertex $q_i$ is associated with one's skill/role keyword (e.g., $q_1$ with keyword ``Deep\_Learning'' and $q_3$ with ``Project\_Manager''), and each edge indicates that two experts had the collaboration before (e.g., edge $(q_1, q_3)$). 

Figure \ref{subfig:initial_graph} provides a dynamic collaboration network $G_D$, where collaboration edges among experts are inserted or deleted over time. For example, a new edge $(v_4, v_7)$ is inserted at timestamp $t_1=100$, implying that two experts $v_4$ and $v_7$ start to collaborate with each other in a team. Similarly, at timestamp $t_2=300$, an existing edge $(v_6, v_7)$ is removed from the graph $G_D$, which may indicate that experts $v_6$ and $v_7$ have not collaborated with each other for a certain period of time (e.g., two years). 

In this example, the job search advisor can register a continuous subgraph matching query over $G_D$ to continuously identify subgraphs of $G_D$ that match with the query graph pattern $q$. At timestamp $t_1$, subgraphs $g_1$ and $g_2$ are the subgraph matching answers, whereas at timestamp $t_2$,  subgraph $g_1$ is the only answer (since the deletion of edge $(v_6, v_7)$ invalidates the subgraph $g_2$).\qquad $\blacksquare$ 
\end{example}

Continuous subgraph matching over dynamic graph has many other real applications. For example, in the application of anomaly detection for online shopping \cite{qiu2018real}, the interactions (e.g., purchases, views, likes, etc.) among users/products over time form a dynamic social/transaction graph, and it is important to detect some abnormal events or fraudulent activities (i.e., query graph patterns) over such a dynamically changing graph.

Inspired by the examples above, we formulate a \textit{continuous subgraph matching} (CSM) query over a dynamic graph, which continuously monitors subgraph matching results, upon updates to the data graph.

\noindent\textbf{Prior Works:} Existing exact continuous matching methods obtain incremental matching results in the following three ways: i) recomputing matching results at each timestamp \cite{fan2013incremental}; ii) direct updating matching results over the data graph \cite{kankanamge2017graphflow}, and; iii) using auxiliary index over matching results to find updates \cite{choudhury2015selectivity,idris2017dynamic,idris2020general,kim2018turboflux,min2021symmetric}.

\noindent\textbf{Our Contributions:} 
Different from existing works that directly use structural information (e.g., vertex label and neighbors’ label set) to filter out vertex candidates, 
In this paper, we design a novel and effective \textit{\underline{D}ynam\underline{I}c \underline{V}ertex Dom\underline{IN}ance \underline{E}mbedding} (DIVINE) approach for candidate vertex/subgraph retrieval. 
Specifically, we propose a dynamic vertex dominance embedding technique, which transforms our CSM problem to a \textit{dominating region search} problem in the embedding space with no false negatives, and allows the generated vertex embeddings to be incrementally maintained in dynamic graphs.
To enhance the pruning power of our embeddings for high-degree vertices, we propose a new \textit{degree grouping} approach for vertex embeddings over basic subgraph patterns in different degree groups (i.e., groups of star substructures). We also devise a cost model to further guide the vertex embedding generation to effectively achieve high pruning power. Finally, we develop algorithms to incrementally maintain dynamic graphs and answer CSM queries efficiently.

Specifically, we make the following main contributions:
\begin{enumerate}
    \item We formally define the \textit{continuous subgraph matching} (CSM) problem in Section~\ref{sec:problem_definition}, and propose a general framework for CSM query answering in Section~\ref{sec:framework}.
    \item We carefully design incremental vertex dominance embeddings to facilitate CSM processing in Section~\ref{sec:vertex_dominance_embeddings}.
    \item We devise a degree grouping technique to enhance the pruning power of vertex embeddings, and construct \textit{degree-aware star substructure synopses} (DAS$^3$) to support the continuous subgraph matching in Section~\ref{sec:star_substructure_synopses}.
    \item We design effective synopsis pruning strategies, and develop an efficient algorithm, named \textit{\underline{D}ynam\underline{I}c \underline{V}ertex Dom\underline{IN}ance \underline{E}mbedding} (DIVINE), for CSM processing in Section~\ref{sec:query_processing}.
    \item We present novel cost-model-guided vertex embeddings to further improve the CSM pruning power in Section~\ref{sec:cost_model}.
    \item We demonstrate through extensive experiments the efficiency and effectiveness of our DIVINE approach over real/synthetic graphs in Section~\ref{sec:experiments}.
\end{enumerate}

\noindent Section \ref{sec:related_work} reviews previous works on dynamic graph management and graph embeddings. Finally, Section \ref{sec:conclusion} concludes this paper.

\section{Problem Definition}
\label{sec:problem_definition}
Table~\ref{tab:notations} depicts the commonly used symbols and their descriptions in this paper.

\subsection{Static Graph Model}
We give the model of a static undirected, vertex-labeled graph $G$. 

\begin{definition}
\textbf{(Static Graph, $G$)} A \textit{static graph}, $G$, is denoted as a triple $(V(G), E(G), L(G))$, where $V(G)$ is a set of vertices $v_i$, $E(G)$ is a set of edges $e_{i,j}=(v_i,v_j)$ between vertices $v_i$ and $v_j$, and $L(G)$ is a label function from each vertex $v_i\in V(G)$ to a label $l(v_i)$.\qquad $\blacksquare$
\label{def:graph}
\end{definition}

The static graph model (as given in Definition~\ref{def:graph}) has been widely used in various real-life applications to reflect relationships between different entities, such as social networks \cite{al2020efficient}, Semantic Web \cite{hassanzadeh2012data}, transportation networks \cite{rai2023top}, biological networks \cite{kan2023rmixup}, citation networks \cite{yang2023revisiting}, and so on.

\noindent {\bf The Graph Isomorphism:} Next, we define the classic graph isomorphism problem between two graphs.

\begin{definition}
\textbf{(Graph Isomorphism)} Given two graphs $G_1$ and $G_2$, graph $G_1$ is \textit{isomorphic} to graph $G_2$ (denoted as $G_1\equiv G_2$), if there exists a bijection mapping function $M: V(G_1)\Rightarrow V(G_2)$, such that: 
i) $\forall v_i\in V(G_1)$, we have $L(v_i)=L(M(v_i))$, and; 
ii) $\forall e_{i,j}\in E(G_1)$, edge $e_{M(v_i),M(v_j)} = (M(v_i), M(v_j)) \in E(G_2)$ holds.\qquad $\blacksquare$
\label{def:graph_isomorphism}
\end{definition}

In Definition~\ref{def:graph_isomorphism}, the graph isomorphism problem checks whether the graphs $G_1$ and $G_2$ exactly match each other.

\noindent {\bf The Subgraph Matching Problem:} The subgraph isomorphism (or subgraph matching) problem is defined as follows.

\begin{definition}
\textbf{(Subgraph Matching)} Given graphs $G$ and $g$, a \textit{subgraph matching} problem identifies subgraphs $g'$ of $G$ (i.e., $g'\subseteq G$) such that $g$ and $g'$ are isomorphic.\qquad $\blacksquare$
\end{definition}

Note that, the subgraph matching problem has been proven to be NP-complete \cite{lewis1983michael}.

\begin{table}[t]\small
\begin{center}
\caption{Symbols and Descriptions}
\label{tab:notations}
\begin{tabular}{|l||l|}
\hline
\textbf{Symbol}&\textbf{Description} \\
\hline\hline
    $G$ & a static graph\\\hline
    $G_D$ & a dynamic graph\\\hline
    $\Delta G_t$ & a set of graph updates at timestamp $t$\\\hline
    $G_t$ & a snapshot graph of $G_D$ at timestamp $t$\\\hline
    $e+$ (or $e-$) & an insertion (or deletion) of edge $e$\\\hline
    $q$ & a query graph\\\hline
    $Q$ & a set of query graph patterns\\\hline
    $x_i$ & a SPUR vector of vertex $v_i$\\\hline
    $y_i$ & a SPAN vector of vertex $v_i$'s 1-hop neighbors\\\hline
    $o(v_i)$ (or $o(g_{v_i})$, $o(s_{v_i})$)& a vertex dominance embedding vector \\\hline
\end{tabular}
\end{center}
\end{table}

\subsection{Dynamic Graph Model}
Real-world graphs are often continuously evolving and updated. Examples of such dynamic graphs include social networks with new or broken friend relationships among users, bibliographical networks with new/discontinued collaboration relationships, and so on. In this subsection, we provide the data model for dynamic graphs with update operations (i.e., edge insertion/deletion) below.

\begin{definition} \textbf{(Dynamic Graph, $G_D$)} A \textit{dynamic graph}, $G_D$, consists of an initial graph $G_0$ (as given by Definition~\ref{def:graph}) and a sequence of graph update operations $\Delta \mathcal{G}=\{\Delta G_1,\Delta G_2, \dots, \Delta G_t, \dots\}$ on $G$. 
Here, $\Delta G_t$ is a graph update operator, in the form of $(e+, t)$ or $(e-, t)$, which indicates either an insertion ($+$) or a deletion ($-$) of an edge $e$ at timestamp $t$, respectively.
\qquad $\blacksquare$
\label{def:dynamic_graph}
\end{definition}

Note that, our dynamic graph model follows the widely used edge-centric update paradigm \cite{choudhury2015selectivity,kankanamge2017graphflow,idris2017dynamic,idris2020general,kim2018turboflux,min2021symmetric,sun2022depth}, which has many real-world dynamic graph applications, such as adding/deleting social relationships or collaborations over time. In this model, the insertion of an edge may implicitly introduce new vertices (i.e., ending vertices of the edge), and the deletion of an edge will not remove (isolated) vertices. Thus, the graph size may monotonically grow, as our dynamic graph model evolves. We would like to study topics of supporting isolated vertex insertions/deletions in dynamic graphs as our future work.

\noindent {\bf Snapshot Graph, $G_t$:} From Definition \ref{def:dynamic_graph}, after applying to initial graph $G_0$ all the graph update operations up to the current timestamp $t$ (i.e., $\Delta G_1$, $\Delta G_2$, $\dots$, and $\Delta G_t$), we can obtain a snapshot of the dynamic graph $G_D$, denoted as $G_t$. 

\noindent {\bf Discussions on the Edge Insertion:} For the edge insertion $(e+, t)$ in Definition \ref{def:dynamic_graph}, there are three cases: 

\begin{itemize}
    \item both ending vertices of $e$ exist in the graph snapshot $G_{t-1}$;
    \item one ending vertex of $e$ exists in $G_{t-1}$ and the other one is a new vertex, and;
    \item both ending vertices of $e$ are new vertices.
\end{itemize}

We will later discuss how to deal with these three cases of edge insertions above for dynamic graph maintenance and incremental query answering.

\subsection{Continuous Subgraph Matching Queries}
In this subsection, we define a \textit{continuous subgraph matching} (CSM) query in a large dynamic graph $G_D$, which continuously monitors subgraph matching answers for a set, $Q$, of query graph patterns.

\begin{definition}
\textbf{(Continuous Subgraph Matching, CSM)} Given a dynamic graph $G_D$, a set, $Q$, of registered query graph patterns $q$, and a current timestamp $t$, a \textit{continuous subgraph matching} (CSM) query maintains a subgraph matching answer set $A(q,t)$ for each query graph pattern $q\in Q$, such that any subgraph $g\in A(q,t)$ in $G_D$ is isomorphic to the query graph $q$ (denoted as $g\equiv q$).

Alternatively, for each $q\in Q$, the CSM problem incrementally computes an update, $\Delta A(q,t)$, to the answer set $A(q,t-1)$, upon the change $\Delta G_t$ to $G_{t-1}$.\quad $\blacksquare$
\label{def:CSM}
\end{definition}

The CSM problem in Definition~\ref{def:CSM} is useful in real applications such as cyber-attack event detection in computer networks and credit card fraud monitoring over transaction networks. In particular, CSM incrementally maintains the subgraph matching answer set, $A(q, t)$, which contains subgraphs $g \subseteq G_t$ isomorphic to query graph $q$ at current timestamp $t$, where $t$ is the current logical time or update step of dynamic graph $G_D$. Note that, the answer set $A(q, t)$ may change for different timestamps $t$, upon edge insertions/deletions. For example, $A(q, 2)$ and $A(q, 5)$ can be different, due to graph updates $\Delta G_3$, $\Delta G_4$, and $\Delta G_5$.

\subsection{Challenges}
The subgraph matching problem in a data graph has been proven to be NP-complete \cite{lewis1983michael}. Thus, it is even more challenging to design efficient and effective techniques for answering subgraph matching queries in the scenario of dynamic graphs. 
Due to frequent updates to the data graph, it is non-trivial how to effectively maintain a large-scale dynamic graph that can support efficiently obtaining initial subgraph matching results, and continuously monitor CSM answer sets, upon fast graph updates.

\begin{algorithm}[t]
\caption{{\bf The DIVINE-based Framework for Continuous Subgraph Matching}}
\label{alg1}
\KwIn{
    i) a dynamic data graph $G_D$;
    ii) a set, $Q$, of registered query graph patterns, and;
    iii) current timestamp $t$\\
}
\KwOut{
    the subgraph matching answer set, $A(q,t)$, for each registered query graph $q\in Q$
}

\tcp{\bf Graph Maintenance Phase}

generate a vertex embedding $o(v_i)$ for each vertex $v_i\in G_0$\\

build $m$ synopses, $Syn_j$, over vertex dominance embeddings with degree grouping\\

\For{each graph update $\Delta G_t\in \Delta \mathcal{G}$}{
    obtain $G_t$ by applying the graph update $\Delta G_t$ over $G_{t-1}$\\
    
    maintain synopses $Syn_j$ by updating the corresponding vertex embeddings\\
}

\tcp{\bf CSM Query Answering Phase}

\tcp{Initial Subgraph Matching Answer Set Generation}

\For{each new query graph $q\in Q$}{
compute a query embedding vector $o(q_i)$ of each vertex $q_i\in V(q)$\\

find candidate vertex sets $q_i.cand\_set$ for query vertices $q_i$ by accessing synopses $Syn_j$\\

enumerate candidate subgraphs $g$ from candidate vertices in $q_i.cand\_set$\\ 

{obtain an initial subgraph matching answer set, $A(q,t)$, of matching subgraphs $g$ ($\equiv q$)}\\
}

\tcp{Continuous Subgraph Matching Answer Set Monitoring}

\For{each graph update $\Delta G_{t}\in \Delta \mathcal{G}$}{
    \For{each query graph $q\in Q$}{
    
        update candidate vertex sets $q_i.cand\_set$ by checking the updated vertices in $\Delta G_{t}$\\
    
        compute answer changes $\Delta A(q,t)$ from the new candidate sets $q_i.cand\_set$\\
        
        apply  changes $\Delta A(q,t)$ to $A(q,t-1)$ and obtain the latest subgraph answers in $A(q,t)$\\
    }
}
\end{algorithm}

\section{Continuous Subgraph Matching Framework}
\label{sec:framework}
Algorithm~\ref{alg1} illustrates a general framework for continuous subgraph matching based on \textit{\underline{D}ynam\underline{I}c \underline{V}ertex Dom\underline{IN}ance \underline{E}mbedding} (DIVINE), which consists of two phases, \textit{graph maintenance} and \textit{CSM query answering phases}. 
That is, we first pre-process the initial graph $G_0$ by constructing $m$ synopses $Syn_j$ over our proposed dynamic vertex dominance embeddings, and incrementally maintain the update (lines 1-5). Here, the subscript $j$ in $Syn_j$ refers to the synopsis structure corresponding to the $j$-th degree group of vertices. Then, we answer the CSM query over synopses via initial answer generation (lines 6-10) and dynamic answer monitoring (lines 11-15).

Specifically, in the graph maintenance phase, we first pre-process the initial graph $G_0$, by constructing $m$ graph synopses, $Syn_j$, over vertex dominance embeddings $o(v_i)$ with $m$ degree groups (discussed later in Section \ref{sec:star_substructure_synopses}), respectively (lines 1-2). Here, the subscript $j$ in $Syn_j$ refers to the synopsis structure corresponding to the $j$-th degree group of vertices. Then, for each graph update operation $\Delta G_t \in \Delta \mathcal{G}$, we incrementally maintain the dynamic graph $G_t$, vertex embeddings, and synopses $Syn_j$ (lines 3-5).

In the CSM query answering phase, for each newly registered query graph $q$, we obtain the initial query answer set, $A(q, t)$, over the current snapshot $G_t$ of dynamic graph $G_D$ (lines 6-10). Then, we monitor the updates in query answer sets (w.r.t. $Q$) over continuously changing dynamic graph $G_t$ (lines 11-15).

In particular, for initial answer set generation, we first compute query vertex embeddings $o(q_i)$ from vertices $q_i \in V(q)$ in each query graph $q$ (lines 6-7). Then, for each query vertex $q_i$, we find and store candidate vertices in a set $q_i.cand\_set$ by accessing $Syn_j$ (line 8). 
After that, we enumerate candidate subgraphs $g$ by assembling candidate vertices in $q_i.cand\_set$ (line 9), and refine/return matching subgraphs $g$ in an initial subgraph matching answer set $A(q,t)$ (line 10).

Next, to monitor CSM query answers, for each graph update operation $\Delta G_{t}\in \Delta \mathcal{G}$ and each query graph $q\in Q$, we check the updated vertices in $\Delta G_t$, update candidate vertex sets $q_i.cand\_set$, and compute the changes $\Delta A(q,t)$ in $A(q,t-1)$ (lines 11-14). Finally, we obtain the CSM query answer set $A(q,t)$ by applying changes $\Delta A(q,t)$ to $A(q,t-1)$ (line 15).

\section{Dynamic Vertex Dominance Embeddings}
\label{sec:vertex_dominance_embeddings}
In this section, we will present our dynamic vertex dominance embeddings, which can be incrementally maintained, preserve dominance relationships between basic subgraph patterns (i.e., unit star subgraphs), and support efficient CSM over dynamic graphs.

\subsection{Preliminaries and Terminologies}
We first introduce two terms of basic subgraph patterns (used for our proposed vertex dominance embedding), that is, \textit{unit star subgraphs} and \textit{star substructures}:

\begin{itemize}
    \item {\bf Unit Star Subgraph $g_{v_i}$:} A unit star subgraph $g_{v_i}$ is defined as a (star) subgraph in $G_t$ containing a center vertex $v_i$ and its one-hop neighbors.
    \item {\bf Star Substructure $s_{v_i}$:} A star substructure $s_{v_i}$ is defined as a (star) subgraph of the unit star subgraph $g_{v_i}$ in $G_t$ (i.e., $s_{v_i} \subseteq g_{v_i}$), which shares the same center vertex $v_i$.
\end{itemize}

As an example in Figure~\ref{subfig:subgraph}, the unit star subgraph $g_{v_1}$ in data graph $G_t$ contains a center vertex $v_1$ and its 1-hop neighbors $v_2$, $v_3$, and $v_4$. Figure \ref{subfig:substructure} shows 8 ($=2^3$) possible star substructures $s_{v_1}$ ($\subseteq g_{v_1}$), which are centered at vertex $v_1$ and with different combinations of $v_1$'s 1-hop neighbors. 

Note that, intuitively, star substructures $s_{v_i}$ are potential query star patterns (i.e., vertex $q_i \in V(q)$ and its 1-hop neighbors) in the query graph $q$ of the CSM problem.

\subsection{Vertex Dominance Embedding}
\label{subsec:embedding}

To tackle the CSM problem, our goal is to avoid costly graph comparisons by transforming them into a search problem over vertex embeddings in a vector space. Thus, we design an embedding method to generate a vector, $o(v_i)$, for each vertex $v_i\in V(G_t)$ based on the labels of vertex $v_i$ and its 1-hop neighbors (i.e., unit star subgraph or star substructure). Specifically, the embedding vector $o(v_i)$ consists of two portions, a \textit{\underline{S}eeded \underline{P}se\underline{U}do \underline{R}andom} (SPUR) vector $x_i$ and a 
\textit{\underline{SP}UR \underline{A}ggregated \underline{N}eighbor} (SPAN) vector $y_i$.

\begin{figure}[t]
    \centering
    \subfigure[{unit star subgraph $g_{v_1}$}]{
        \includegraphics[height=2.2cm]{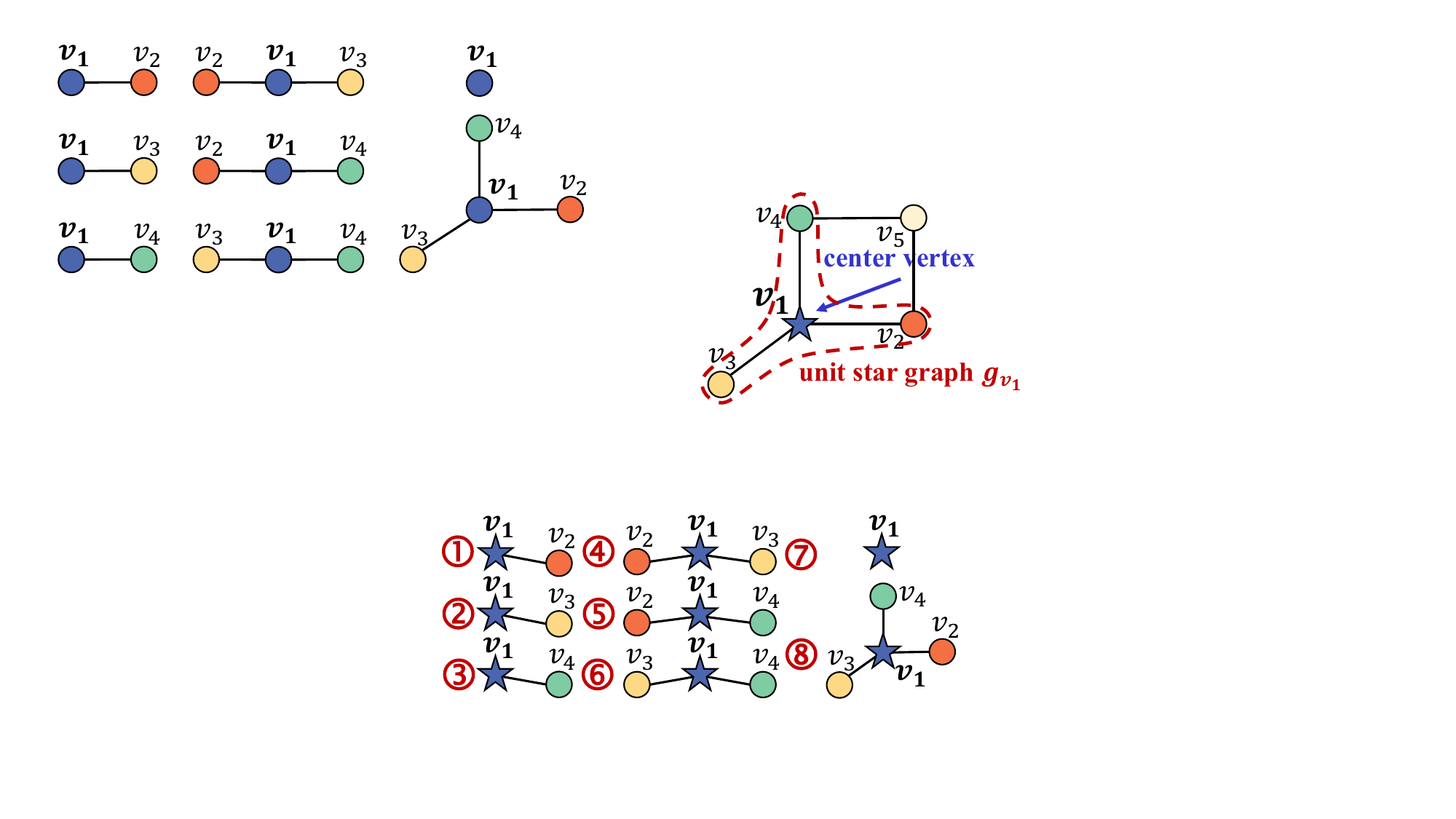}
        \label{subfig:subgraph}
    }
    \qquad
    \subfigure[{8 possible star substructures $s_{v_1}$}]{
        \includegraphics[height=2.2cm]{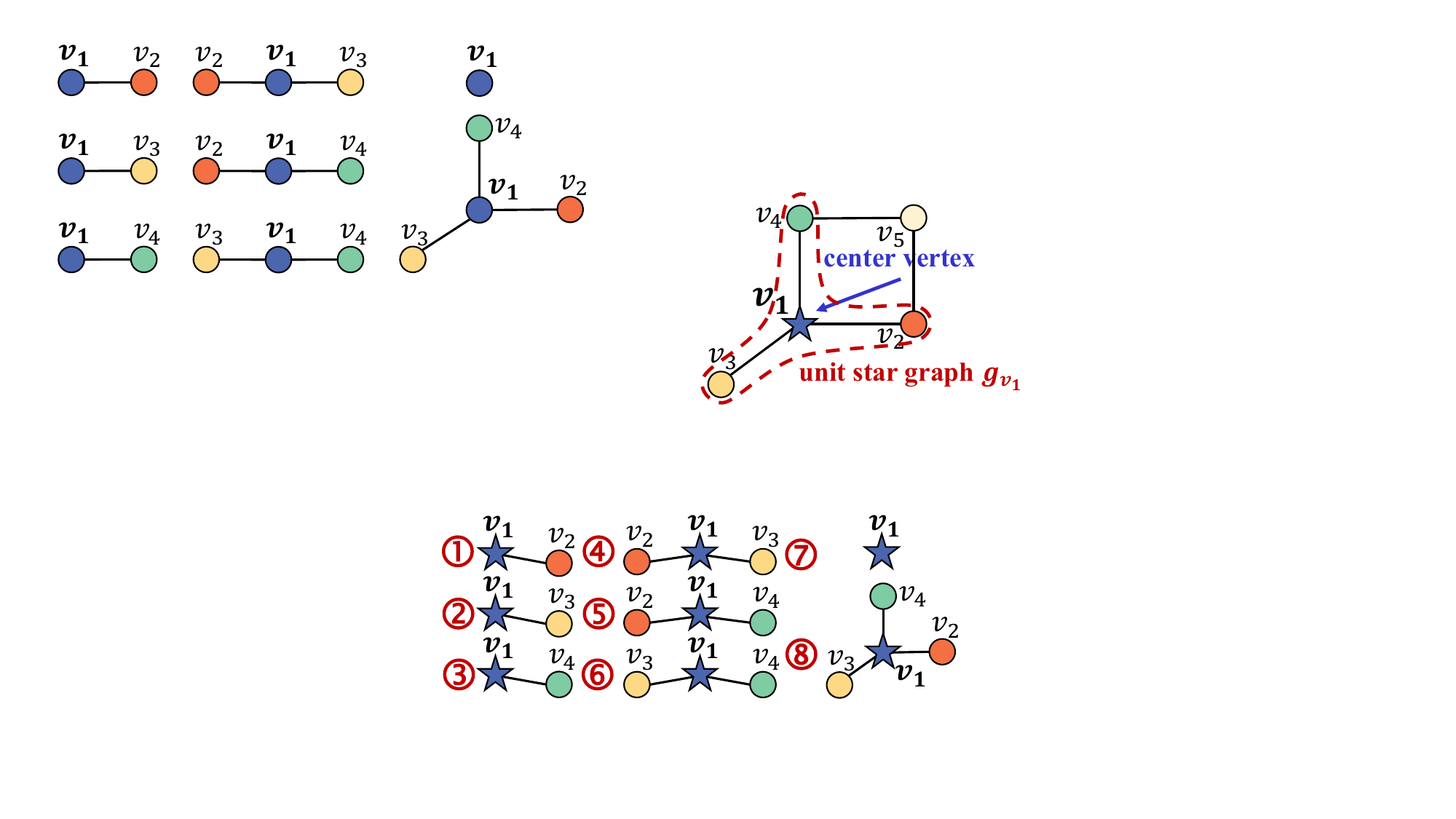}
        \label{subfig:substructure}        
    }
\caption{Illustration of unit star subgraph and substructures.}
\label{fig:unit_star_example}
\end{figure}

\begin{figure}[t]
    \centering
    \includegraphics[width=0.65\textwidth]{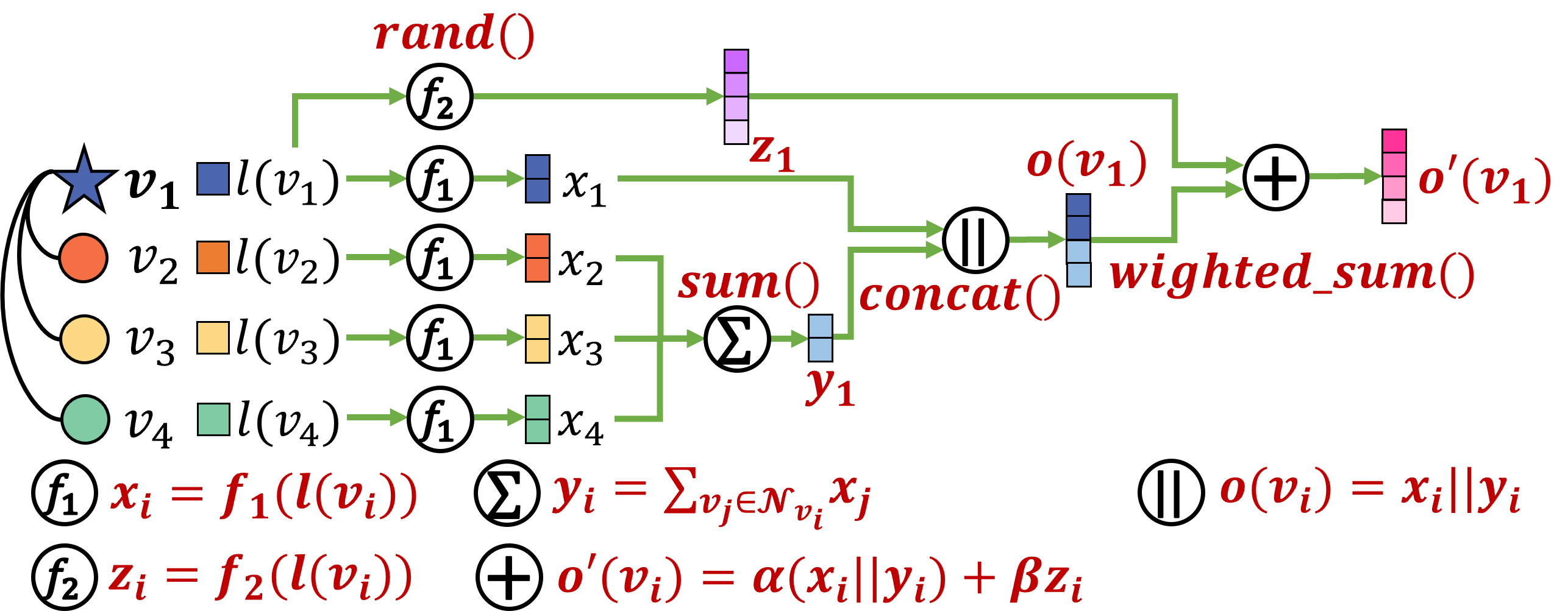}
    \caption{Illustration of vertex dominance embedding $o(v_1)$ and the optimized vertex dominance embedding $o'(v_1)$.}
    \label{fig:randomized_method}
\end{figure}

\noindent{\bf \underline{S}eeded \underline{P}se\underline{U}do \underline{R}andom (SPUR) Vector, \bm{$x_i$}:}
In the dynamic graph $G_D$, each vertex $v_i$ is associated with a label $l(v_i)$, which can be encoded by a nonnegative integer. 
By treating the vertex label $l(v_i)$ as the seed, we can generate a \textit{Seeded PseUdo Random} (SPUR) vector, $x_i$, of a given tunable arity $d$ via a pseudo-random number generator $f_1(\cdot)$, that is,

\begin{equation}
    x_i=f_1(l(v_i)).
    \label{eq:SPUR}
\end{equation}

We denote $x_i[j]$ as the $j$-th element in SPUR vector $x_i$, where $1\leq j\leq d$ and $x_i[j]>0$. 

Note that, since our randomized SPUR vector $x_i$ is generated based on the seed (i.e., label $l(v_i)$ of each vertex $v_i$), vertices with the same label will have the same SPUR vector. 

In the example of Figure \ref{fig:randomized_method}, vertices $v_1 \sim v_4$ are associated with labels $l(v_1) \sim l(v_4)$, respectively. We generate their corresponding SPUR vectors, $x_1 \sim x_4$, via a randomized function $f_1(\cdot)$ in Eq.~(\ref{eq:SPUR}).

\noindent{\bf \underline{SP}UR \underline{A}ggregated \underline{N}eighbor (SPAN) Vector, \bm{$y_i$}:}
For each vertex $v_i$, we also aggregate structural information from its 1-hop neighbors, $v_j \in \mathcal{N}_{v_i}$, in a star subgraph pattern (i.e., unit star subgraph $g_{v_i}$ or star substructure $s_{v_i}$), and sum up their corresponding SPUR vectors $x_j$ to obtain a \textit{SPur Aggregated Neighbor} (SPAN) vector $y_i$:

\begin{equation}
    y_i=\sum_{\forall v_j\in \mathcal{N}_{v_i}}x_j,
    \label{eq:SPAN}
\end{equation}

where $\mathcal{N}_{v_i}$ is a set of 1-hop neighbors of $v_i$, and $x_j$ are the SPUR vectors of 1-hop neighbors of $v_i$. 

Taking Figure~\ref{fig:randomized_method} as an example, the SPAN vector of the vertex $v_1$ is calculated by $y_1=x_2+x_3+x_4$.

\noindent{\bf Vertex Dominance Embedding, \bm{$o(v_i)$}:} 
We define the \textit{vertex dominance embedding} vector, $o(v_i)$, of each vertex $v_i$, by concatenating its SPUR vector $x_i$ and SPAN vector $y_i$:

\begin{equation}
    o(v_i)=x_i \text{ }|| \text{ }y_i,
    \label{eq:embedding}
\end{equation}

where $||$ is the concatenation operation of two vectors, and $x_i$ and $y_i$ are given in Eqs.~(\ref{eq:SPUR}) and~(\ref{eq:SPAN}), respectively. 

Intuitively, vertex dominance embedding $o(v_i)$ encodes label features of a star subgraph pattern (i.e., $g_{v_i}$ or $s_{v_i}$) containing center vertex $v_i$ and its 1-hop neighbors. To distinguish vertex dominance embeddings $o(v_i)$ from different star subgraph patterns, we also use notations $o(g_{v_i})$ or $o(s_{v_i})$.

As shown in Figure~\ref{fig:randomized_method}, the vertex dominance embedding of vertex $v_1$ can be computed by $o(v_1)=x_1\text{ }||\text{ }y_1$.

\begin{figure}[t]
    \centering
    \includegraphics[width=0.65\textwidth]{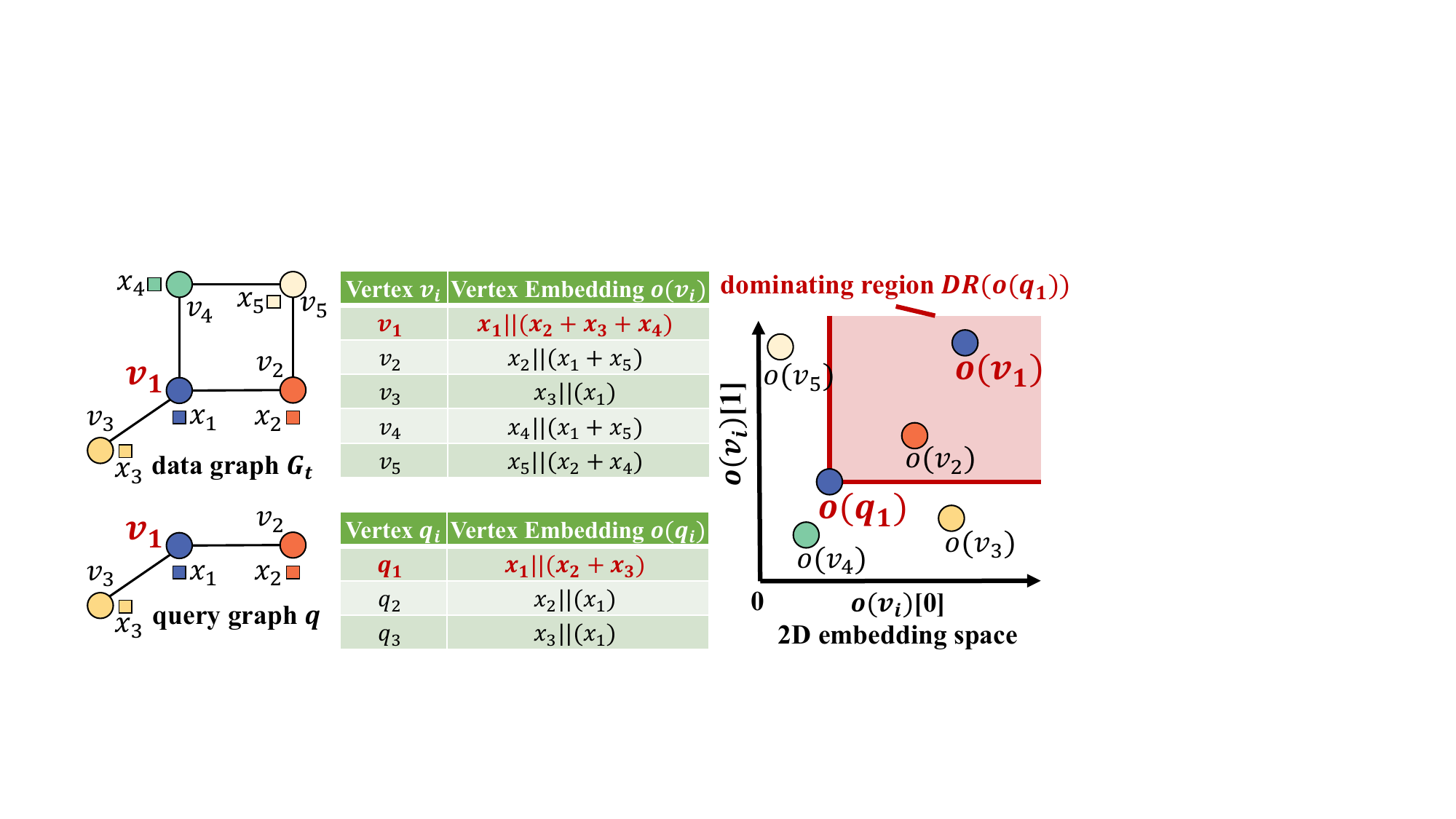}
    \caption{An example of vertex dominance embeddings in subgraph matching  ($||$ is the concatenation of two vectors).}
    \label{fig:static_dominance_example}
\end{figure}

\subsection{Properties of  Vertex Dominance Embedding}
\label{subsec:vertex_properties}

\noindent {\bf The Dominance Property of Vertex Embeddings:} 
From Eq.~(\ref{eq:embedding}), the vertex embedding $o(v_i)$ is a combination of SPUR and SPAN vectors. {\bf Given any unit star subgraph $\bm{g_{v_i}}$ and its star substructure $\bm{s_{v_i}}$ (i.e., $\bm{s_{v_i}\subseteq g_{v_i}}$), their vertex dominance embeddings always satisfy the \textit{dominance} \cite{borzsony2001skyline} (or equality) condition: $\bm{o(s_{v_i})[j]\leq o(g_{v_i})[j]}$, for all dimensions $\bm{1\leq j\leq 2d}$}, denoted as $o(s_{v_i}) \preceq o(g_{v_i})$ (including the case that $o(s_{v_i})=o(g_{v_i})$). 
Here, our (relaxed) dominance semantics include both classic dominance (in skyline definition \cite{borzsony2001skyline}) and equality cases. Formally, for two embedding vectors $o(x), o(y) \in \mathbb{R}^{2d}$, we say that: $o(x)$ dominates $o(y)$, if $o(x)[i] \leq o(y)[i]$ holds for all dimensions $1\leq i  \leq 2d$.

The reason for the property above is as follows. Since star patterns $s_{v_i}$ and $g_{v_i}$ have the same center vertex $v_i$, the vertex embeddings, $o(s_{v_i})$ and $o(g_{v_i})$, for $s_{v_i}$ and $g_{v_i}$, respectively, must share the same SPUR vector $x_i$. Moreover, since $s_{v_i}\subseteq g_{v_i}$ holds, 1-hop neighbors of vertex $v_i$ in $s_{v_i}$ and $g_{v_i}$ must satisfy the condition that $\mathcal{N}_{v_i} (s_{v_i}) \subseteq \mathcal{N}_{v_i} (g_{v_i})$. As the SPAN vectors are defined as the summed SPUR aggregates of 1-hop neighbors in $\mathcal{N}_{v_i} (s_{v_i})$ and $\mathcal{N}_{v_i} (g_{v_i})$, respectively, we have the dominance relationship between SPAN vectors $y_i(s_{v_i})$ and $y_i(g_{v_i})$ (i.e., $y_i(s_{v_i}) \preceq y_i(g_{v_i})$).

\begin{lemma}
\label{lemma:embedding_property}
(The Dominance Property of Vertex Embeddings) Given a unit star subgraph $g_{v_i}$ centered at vertex $v_i$ and any of its star substructures $s_{v_i}$ (i.e., $s_{v_i}\subseteq g_{v_i}$), their vertex embeddings satisfy the dominance condition that: $o(s_{v_i}) \preceq o(g_{v_i})$ (including $o(s_{v_i})=o(g_{v_i})$) in the embedding space.
\end{lemma}

\begin{proof}
For a given unit star subgraph $g_{v_i}$ centered at vertex $v_i$ and any of its star substructures $s_{v_i}$ (i.e., $s_{v_i}\subseteq g_{v_i}$), since they share the same center vertex $v_i$ (i.e., with the same label $l(v_i)$), they must have the same SPUR vector $x_i$. 

Moreover, since $s_{v_i}\subseteq g_{v_i}$ holds, we also have $\mathcal{N}_{v_i} (s_{v_i}) \subseteq \mathcal{N}_{v_i} (g_{v_i})$, where $\mathcal{N}_{v_i} (s_{v_i})$ (or $\mathcal{N}_{v_i} (g_{v_i})$) is a set of $v_i$'s 1-hop neighbors in subgraph $s_{v_i}$ (or $g_{v_i}$). Thus, 
for SPAN vectors $y_i(s_{v_i})$ and $y_i(g_{v_i})$ of $s_{v_i}$ and $g_{v_i}$, respectively, it must hold that: $y_i(s_{v_i})[k] \leq y_i(g_{v_i})[k]$ for all dimensions $k$ (since we have $y_i(s_{v_i})[k] =\sum_{\forall v_j\in \mathcal{N}_{v_i} (s_{v_i})}x_j[k] \leq \sum_{\forall v_j\in \mathcal{N}_{v_i} (g_{v_i})}x_j[k] = y_i(g_{v_i})[k]$). In other words, their SPAN vectors satisfy the condition that $y_i(s_{v_i}) \preceq y_i(g_{v_i})$. Therefore, for $s_{v_i}\subseteq g_{v_i}$, their vertex dominance embeddings satisfy the condition that: $o(s_{v_i}) = (x_i || y_i(s_{v_i})) \preceq (x_i || y_i(g_{v_i})) = o(g_{v_i})$ (including $o(s_{v_i})=o(g_{v_i})$) in the embedding space, which completes the proof.
\end{proof}

\underline{\it The Usage of the Vertex Embedding Property:} 
Note that, a star substructure $s_{v_i}$ ($\subseteq g_{v_i}$) can be a potential query unit star subgraph $s_{q_i}$ from the query graph $q$ (containing a center query vertex $q_i$ and its 1-hop neighbors) for continuous subgraph matching. 

During the subgraph matching, given a query embedding vector $o(q_i)$ of query vertex $q_i$, we can identify candidate data vertices $v_i$ in dynamic graph $G_D$ with embeddings $o(v_i)$ dominated by $o(q_i)$. This way, we can convert the continuous subgraph matching problem into a dominance search problem in the embedding space.

\begin{example}
{\it
Figure~\ref{fig:static_dominance_example} shows an example of our vertex dominance embedding for subgraph matching between a snapshot graph $G_t$ and a query graph $q$. 
Based on Eq.~(\ref{eq:embedding}), we can obtain the embedding for each vertex in $G_t$ or $q$. For example, vertices $v_1$ and $q_1$ have the embeddings $o(v_1)=x_1||(x_2+x_3+x_4)$ and $o(q_1)=x_1||(x_2+x_3)$, respectively.

Since vertex $q_1$ in $q$ matches with vertex $v_1$ in (a subgraph of) $G_t$, we can see that $o(q_1)$ is dominating $o(v_1)$ in a 2D embedding space (based on the property of vertex dominance embedding), which implies that $g_{q_1}$ is potentially a subgraph of (i.e., matching with) $g_{v_1}$. Moreover, although $o(v_2)$ is dominated by $o(q_1)$, $g_{v_2}$ does not match $g_{q_1}$. Thus, in this case, vertex $v_2$ is a false positive during the subgraph candidate retrieval.
On the other hand, since $o(q_1)$ is not dominating $o(v_3)$ in the 2D embedding space, query vertex $q_1$ cannot match with vertex $v_3$ in graph $G_t$.  \qquad $\blacksquare$}
\end{example}

\begin{figure}[t]
    \centering
    \includegraphics[width=0.65\textwidth]{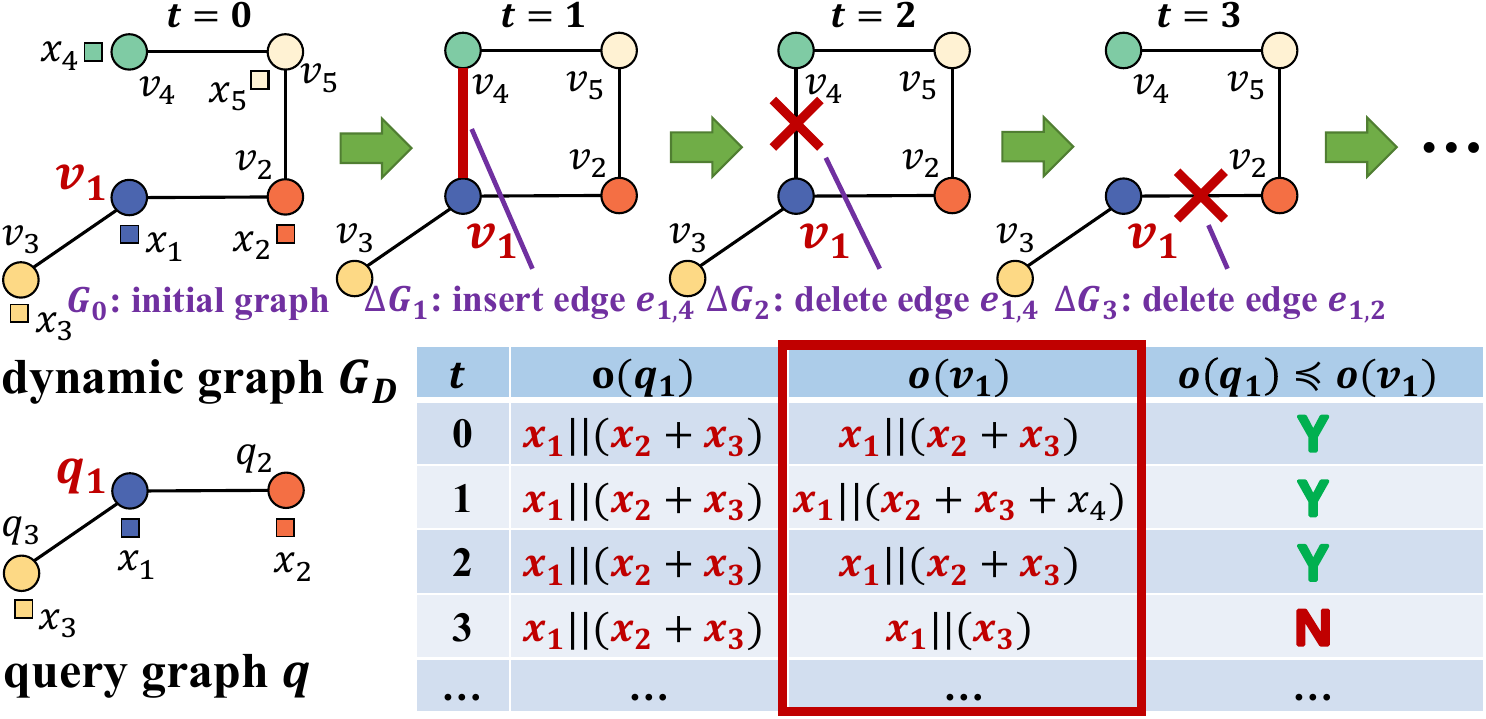}
    \caption{An example of vertex embeddings for CSM.}
    \label{fig:dynamic_dominance_example}
\end{figure}

\noindent {\bf Ease of Incremental Updates for Vertex Dominance Embeddings:} In the dynamic graph $G_D$, our proposed vertex dominance embedding $o(v_i)$ can be incrementally maintained, upon graph update operations in $\Delta G_t$. 
Specifically, any edge insertion $(e+, t)$ or deletion $(e-, t)$ operation (as given in Definition~\ref{def:dynamic_graph}) will affect the embeddings of two ending vertices, $v_i$ and $v_j$, of edge $e = (v_i, v_j)$. We incrementally update the embedding $o(v_i)$ for vertex $v_i$ below. 

\begin{itemize}
    \item When $v_i$ is a newly inserted vertex, we obtain its embedding vector $o(v_i)$ from scratch (i.e., $o(v_i) = x_i || x_j$), and;

    \item When $v_i$ is an existing vertex in dynamic graph $G_D$, for an insertion (or deletion) operator of edge $e = (v_i, v_j)$, we update the SPAN vector, $y_i$, in $o(v_i)$ with $(y_i+x_j)$ (or $(y_i-x_j)$), where $x_j$ is the SPUR vector of vertex $v_j$. 
\end{itemize}    
The case of updating vertex embedding  $o(v_j)$ is similar and omitted.

Note that, for each graph update (i.e., edge insertion $(e+, t)$ or deletion $(e-, t)$), the time complexity of incrementally updating vertex dominance embedding vectors is given by $O(d)$, where $d$ is the dimension of SPUR/SPAN vectors.

\begin{example}
{\it
Figure~\ref{fig:dynamic_dominance_example} illustrates an example of using our vertex dominance embeddings to conduct CSM over the dynamic graph $G_D$ over time $t$ from 0 to 3, where $q$ is a given query graph. Consider two vertices $v_1$ and $q_1$ in graphs $G_D$ and $q$, respectively. At timestamp $t=0$, their embeddings satisfy the dominance relationship, that is, $o(q_1) \preceq o(v_1)$, thus, vertex $v_1$ is a candidate matching with $q_1$. 

At timestamp $t=1$, a new edge $e_{1,4}$ is inserted (i.e., $\Delta G_1$), and the vertex embedding $o(v_1)$ changes from $x_1||(x_2+x_3)$ to $x_1||(x_2+x_3+x_4)$ (by including new neighbor $v_4$'s SPUR vector $x_4$). Since $o(q_1) \preceq o(v_1)$ still holds, $v_1$ remains a candidate matching with $q_1$.

At timestamp $t=2$, since edge $e_{1,4}$ is deleted  (i.e., $\Delta G_2$), the vertex embedding $o(v_1)$ changes back to $x_1||(x_2+x_3)$. At timestamp $t=3$, edge $e_{1,2}$ is deleted (i.e., $\Delta G_3$), and $o(v_1)$ is updated with $x_1||(x_3)$ (by removing the expired neighbor $v_2$'s SPUR vector $x_2$). In this case, $o(q_1) \preceq o(v_1)$ does not hold, and vertex $v_1$ fails to match with $q_1$. \text{ } $\blacksquare$
}
\end{example}

\subsection{Embedding Optimization with  Base Vector $z_i$}
Figure~\ref{fig:node_embedding_1} shows the distributions of 2D SPUR and SPAN vectors, $x_i$ and $y_i$, respectively, in vertex dominance embedding $o(v_i)$ over Yeast graph \cite{sun2020memory} (with 3,112 vertices and 12,519 edges). We can see that, in Figure \ref{subfig:spur_vectors}, the SPUR vectors $x_i$ generated by a seeded randomized function are distributed uniformly in the embedding space, whereas the SPAN vectors (i.e., the summed SPUR vectors) $y_i$ are more clustered along the reverse diagonal line in Figure \ref{subfig:span_vectors}.

As mentioned in Section \ref{subsec:vertex_properties}, given a query vertex $q_i$, we can use its embedding vector $o(q_i)$ as a query point to find the dominated embedding vectors $o(v_i)$ ($= x_i \text{ }||\text{ } y_i$) in the 4D embedding space. Due to the scattered SPUR/SPAN vectors in the embedding space, it is very likely that some false alarms of embedding vectors $o(v_i)$ are included as candidate matching vertices (i.e., dominated by $o(q_i)$).

\begin{figure}[t]
    \centering
    \subfigure[{SPUR vectors $x_i$}]{
        \includegraphics[height=3cm]{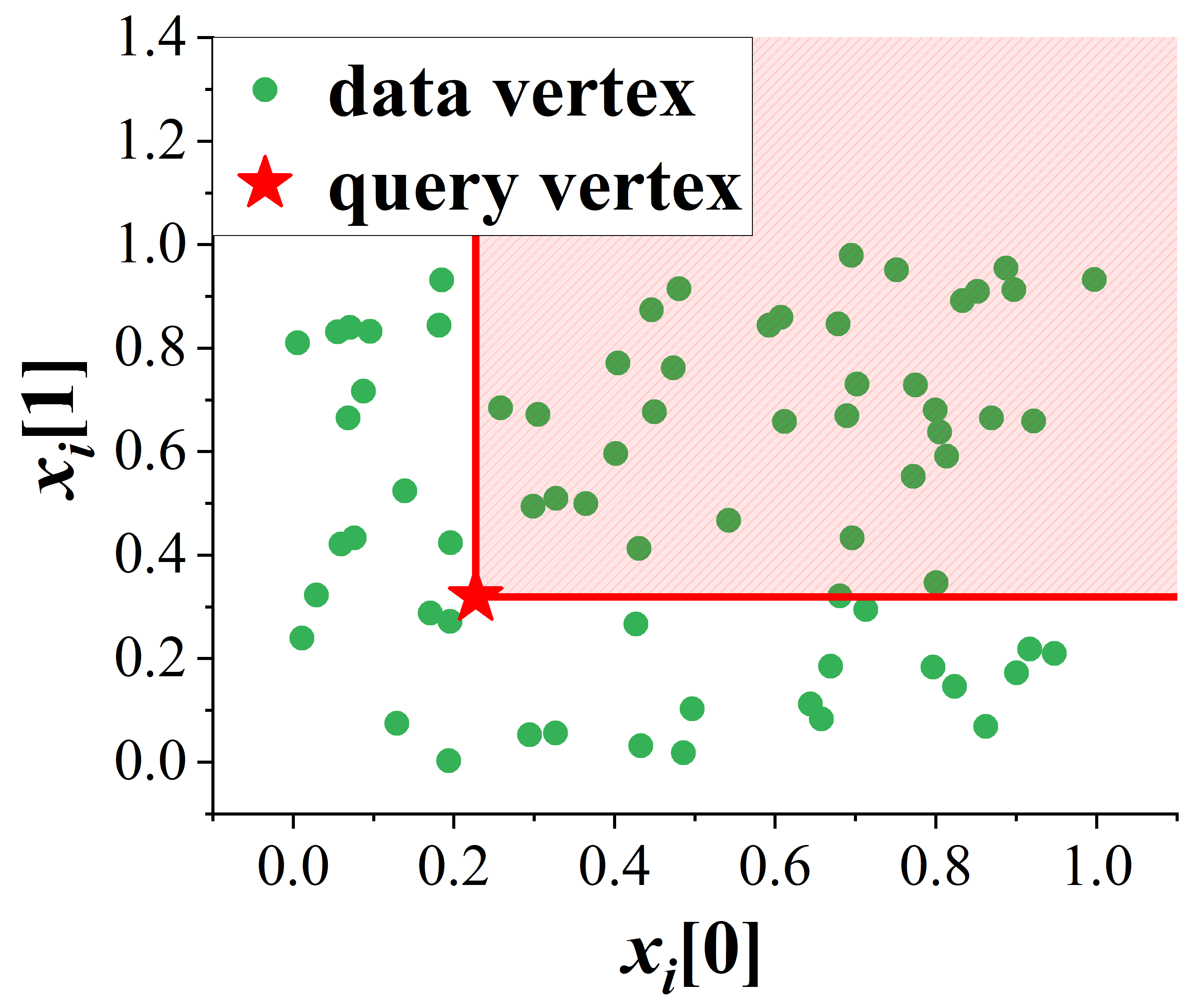}
        \label{subfig:spur_vectors}
    }
    \qquad
    \subfigure[{SPAN vectors $y_i$}]{
        \includegraphics[height=3cm]{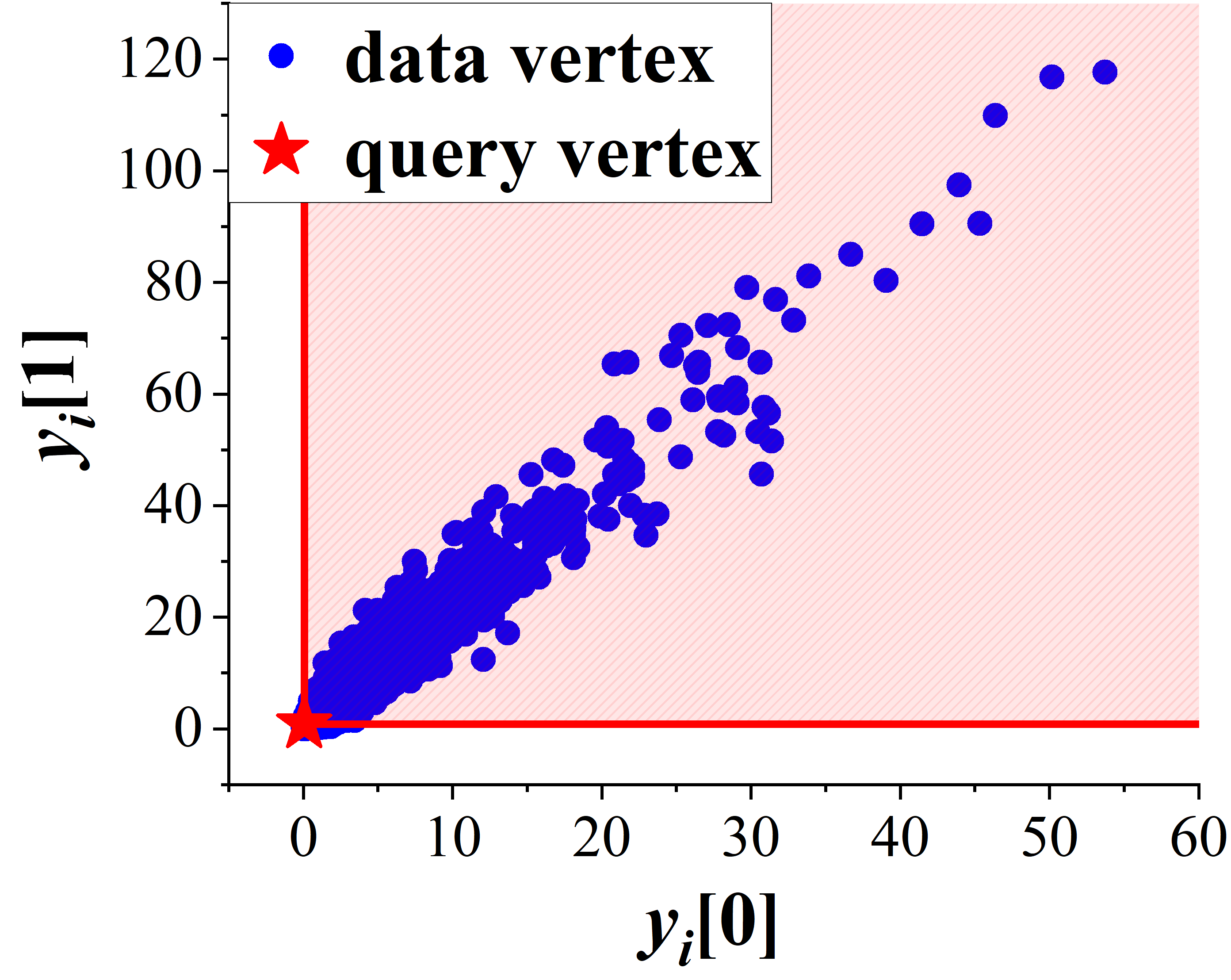}
        \label{subfig:span_vectors}        
    }
\caption{An example of SPUR/SPAN vector distributions in vertex dominance embedding $o(v_i)$.}
\label{fig:node_embedding_1}
\end{figure}

\begin{figure}[t]
    \centering
    \subfigure[{$\alpha x_i+\beta (z_i[0], z_i[1])^T$}]{
        \includegraphics[height=3cm]{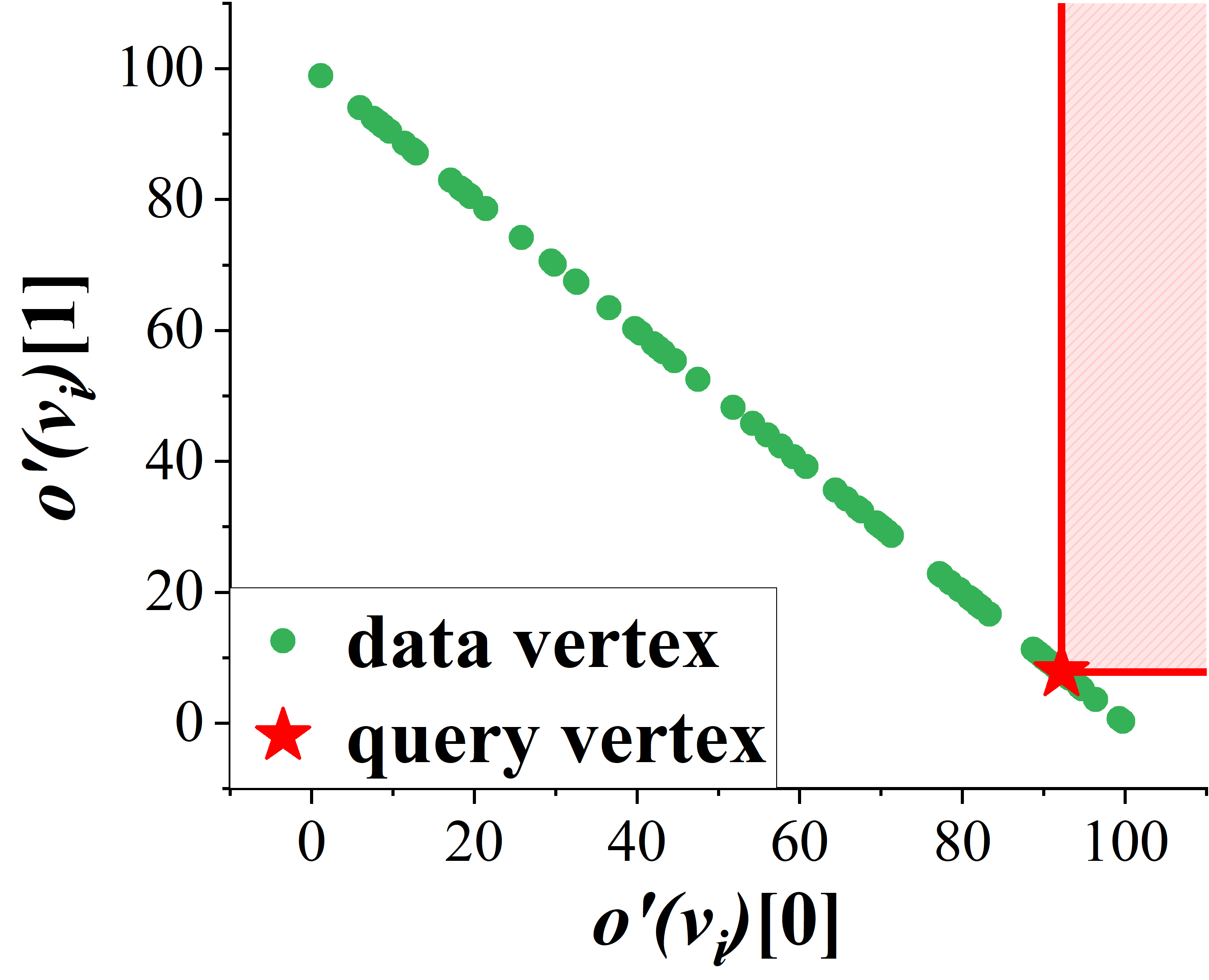}
        \label{subfig:o_prime_x}
    }
    \qquad
    \subfigure[{$\alpha y_i+\beta(z_i[2], z_i[3])^T$}]{
        \includegraphics[height=3cm]{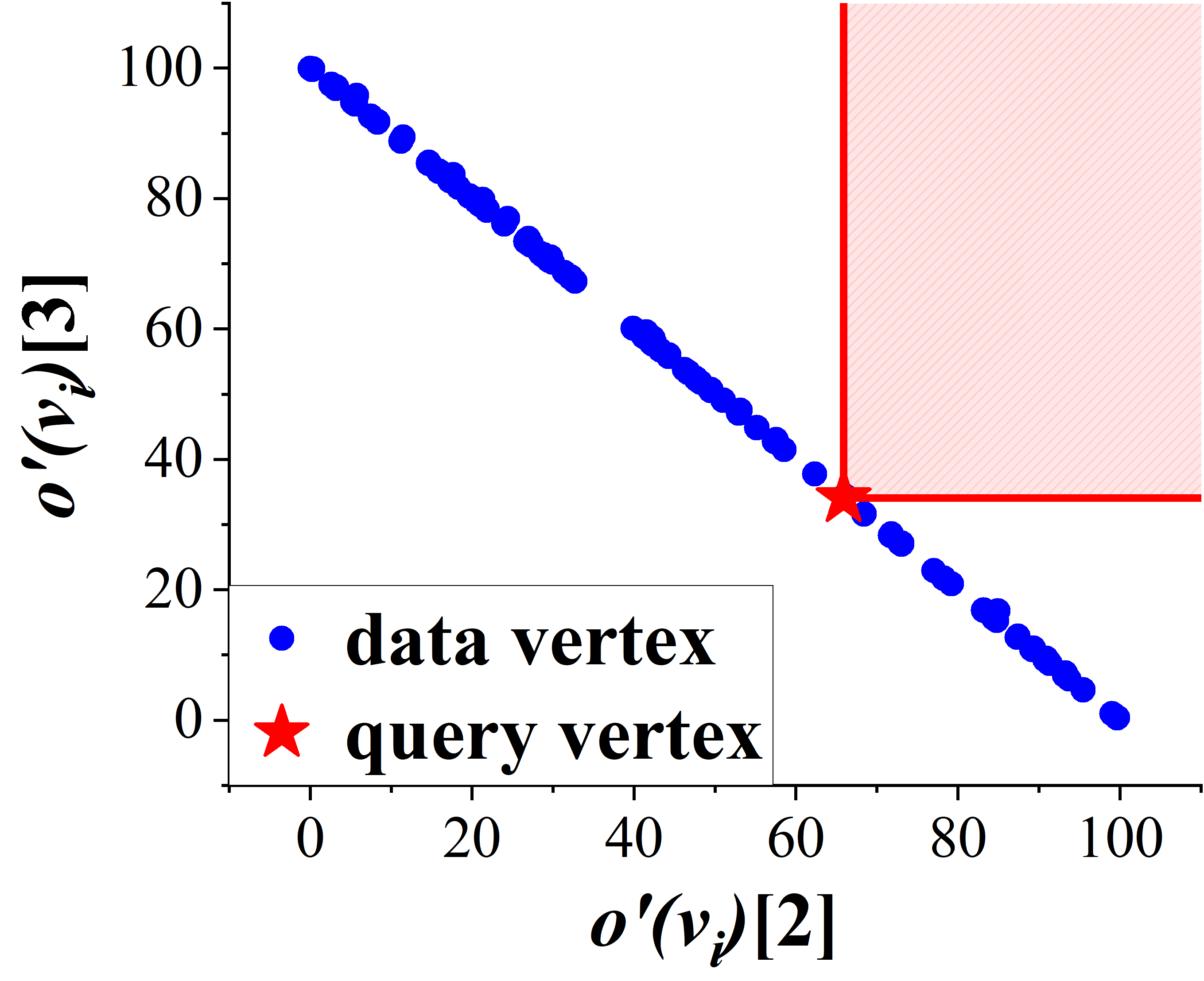}
        \label{subfig:o_prime_y}        
    }
\caption{An example of $o'(v_i)$ distributions with $L_1$-norm diagonal-line base vector $z_i$ ($\alpha=0.01$ and $\beta=100$).}
\label{fig:node_embedding_2}
\end{figure}

In order to further enhance the pruning power, our goal is to update our vertex dominance embedding vector from $o(v_i)$ to $o'(v_i)$, such that the number of candidate vertices dominated by $o(q_i)$ is reduced (i.e., as small as possible). Specifically, we propose to use an optimized embedding $o'(v_i)$, given by:

\begin{equation}
    o'(v_i)=\alpha (x_i||y_i)+\beta z_i,
    \label{eq:new_vertex_embedding}
\end{equation}

where $\alpha$ and $\beta$ are positive constants ($\alpha \ll \beta$), and $z_i=f_2(l(v_i))$ is a random \textit{base vector} of size $(2d)$, generated by a pseudo-random function $f_2(\cdot)$ with $l(v_i)$ as the seed. Please refer to Figure \ref{fig:randomized_method} on how to compute this optimized vertex dominance embedding.

Note that, since the base vector $z_i$ is generated only based on the (seed) label $l(v_i)$ of center vertex $v_i$, different star subgraphs with the same center vertex label $l(v_i)$ will result in the same base vector $z_i$. Therefore, the new (optimized) vertex embedding vector $o'(v_i)$ in Eq.~(\ref{eq:new_vertex_embedding}) still follows the dominance relationship, that is, $o'(q_i) \preceq o'(v_i)$ holds, if a query star pattern $s_{q_i}$ is a subgraph of unit star subgraph $g_{v_i}$ (i.e., $s_{q_i} \subseteq g_{v_i}$ and $l(q_i) \equiv l(v_i)$).

\begin{lemma}
(The Dominance Property of the Optimized Vertex Dominance Embeddings) 
Given a unit star subgraph $g_{v_i}$ centered at vertex $v_i$ and any of its star substructures $s_{v_i}$ (i.e., $s_{v_i}\subseteq g_{v_i}$), their optimized vertex 
dominance embeddings (via base vector $z_i$) satisfy the condition that: $o'(s_{v_i}) \preceq o'(g_{v_i})$ (including $o'(s_{v_i})=o'(g_{v_i})$) in the embedding space. 
\end{lemma}

\begin{proof}
Given a unit star subgraph $g_{v_i}$ centered at vertex $v_i$ and any of its star substructures $s_{v_i}$ (i.e., $s_{v_i}\subseteq g_{v_i}$), since they have the same center vertex $v_i$ (with the same vertex label), their base vectors have the same values, i.e., $z_i(s_{v_i})=z_i(g_{v_i})$, or equivalently $\beta z_i(s_{v_i})=\beta z_i(g_{v_i})$ (for $\beta >0$).

Due to the property of vertex dominance embeddings (as given in Lemma \ref{lemma:embedding_property}), we have $o(s_{v_i}) \preceq o(g_{v_i})$, or equivalently $\alpha o(s_{v_i}) \preceq \alpha o(g_{v_i})$ (for $\alpha>0$). 

Therefore, we can derive that $\alpha o(s_{v_i}) + \beta z_i(s_{v_i}) \preceq \alpha o(g_{v_i}) + \beta z_i(g_{v_i})$, or equivalently $o'(s_{v_i}) \preceq o'(g_{v_i})$ (including $o'(s_{v_i})=o'(g_{v_i})$), which completes the proof.
\end{proof}

\begin{figure}[t]
    \centering
    \includegraphics[width=0.85\textwidth]{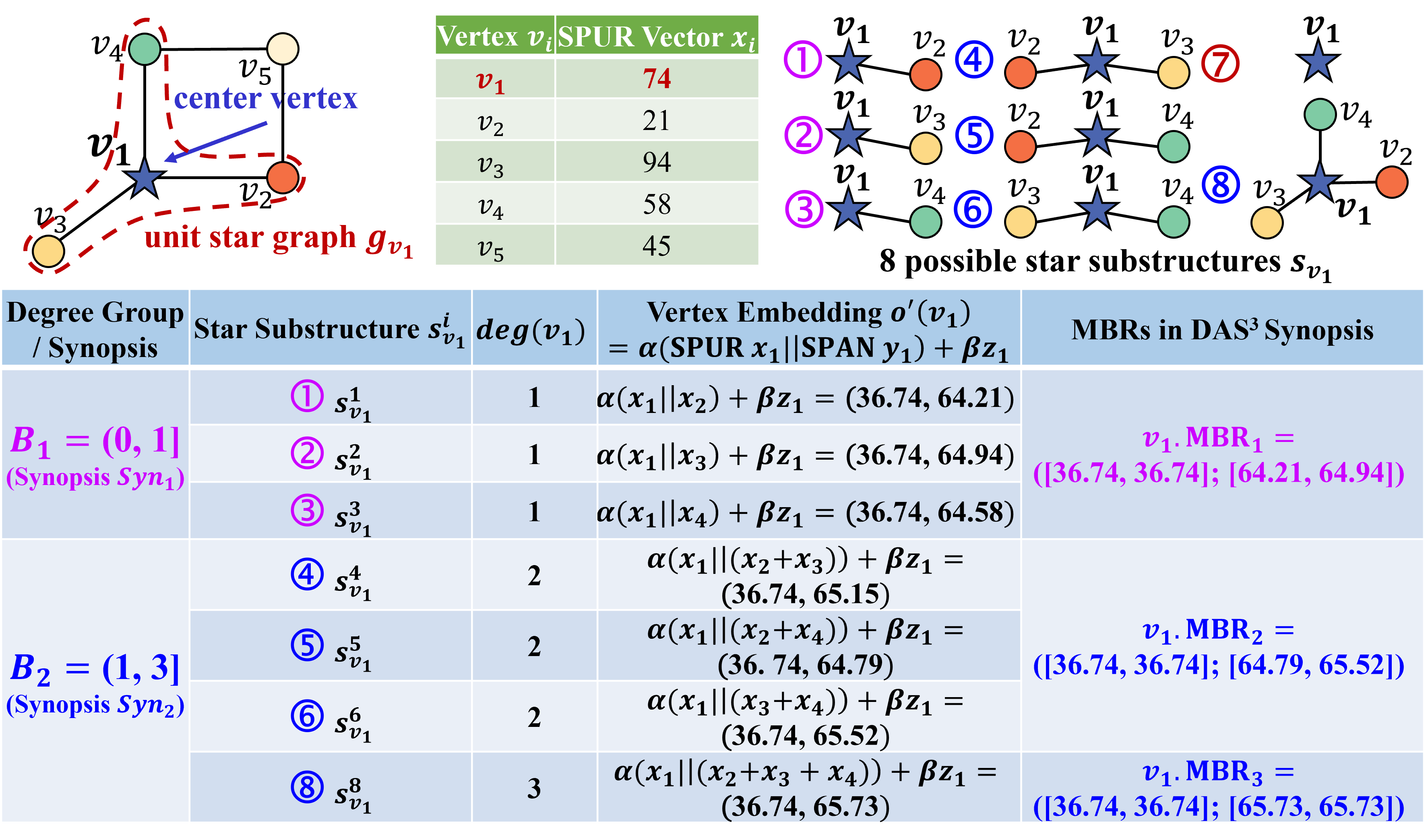}
    \caption{An example of degree grouping and degree-aware star substructure synopses ($\alpha=0.01$, $\beta=100$, and $z_1=(0.36,0.64)$).}
    \label{fig:degree_group_example}
\end{figure}

\noindent {\bf Discussions on How to Design a Base Vector, \bm{$z_i$}: } To improve the pruning power, we would like to make the distribution of the updated vertex embeddings in Eq.~(\ref{eq:new_vertex_embedding}) more dispersed along diagonal line (or plane), so that fewer false alarms (dominated by $o'(q_i)$) can be obtained during continuous subgraph matching. 

In this paper, we propose to design the base vector $z_i$ as a random vector distributed on the unit diagonal line/hyperplane ($L_1$-norm) in the first quadrant. That is, we can normalize $z_i$ to $\frac{z_i}{||z_i||_1}$, where $||\cdot||_1$ is $L_1$-norm. 
Moreover, since constant $\alpha$ is far smaller than $\beta$, $\alpha (x_i || y_i)$ can be considered as noise added to the scaled base vector $\beta z_i$. Thus, vertex embedding vectors $o'(v_i)$ ($=\alpha (x_i||y_i)+\beta z_i$) in Eq.~(\ref{eq:new_vertex_embedding}) are still distributed close to a diagonal line (or hyperplane) in the first quadrant.

We would like to leave interesting topics of using other base vectors $z_i$ (e.g., $L_2$-norm) as our future work.

Figure \ref{fig:node_embedding_2} shows the distributions of the updated 2D SPUR and SPAN vectors (from $o'(v_i)$), by adding a base vector $z_i$, where $z_i$ is given by $L_1$-norm. Since we have $\alpha \ll \beta$ in Eq.~(\ref{eq:new_vertex_embedding}), after adding $\alpha (x_i || y_i)$ to the scaled base vector $\beta z_i$, the previous embedding vector $o(v_i)$ (as shown in Figure \ref{fig:node_embedding_1}) is re-located close to diagonal lines in Figure \ref{fig:node_embedding_2}. This way, we can reduce the number of vertex false alarms dominated by $o(q_i)$ in both SPUR and SPAN spaces, and in turn enhance the CSM pruning power.

\section{Degree-Aware Star Substructure Synopses}
\label{sec:star_substructure_synopses}

We observed that high-degree center vertices of star substructures tend to have vertex dominance embeddings with low pruning power during CSM processing. Therefore, in this section, we will classify star substructures into different degree groups, and build synopses over their embeddings under degree groups. This way, CSM can be conducted over a synopsis corresponding to a group of lower degrees, which can further improve the pruning power.

\subsection{Vertex Embedding via Degree Grouping}
\noindent{\bf Rationale behind the Degree Grouping:} During the subgraph matching over a dynamic graph $G_t$, those vertices $v_i \in V(G_t)$ with high degrees $deg(v_i)$ are less likely to be pruned in the embedding space. This is because, for high degrees, the SPAN vectors in $o'(v_i)$ tend to have large values on all dimensions. Thus, $o'(v_i)$ also tends to be dominated by query embedding vector $o'(q_i)$, and treated as candidate vertices for the refinement (i.e., $v_i$ cannot be pruned). 

Therefore, to enhance the pruning power for high-degree vertices, in this subsection, we propose a novel and effective \textit{degree grouping} technique, which divides all possible star substructures $s_{v_i}$ of each vertex $v_i$ in $G_D$ (that may match with a query star pattern $s_{q_i}$) into different groups (corresponding to different degree intervals of center vertices $v_i$ in star substructures $s_{v_i}$). Intuitively, during the CSM query processing,  the embedding vector $o'(q_i)$  of query star pattern $s_{q_i}$ only needs to be compared with those of star substructures $s_{v_i}$ with similar degrees in a group, rather than all star substructures (including those with high degrees), which can greatly improve the pruning power.

\noindent{\bf Equi-Frequency Degree Grouping:} One straightforward way to perform the degree grouping is to let each single degree value be a degree group. However, since vertex degrees in real graphs usually follow the power-law distribution \cite{barabasi1999emergence,newman2005power,barabasi2003scale,albert2002statistical,dorogovtsev2003evolution}, only a small fraction of vertices have high degrees. That is, only a few unit star subgraphs $g_{v_i}$ have star substructures $s_{v_i}$ of very high degrees, which is not space-/time-efficient to maintain in the dynamic graph $G_t$. 

Inspired by this, in this paper, we propose to maintain vertex embeddings for star substructures with $m$ degree groups (of different degree intervals). We assume that the degree statistics of the initial graph $G_0$ are similar to those in $G_t$. 

This way, we obtain the \textit{probability mass function} of vertex degrees in $G_0$
is $pmf(\delta)=\frac{freq(deg(v_i)\geq \delta)}{|V(G_0)|}$, where $freq(deg(v_i)\geq \delta)$ is the number of vertices $v_i \in V(G_0)$ whose degrees are greater than or equal to a degree value $\delta$. Our goal is to divide the degree interval, $(0, deg_{max}]$, into $m$ degree groups $\mathcal{B}=\{B_1, B_2, \cdots, B_m\}$, such that each \textit{degree bucket} $B_j$ ($=(\delta_{j-1}, \delta_j]$) contains the same (or similar) mass in the distribution of $pmf(\delta)$ (for $\delta \in B_j$).

\begin{example}
Figure~\ref{fig:degree_group_example} illustrates the degree grouping process for star substructures with center vertex $v_1$, where 7 possible star substructures $s_{v_1}^i$ (excluding Case \ding{178}, the isolated vertex $v_1$) are partitioned into two groups with 2 degree intervals, $B_1 = (0, 1]$ and $B_2 = (1, 3]$, respectively. For example, in Case \ding{172}, center vertex $v_1$ in star substructure $s_{v_1}^1$ has degree 1, which thus falls into the first degree bucket $B_1=(0, 1]$. 

A query star pattern $s_{q_1}$ with center vertex $q_1$ of degree 1 will be compared with star substructures in degree bucket $B_1=(0, 1]$ only. In contrast, for a query star pattern $s_{q_1}$ with degree equal to 2 or 3, we only need to search the second degree bucket $B_2$ (with interval $(1, 3]$).
\label{example4}
\end{example}

\subsection{The Construction of \underline{D}egree-\underline{A}ware \underline{S}tar \underline{S}ubstructure \underline{S}ynopses (DAS$^3$)}
\label{subsec:das3}
Based on the degree grouping $\mathcal{B}=\{B_1, B_2, \cdots, B_m\}$, we will construct $m$ \textit{degree-aware star substructure synopses} (DAS$^3$), $Syn_j$ (for $1\leq j\leq m$), which are essentially $m$ grid files \cite{guting1994introduction}, respectively. 
In each grid synopsis $Syn_j$, we store data vertices in different cells according to their (aggregated) vertex embeddings of star substructures in the corresponding degree group $B_j$.

\noindent {\bf Data Structure of DAS$^3$ Synopses:} Specifically, each DAS$^3$ synopsis $Syn_j$ partitions the embedding data space into cells, $C$, of equal size. Each grid cell $C$ is associated with a list, $C.list$, of vertices $v_i$. Each vertex $v_i \in C.list$ contains the following aggregates:

\begin{itemize}
    \item an \textit{embedding upper bound} vector, $v_i.UB_{\delta}$, for all star substructures $s_{v_i}$ with center vertex degrees $\delta\in (\delta_{j-1}, \delta_j]$, and;
    \item a list of MBRs, $v_i.MBR_\delta$ (for $\delta \in (\delta_{j-1}, \delta_j]$), that minimally bound all embedding vectors $o'(v_i)$ of star substructures $s_{v_i}$ with degrees equal to $\delta$.
\end{itemize}

To facilitate the access of cells in DAS$^3$ synopsis $Syn_j$ efficiently, we sort all cells in descending order of their keys $C.key$. Here, for each cell $C$, we use $C.UB$ to denote its maximal corner point (i.e., taking the maximum value of the cell interval on each dimension). The key, $C.key$, is defined as $||C.UB||_2$, where $||Z||_2=\sum_{i=1}^{d}Z[i]^2$ for a $d$-dimensional vector $Z$. 
Note that, if a query vertex embedding $o(q_i)$ dominates $C.UB$, then the key $||o(q_i)||_2$ of query vertex embedding $o(q_i)$ must be smaller than or equal to the cell key $C.key$. Thus, with such a sorted list of cells, we can efficiently access all cells that are fully or partially dominated by a given query vertex embedding $o(q_i)$.

\noindent {\bf DAS$^3$ Synopsis Construction:} Each DAS$^3$ synopsis $Syn_j$ corresponds to a degree group $B_j$ ($1\leq j\leq m$) with degree interval $(\delta_{j-1}, \delta_j]$. 
We build the DAS$^3$ synopsis $Syn_j$ as follows. For each vertex $v_i$ with degree $deg(v_i) >\delta_{j-1}$ in $G_t$ and for each degree value $\delta$ ($\in (\delta_{j-1}, \delta_j]$), we obtain all its star substructures, $s_{v_i}$, with degrees of center vertex $v_i$ equal to $\delta$, compute their embedding vectors $o'(s_{v_i})$, and use a \textit{minimum bounding rectangle} (MBR), $v_i.MBR_\delta$, to minimally bound these embedding vectors.

Let $v_i.UB_{\delta}$ be the maximum corner point of the MBR $v_i.MBR_{\delta}$ (by taking the upper bound of the MBR on each dimension). Then, for vertex $v_i$, we insert vertex $v_i$ into the vertex list, $C.list$, of a cell $C \in \mathcal{I}_j$, into which corner point $v_i.UB_{ub\_\delta}$ falls, where degree upper bound $ub\_\delta = \min\{deg(v_i), \delta_j\}$. Intuitively, this corner point $v_i.UB_{ub\_\delta}$ is the one most likely to be dominated by a query vertex embedding $o'(q_i)$ in this degree group $B_j$. 

Moreover, we also associate the corner point $v_i.UB_{ub\_\delta}$ with the list of MBRs, $v_i.MBR_{\delta}$ (for each $\delta \in (\delta_{j-1}, ub\_\delta]$), in synopsis $Syn_j$, which can be used for further pruning for a specific degree in interval $(\delta_{j-1}, \delta_j]$.

\begin{figure}[t]
    \centering
    \includegraphics[width=0.55\textwidth]{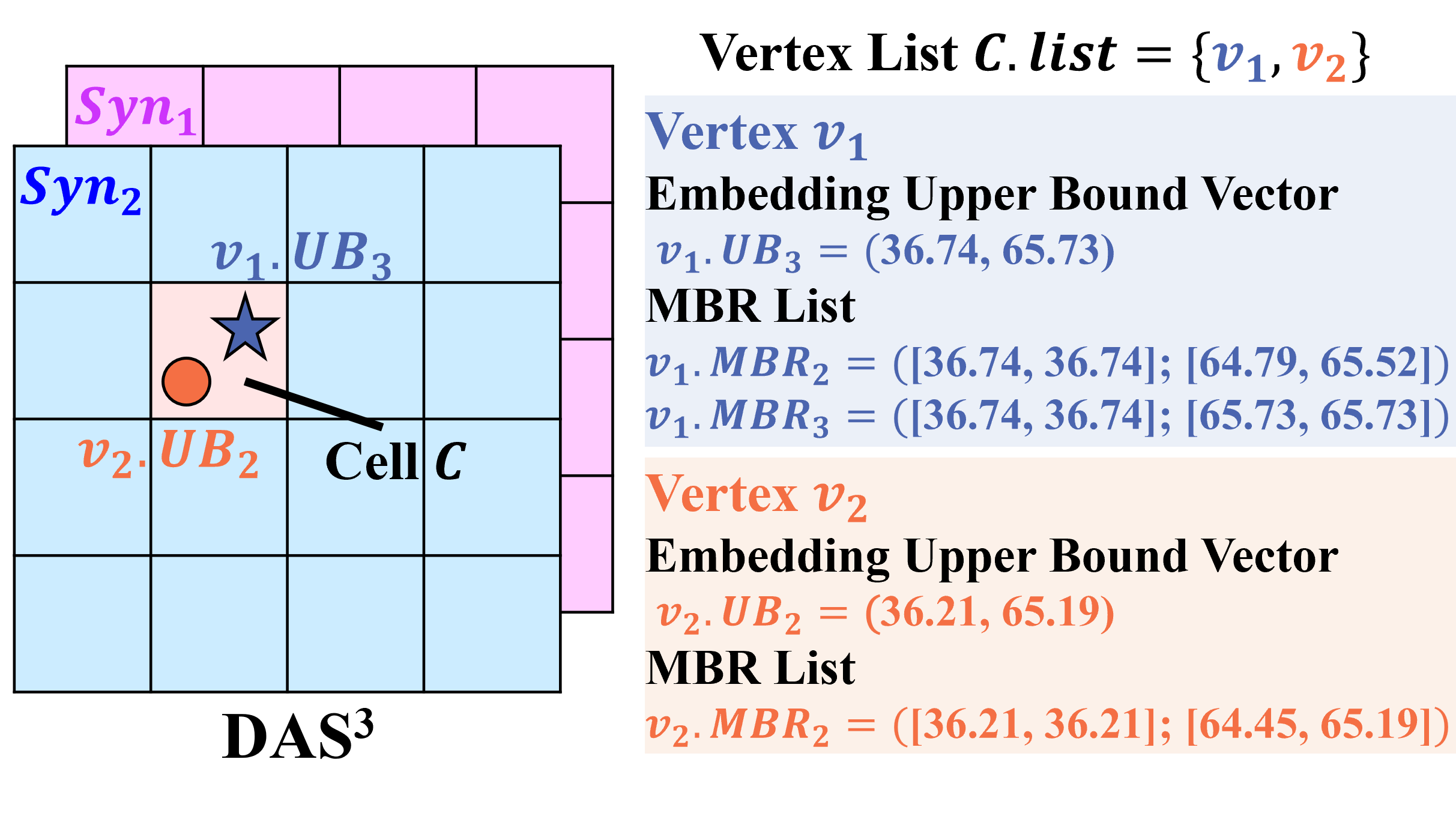}
    \caption{An example of DAS$^3$ data structure ($\alpha=0.01$, $\beta=100$, and $z_1=(0.36,0.64)$).}
    \label{fig:index_example}
\end{figure}

\noindent {\bf Discussions on MBR Computations and Incremental Maintenance:} 
Note that, the MBRs, $v_i.MBR_{\delta}$, mentioned above minimally bound embedding vectors $o'(v_i)$ for all star substructures $s_{v_i}$. 
Since all possible star structures of $v_i$ share the same SPUR vector $x_i$ in $o'(v_i)$, we can directly assign the lower and upper bounds of the $k$-th dimension in the MBR (i.e., $v_i.MBR_{\delta} [2k]$ and $v_i.MBR_{\delta} [2k+1]$, for $0\leq k<d$) as follows:

\begin{equation}
    \hspace{-1ex}
    v_i.MBR_{\delta} [2k]=v_i.MBR_{\delta}[2k+1]=o'(v_i)[k] = \alpha x_i[k]+\beta z_i[k].\label{eq:MBR_SPUR}
\end{equation}

For the SPAN vector part in the MBR, since there are an exponential number of possible star structures (i.e., $2^{deg(v_i)}$), it is not efficient to enumerate all star substructures, compute their embedding vectors, and obtain the MBRs. Therefore, in this paper, we propose a sorting-based method to obtain $v_i.MBR_\delta$ without enumerating all star substructures with degree $\delta$. 

Specifically, for each vertex $v_i$, we maintain $d$ sorted lists, each of which, $v_i.spur\_list_k$ (for $0\leq k < d$), contains the $k$-th elements, $x_j[k]$, of SPUR vectors $x_j$ (obtained from $v_i$'s 1-hop neighbors $v_j\in \mathcal{N}(v_i)$) in ascending order. 

For a degree $\delta$ in the degree group $B_j = (\delta_{j-1},\delta_j]$, we obtain lower and upper bounds of MBR $v_i.MBR_{\delta}$, that is, $v_i.MBR_{\delta} [2k]$ and $v_i.MBR_{\delta}[2k+1]$, respectively, as follows:

\begin{eqnarray}
    v_i.MBR_{\delta} [d+2k]&=&\sum_{deg \in [1,\delta]}v_i.spur\_list_k[deg],\label{eq:MBR1}\\
    v_i.MBR_{\delta}[d+2k+1]&=&\hspace{-5ex}\sum_{deg\in [deg(v_i)-\delta+1,deg(v_i)]}\hspace{-8ex}v_i.spur\_list_k[deg],
    \label{eq:MBR2}
\end{eqnarray}
where $deg(v_i)$ is the degree of vertex $v_i$ in $G_t$ (or the size of the sorted list $v_i.spur\_list_k$).

Intuitively, Eqs.~(\ref{eq:MBR1}) and (\ref{eq:MBR2}) give the lower/upper bounds of the MBR $v_i.MBR_{\delta}$ on the $(d+k)$-th dimension, by summing up the first $\delta$ (smallest) and the last $\delta$ (largest) values, respectively, in the sorted list $v_i.spur\_list_k$. 

\underline{\it The Time Complexity of the MBR Maintenance:} The time complexity of sorting the $k$-th dimension of SPUR vectors (for $1\leq k\leq d$) is given by $O(d\cdot deg(v_i)\cdot log (deg(v_i)))$, which is much less than that of enumerating all star substructures (i.e., $O(d\cdot 2^{deg(v_i)})$).

For the maintenance of a sorted list $v_i.spur\_list_k$ in dynamic graph $G_t$, upon edge insertion or deletion (or insertion/deletion of a 1-hop neighbor $v_j$), we can use a binary search to locate where we need to insert into (or remove from) $v_i.spur\_list_k$  elements $y_j[k]$ in the SPUR vector $y_j$. Therefore, the time complexity of incrementally maintaining $d$ sorted lists $v_i.spur\_list_k$ is given by $O(d\cdot log_2 (deg(v_i)))$.

\underline{\it The Space Complexity of DAS$^3$ Synopses:}
Given a data graph $G_D$ and the dimension, $d$, of SPUR/SPAN vector, the space complexity of vertex embeddings in Section \ref{sec:vertex_dominance_embeddings} is $O(|V(G_D)|\cdot 2d)$. In DAS$^3$, each vertex $v_i$ is associated with an embedding upper bound vector $v_i.UB_{\delta}$ and a list of MBRs, $v_i.MBR_\delta$. Moreover, to support incremental updates to vertex embeddings, each vertex maintains $d$ sorted lists $v_i.spur\_list_k$ (for $0 \leq k < d$). 
Thus, the space complexity of DAS$^3$ synopses is given by $O(|V(G_D)|\cdot 2d)+\sum_{v_i \in V(G_D)}(deg(v_i)\cdot4d+deg(v_i)\cdot d)$.
Since $\sum_{v_i \in V(G_D)} deg(v_i) = 2|E(G_D)|$, the overall space complexity of our DIVINE approach is given by $O(d \cdot (|V(G_D)| + |E(G_D)|))$, which is linear to the graph size.

\begin{example} 
(Continued with Example \ref{example4}) In the example of Figure \ref{fig:degree_group_example}, we consider two degree groups $B_1=(0,1]$ and $B_2=(1,3]$. As shown in Figure~\ref{fig:index_example}, we construct 2 DAS$^3$ synopses (grids of cells with equal size), $Syn_1$ and $Syn_2$, for embeddings $o'(v_1)$ of star substructures $s_{v_1}^i$ falling into degree buckets $B_1$ and $B_2$, respectively. Specifically, in $Syn_2$, we store data vertices $v_1$ and $v_2$ in the cell $C$, which are associated with embedding upper bound vectors, $v_1.UB_3$ and $v_2.UB_2$, and MBR lists, $\{v_1.MBR_2$, $v_1.MBR_3\}$ and $\{v_2.MBR_2\}$, respectively. Here, each MBR $v_i.MBR_\delta$ (e.g., $v_1.MBR_2$ in Figure \ref{fig:degree_group_example}) minimally bounds embedding vectors $o'(v_i)$ of star substructures with degree $\delta$ (i.e., $s_{v_1}^4$, $s_{v_1}^5$, and $s_{v_1}^6$ with degree 2), and each embedding upper bound vector $v_i.UB_{ub\_\delta}$ (e.g., $v_1.UB_3$) is the maximum corner point of MBRs in the MBR list (i.e., maximum corner point of MBRs, $v_1.MBR_2$ and $v_1.MBR_3$).
\label{example5}
\end{example}

\section{CSM Query Processing}
\label{sec:query_processing}
In this section, we will illustrate query processing algorithms based on vertex dominance embeddings in DAS$^3$ synopses to efficiently answer CSM queries.

\subsection{Synopsis Pruning Strategies}
\label{subsec:pruning}
Given a query vertex $q_i$ in the query graph $q$ and SAS$^3$ synopses $Syn_j$, we would like to obtain candidate vertices $v_i$ from synopsis $Syn_j$ which may match with the query vertex $q_i$, where $deg(q_i)$ falls into the degree group $B_j = (\delta_{j-1}, \delta_j]$. 

In this subsection, we present two effective pruning methods over index $Syn_j$, named \textit{embedding dominance pruning} and \textit{MBR range pruning}, which are used to rule out false alarms of cells/vertices.

\noindent{\bf Embedding Dominance Pruning:} We first provide the embedding dominance pruning method, which filters out those cells/vertices in synopsis $Syn_j$ that are not dominated by a given query embedding vector $o'(q_i)$. 

\begin{lemma}
{\bf (Embedding Dominance Pruning)} Given a query embedding vector $o'(q_i)$ of the query vertex $q_i$, any cell $C$ or vertex $v_i$ can be safely pruned, if $o'(q_i)$ does not dominate any portion of cell $C$ or embedding upper bound vector $v_i.UB_\delta$.
\label{lemma:dominance_pruning}
\end{lemma}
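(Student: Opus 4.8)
The plan is to establish that the rule is a \emph{no-false-dismissal} filter by contraposition: I will show that any data vertex which genuinely matches $q_i$ forces $o'(q_i)$ to dominate the corresponding stored upper-bound corner, so that discarding only the complementary case cannot lose a true answer. Throughout I read ``$o'(q_i)$ dominates $p$'' as $o'(q_i)\preceq p$ (coordinate-wise $\le$), consistent with the vertex-dominance convention used earlier. The first step is to fix the matching-implies-dominance direction. If $q_i$ matches $v_i$, then the query unit star $s_{q_i}$ is isomorphic to some star substructure $s_{v_i}\subseteq g_{v_i}$ with $l(q_i)=l(v_i)$; equal center labels give identical SPUR and base vectors, and a label-preserving isomorphism makes the neighbor SPUR multisets (hence the SPAN vectors) identical, so $o'(q_i)=o'(s_{v_i})$, which by the dominance property (Lemma~\ref{lemma:embedding_property}) also gives $o'(q_i)\preceq o'(g_{v_i})$. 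Crucially, this matched substructure has degree exactly $deg(q_i)$, which lies in the degree group $(\delta_{j-1},\delta_j]$ of $Syn_j$ and satisfies $deg(q_i)\le ub\_\delta=\min\{deg(v_i),\delta_j\}$, since $v_i$ must carry at least $deg(q_i)$ neighbors.

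Next I would prove the central bounding fact: the stored representative corner $v_i.UB_{ub\_\delta}$ coordinate-wise dominates $o'(s_{v_i})$ for \emph{every} substructure of $v_i$ whose degree lies in $(\delta_{j-1},ub\_\delta]$. This rests on two observations: (i) each $v_i.MBR_\delta$ minimally bounds all degree-$\delta$ embeddings, so its maximal corner satisfies $v_i.UB_\delta\succeq o'(s_{v_i})$ for every such $s_{v_i}$; and (ii) the per-dimension upper bounds given by Eq.~(\ref{eq:MBR2}) are sums of the $\delta$ largest SPUR entries, which are nondecreasing in $\delta$ by positivity of SPUR values, so $v_i.UB_\delta\preceq v_i.UB_{ub\_\delta}$ whenever $\delta\le ub\_\delta$. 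Chaining (i) and (ii) with the previous step yields $o'(q_i)=o'(s_{v_i})\preceq v_i.UB_{ub\_\delta}$ for the matched substructure.

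Taking the contrapositive delivers the vertex rule directly: if $o'(q_i)$ does not dominate $v_i.UB_\delta$, then no admissible substructure of $v_i$ can realize a match, so $v_i$ is safely pruned. I would then lift this to cells. Every vertex in $C.list$ is placed so that its representative corner falls inside $C$, hence $v_i.UB_{ub\_\delta}\preceq C.UB$; and since any point $p\in C$ obeys $p\preceq C.UB$, the phrase ``$o'(q_i)$ dominates some portion of $C$'' is equivalent to $o'(q_i)\preceq C.UB$. Consequently, if $o'(q_i)$ dominates no portion of $C$, some coordinate of $o'(q_i)$ strictly exceeds that of $C.UB$, and a fortiori that of every $v_i.UB_{ub\_\delta}$ in the cell, so no vertex of $C$ can match $q_i$ and the entire cell is pruned.

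The main obstacle I expect is the bounding step (ii): arguing that a single stored corner dominates the embeddings of all the exponentially many relevant substructures \emph{without enumerating them}. This is exactly where the monotonicity of the prefix/suffix sums in Eqs.~(\ref{eq:MBR1})--(\ref{eq:MBR2}) and the positivity of SPUR entries are indispensable; once that monotonicity is in hand, the cell-level lifting and the contraposition are routine.
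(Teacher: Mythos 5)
Your proof is correct and follows essentially the same route as the paper's: contrapose the fact that a genuine match of $q_i$ to $v_i$ forces $o'(q_i)$ to dominate the stored upper-bound corner, and hence the maximal corner $C.UB$ of the containing cell. The only difference is one of rigor — you explicitly establish that the single stored corner $v_i.UB_{ub\_\delta}$ bounds the embeddings of \emph{all} substructures in the degree group, via the monotonicity in $\delta$ of the suffix sums in Eq.~(\ref{eq:MBR2}) and the positivity of SPUR entries, a step the paper's proof asserts without argument.
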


\begin{proof}
    If a query vertex $q_i$ in query graph $q$ matches with a data vertex $v_i$ in a subgraph of the data graph, then their vertex dominance embeddings must hold that $o'(q_i)\preceq o'(v_i)$. Therefore, if $o'(q_i)$ does not dominate the cell $C$, then  $o'(q_i)$ cannot dominate any vertex $o'(v_i)$ inside cell $C$, and all vertices in cell $C$ (or cell $C$) can be safely pruned.
    
    Moreover, if $o'(q_i)$ does not dominate embedding upper bound vector $v_i.UB_\delta$, i.e., $v_i.UB_\delta\notin DR(o'(q_i))$, then 
    $o'(q_i)$ does not dominate any star substructure $s_{v_i}$ with center vertex $v_i$ and the corresponding degree group. In other words, $q_i$ does not match $v_i$. Thus, vertex $v_i$ can be safely pruned.
\end{proof}

\noindent{\bf MBR Range Pruning:} Next, we give an effective \textit{MBR range pruning} method, which further utilizes the MBR ranges, $v_i.MBR_{\delta}$, of star substructure embeddings and prunes vertices $v_i$ that do not fall into the MBRs.

\begin{lemma}
{\bf (MBR Range Pruning)}
Given a query embedding $o'(q_i)$ of the query vertex $q_i$ and a vertex $v_i$ in a cell of DAS$^3$ synopsis $Syn_j$ (for $deg(q_i) \in (\delta_{j-1}, \delta_j]$), vertex $v_i$ can be safely pruned, if it holds that $o'(q_i)\notin v_i.MBR_{deg(q_i)}$. 
\label{lemma:mbr_range_pruning}
\end{lemma}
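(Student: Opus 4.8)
The plan is to prove the contrapositive: I will show that whenever the query vertex $q_i$ genuinely matches a data vertex $v_i$, the query embedding $o'(q_i)$ must fall inside the rectangle $v_i.MBR_{deg(q_i)}$, so that the stated rule (prune when $o'(q_i)\notin v_i.MBR_{deg(q_i)}$) follows by negation. First I would unfold what ``$q_i$ matches $v_i$'' means at the level of star patterns. By the definition of subgraph matching via isomorphism, a match mapping $M$ sends $q_i$ to $v_i$ and sends the $deg(q_i)$ neighbors of $q_i$ in $q$ to $deg(q_i)$ distinct neighbors of $v_i$ inside a subgraph of $G_t$, in a label-preserving manner. Collecting exactly these image neighbors yields a star substructure $s_{v_i}^{*} \subseteq g_{v_i}$ whose center is $v_i$, whose degree is exactly $deg(q_i)$, and whose multiset of neighbor labels coincides with that of the query unit star centered at $q_i$. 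This structural object is the crux; everything after it is bookkeeping on embeddings.

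Next I would argue that the two embeddings are in fact \emph{equal}, not merely related by one-sided dominance. Since $l(q_i)=l(v_i)$, the center-seeded SPUR vector $x_i$ and the base vector $z_i=f_2(l(\cdot))$ agree for $q_i$ and $s_{v_i}^{*}$; since the neighbor-label multisets agree, the SPAN vectors (sums of neighbors' SPUR vectors, Eq.~(\ref{eq:SPAN})) agree as well. Substituting into the optimized embedding of Eq.~(\ref{eq:new_vertex_embedding}) then gives $o'(q_i)=o'(s_{v_i}^{*})$. This equality is sharper than the dominance relation used in Lemma~\ref{lemma:dominance_pruning}, and it is precisely what legitimizes a two-sided (range) test rather than an upper-bound-only test.

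Finally I would invoke the MBR construction. By Eqs.~(\ref{eq:MBR1}) and~(\ref{eq:MBR2}), $v_i.MBR_{deg(q_i)}$ bounds, coordinate by coordinate, the embeddings of \emph{all} degree-$deg(q_i)$ star substructures of $v_i$: its lower and upper corners are the per-dimension sums of the smallest and largest $deg(q_i)$ neighbor values in the sorted lists $v_i.spur\_list_k$, and any degree-$deg(q_i)$ substructure picks some $deg(q_i)$ of those values, hence lands between these bounds. Since $s_{v_i}^{*}$ is itself a degree-$deg(q_i)$ substructure of $v_i$, we get $o'(s_{v_i}^{*})\in v_i.MBR_{deg(q_i)}$, and therefore $o'(q_i)\in v_i.MBR_{deg(q_i)}$. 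Taking the contrapositive, $o'(q_i)\notin v_i.MBR_{deg(q_i)}$ rules out any match of $q_i$ with $v_i$, so $v_i$ is safely pruned; the hypothesis $deg(q_i)\in(\delta_{j-1},\delta_j]$ merely ensures that this is the MBR actually stored in $Syn_j$.

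I expect the main obstacle to be the second step: justifying the \emph{equality} $o'(q_i)=o'(s_{v_i}^{*})$ at the matched degree and confirming that the matched substructure is among those bounded by $v_i.MBR_{deg(q_i)}$. Concretely, one must verify that restricting the full unit star $g_{v_i}$ to exactly the $deg(q_i)$ matched neighbors produces a degree-$deg(q_i)$ substructure (so it is indexed by the right MBR rather than a different-degree one), and that the base-vector term $\beta z_i$ cancels identically on both sides because it depends only on the shared center label. Once the embedding equality and the correct degree bucket are pinned down, the MBR containment is immediate from its defining sums.
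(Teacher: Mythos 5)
Your proposal is correct and follows essentially the same route as the paper's proof: both argue that a genuine match of $q_i$ with $v_i$ forces $o'(q_i)$ to coincide with the embedding of some degree-$deg(q_i)$ star substructure of $v_i$, which by construction lies inside $v_i.MBR_{deg(q_i)}$, and then conclude by contraposition. You simply make explicit two steps the paper leaves implicit — the label-driven equality $o'(q_i)=o'(s_{v_i}^{*})$ and the per-dimension containment from the sorted-list sums in Eqs.~(\ref{eq:MBR1})--(\ref{eq:MBR2}) — which is a sound elaboration rather than a different argument.
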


\begin{proof}
    The MBR $v_i.MBR_{deg(q_i)}$ minimally bounds vertex embeddings for all possible star substructures $s_{v_i}$ with center vertex $v_i$ and degree $deg(q_i)$. Thus, if query vertex $q_i \in V(q)$ and its 1-hop neighbors match with some star substructures with the same degree $deg(q_i)$, then its query embedding vector $o'(q_i)$ must fall into this MBR $v_i.MBR_{deg(q_i)}$. Therefore, if this condition does not hold, i.e., $o'(q_i)\notin v_i.MBR_{deg(q_i)}$, then $q_i$ does not match with $v_i$, and $v_i$ can be safely pruned, which completes the proof.   
\end{proof}

\begin{algorithm}[t]
\caption{\bf Initial CSM Answer Set Generation}
\label{alg2}
\KwIn{
    i) a snapshot graph $G_t$ at current timestamp $t$; 
    ii) a DAS$^3$ synopsis $Syn_j$ on $G_t$, and;
    iii) a query graph $q$\\
}
\KwOut{
    a set, $A(q,t)$, of initial subgraph matching results in $G_t$
}

compute vertex dominance embeddings $o'(q_i)$ for all $q_i\in V(q)$\\
\tcp{traverse synopses $Syn_j$ to obtain candidate vertices}
\For{each query vertex $q_i\in V(q)$}{
    obtain a synopsis $Syn_j$ such that $deg(q_i) \in (\delta_{j-1}, \delta_j]$\\
    \For{each cell $C\in Syn_j$ with $C.key \geq ||o'(q_i)||_2$}{
        \For{each candidate vertex $v_i\in C.list$}{
            \If{$o'(q_i)\preceq v_i.UB_\delta$ }{
                \If{$o'(q_i)\in v_i.MBR_{deg(q_i)}$ }{
                    $q_i.cand\_set\leftarrow q_i.cand\_set\cup \{v_i\}$
                    \quad \tcp{Lemmas~\ref{lemma:dominance_pruning} and ~\ref{lemma:mbr_range_pruning}}
                }
            }
        }
    }
}
\tcp{generate a query plan $Q$}

select a query vertex $q_i$ with the smallest candidate set size $|q_i.cand\_set|$ as the first query vertex in an ordered list $Q$\\

iteratively select a neighbor $q_k$ of $q_i\in Q$ with minimum $|q_k.cand\_set|$ and append $q_k$ to $Q$, until all query vertices in $V(q)$ have been added to $Q$\\ 

\tcp{assemble and refine candidate subgraphs}
$A(q,t)\leftarrow \emptyset$\\
invoke {\sf Refinement}$(q,Q,G_t,A(q,t),\emptyset,0)$ to obtain actual CSM query answers in $A(q, t)$\\
\Return $A(q,t)$;\\
\end{algorithm}

\subsection{Initial CSM Answer Set Generation}
\label{subsec:sdsm}
Algorithm~\ref{alg2} illustrates the algorithm of generating the initial subgraph matching answer set, $A(q, t)$, over a snapshot $G_t$ of a dynamic subgraph $G_D$, by traversing synopses over vertex dominance embeddings. 
Specifically, given a query graph $q$, we first obtain vertex dominance embeddings $o'(q_i)$ for each query vertex $q_i \in V(q)$ (line 1), and then traverse DAS$^3$ synopses, $Syn_j$, to retrieve vertex candidate sets $q_i.cand\_set$ (lines 2-8). Next, we generate a query plan $Q$, which is an ordered list of connected query vertices that can be used to join their corresponding candidate vertices (lines 9-10). Finally, we assemble candidate vertices into candidate subgraphs, and refine candidate subgraphs by the \textit{left-deep join based method} \cite{kankanamge2017graphflow,shang2008taming} by invoking function {\sf Refinement($\cdot$)}, and return the initial subgraph matching answers in $A(q,t)$ (lines 11-13).

\noindent{\bf The Synopsis Search:}
For each query vertex $q_i\in V(q)$, we first need to find a synopsis $Syn_j$ that matches the degree group of $q_i$ (i.e., $deg(q_i) \in (\delta_{j-1}, \delta_j]$ must hold; line 3).  
As mentioned in Section~\ref{subsec:das3}, cells, $C$, in the DAS$^3$ synopsis $Syn_j$ are sorted in descending order of their keys $C.key$. We thus only need to search for those cells $C \in Syn_j$ satisfying the condition that $C.key\geq ||o'(q_i)||_2$ (line 4). Then, for each candidate vertex $v_i \in C.list$, we apply our proposed embedding dominance and MBR range pruning strategies (as discussed in Lemmas \ref{lemma:dominance_pruning} and \ref{lemma:mbr_range_pruning}, respectively), and obtain the candidate vertex set, $q_i.cand\_set$, for the query vertex $q_i$ (lines 5-8).

\begin{example}
(Continued with Example \ref{example5}) In DAS$^3$ of Figure \ref{fig:index_example}, the query vertex $q_1$ with degree 2 will only be compared with data vertices in synopsis $Syn_2$ (degree bucket $B_2 = (1,3]$). As mentioned above, by checking the dominance condition between embedding upper bound vector $q_1.UB_2$ and the cells' keys $C.key$, we can obtain candidate cells. Then, for each vertex $v_i$ in these cells, we apply our proposed embedding dominance and MBR range pruning strategies (as discussed in Lemmas \ref{lemma:dominance_pruning} and \ref{lemma:mbr_range_pruning}, respectively). Specifically, to implement embedding dominance and MBR range pruning strategies, we compare embedding upper bound vectors $q_1.UB_2$ and $v_i.UB_{\delta}$ (e.g., $v_1.UB_3$ and $v_2.UB_2$), and MBRs with same degree $q_1.MBR_2$ and $v_i.MBR_2$ (e.g., $v_1.MBR_2$ and $v_2.MBR_2$), respectively. After that, we can obtain the $q_1$'s candidate vertex set, $q_1.cand\_set$.
\label{example6}
\end{example}

\noindent{\bf Query Plan Generation:}
After obtaining candidate vertices, we generate a query plan $Q$ for refinement. Intuitively, we would like to reduce the size of intermediate join results. Therefore, we first initialize $Q$ with a query vertex $q_i\in V(q)$ having the minimum candidate set size $|q_i.cand\_set|$ (line 9), and then iteratively add to $Q$ the one, $q_k$, that is connected with those selected query vertices in $Q$ and having the smallest $|q_k.cand\_set|$ value (line 10).

\begin{algorithm}[t]
\caption{{\bf Refinement}}
\label{alg3}
\KwIn{
    i) a query graph $q$;
    ii) a sorted list (query plan) $Q$;
    iii) a snapshot graph $G_t$;
    iv) matching results set $A$;
    v) a sorted list, $M$, of vertices matching with query vertices in $Q$, and;
    vi) the recursion depth $n$\\
}
\KwOut{
    a set, $A$, of subgraph matching results
}

\eIf{$n=|Q|$}{
     $A\leftarrow A\cup\{M\}$\\
     \Return;\\
}
{   
    $S_{cand} =\emptyset$\\
    \eIf{$n=0$}{
        $S_{cand}\leftarrow S_{cand}\cup Q[n].cand\_set$\\
    }
    {
        \For{each candidate vertex $u\in Q[n].cand\_set$ and $u\notin M$}{
            \If{edges $(M[i], u)$ exist in $E(G_t)$ for all edges $(Q[i], Q[n]) \in E(q)$ (for $0\leq i <n$)}{
                $S_{cand}\leftarrow S_{cand}\cup \{u\}$\\
            }
            
        }
    }
}
\For{each candidate vertex $u\in S_{cand}$}{
        $M[n]=u$\\
        {\sf Refinement ($q, Q,G_t,A,M,n+1$)};\\
}
\end{algorithm}

\noindent{\bf Refinement:} 
Algorithm~\ref{alg3} uses a \textit{left-deep join based method} \cite{kankanamge2017graphflow,shang2008taming} to assemble candidate vertices in $q_i.cand\_set$ into candidate subgraphs, and obtain the actual matching subgraph answer set $A(q,t)$ (i.e., lines 11-13 of Algorithm \ref{alg2}). 

Specifically, we maintain a vertex vector $M$ that will store vertices of the subgraph $g$ matching with ordered query vertices in $Q$. To enumerate all matching subgraphs, each time we recursively expand partial matching results by including a new candidate vertex $u$ as $M[n]$ that maps with the $n$-th query vertex $Q[n]$. If we find all vertices in $M$ mapping with $Q$ (i.e., recursion depth $n=|Q|$), then we can add $M$ to the answer set $A$ (lines 1-3). Otherwise, we will find a vertex candidate set $S_{cand}$ from $Q[n].cand\_set$, such that each vertex $u$ in $S_{cand}$ has the same edge connections to that in $M$ as that in $Q$ (lines 5-11). Next, for each vertex $u \in S_{cand}$, we treat it as vertex $M[n]$ matching with $Q[n]$, and recursively call function {\sf Refinement($\cdot$)} with more depth $n+1$ (lines 13-17). 

After the recursive function {\sf Refinement($\cdot$)} has been executed, the answer set $A$ (or $A(q, t)$ in Algorithm \ref{alg3}) will contain a set of actual subgraph matching results.

\noindent {\bf Complexity Analysis:}
In Algorithm~\ref{alg2}, since we need to access each query vertex $q_i$ and its 1-hop neighbors $\mathcal{N}_{v_i}$, the time complexity of computing vertex dominance embeddings $o'(q_i)$ (line 1) is given by $O(|V(q)|+|E(q)|)$.

For the synopsis traversal (lines 2-8), assume that $PP_C$ is the pruning power of the cells' key value and $PP_v$ is the pruning power in each cell's point list. Thus, the synopsis traversal cost is $O(|V(q)|\cdot (1-PP_C)K^{d}\cdot (1-PP_v)|C.list|)$, where $K^d$ is the number of cells and $|C.list|$ is the number of vertices in the cell $C$.

Next, for the greedy-based query plan generation (lines 9-10), we need to iteratively select a neighbor of vertices in the query plan $Q$, which requires $O(|V(q)|^2)$ cost.

Finally, we invoke the recursive function {\sf Refinement($\cdot$)} to find actual subgraph matching results (lines 11-13), with the worst-case time complexity 
$O(\prod_{i=0}^{|V(q)|-1}|q_i.cand\_set|)$.

Therefore, the overall time complexity of Algorithm~\ref{alg2} is given by: $O(|V(q)|+|E(q)|+|V(q)|\cdot (1-PP_C)K^{d}\cdot (1-PP_v)|C.list|+\prod_{i=0}^{|V(q)|-1}|q_i.cand\_set|+|V(q)|^2)$.

\begin{algorithm}[t]
\caption{{\bf The Query Answering Algorithm for CSM Query}}
\label{alg4}
\KwIn{
    i) a dynamic graph $G_{t-1}$ at previous timestamp $(t-1)$;
    ii) a graph update operation $\Delta G_t$;
    iii) a registered query graph $q$, and;
    iv) matching result set $A(q,t-1)$ at timestamp $(t-1)$
}
\KwOut{
    an update matching result set, $A(q,t)$, at timestamp $t$
}

    \tcp{edge insertion operation $\Delta G_t = (e+,t)$}
    \eIf{$\Delta G_t = (e+,t)$ for edge $e = (v_i,v_j)$}{
        $\Delta A(q,t)\leftarrow \emptyset$\\
        \For{each query edge $(q_i,q_j)\in E(q)$ with labels matching with that of $(v_i, v_j)$}{
            \tcp{check whether $(v_i,v_j)$ matches $(q_i,q_j)$}
            \If{$o'(q_i)\leq o'(v_i)$ and $o'(q_j)\leq o'(v_j)$}{
                \If{$o'(q_i)\in v_i.MBR_{deg(q_i)}$ and $o'(q_j)\in v_j.MBR_{deg(q_j)}$}{
                    \quad \tcp{Lemmas~\ref{lemma:dominance_pruning} and ~\ref{lemma:mbr_range_pruning}}
                    generate a query plan $Q'$ by using $q_i$ and $q_j$ as the first two query vertices and the remaining ones from $Q$\\
                    $M'[0]=v_i, M'[1]=v_j$\\
                    
                    invoke {\sf Refinement}$(q, Q', G_t,\Delta A(q,t),M',2)$ to incrementally obtain new matching results in $\Delta A(q,t)$\\
                }
            }
            \nop{
            \tcp{Check whether $(v_i,v_j)$ matches $(q_j,q_i)$}
            \If{$o'(q_j)\leq o'(v_i)$ and $o'(q_i)\leq o'(v_j)$}{
                \If{$o'(q_j)\in v_i.MBR_{deg(q_j)}$ and $o'(q_i)\in v_j.MBR_{deg(q_i)}$}{
                    $\Delta Q\leftarrow Q$\\
                    $\Delta Q[i].cand\_set\leftarrow \{v_j\}$\\
                    $\Delta Q[j].cand\_set\leftarrow \{v_i\}$\\   
                    \quad \tcp{Lemmas~\ref{lemma:dominance_pruning} and ~\ref{lemma:mbr_range_pruning}}
                    invoke {\sf Refinement}$(q, \Delta Q, G_t,\Delta A(q,t),\{\emptyset\},0)$ to obtain incremental matching results in $\Delta A(q,t)$\\
                }
            }
            }
        }
        $A(q,t)\leftarrow A(q,t-1) \cup \Delta A(q,t)$\\
    }    
    {
        \tcp{edge deletion operation $\Delta G_t =(e-, t)$}
        $A(q,t) = A(q,t-1)$\\
        \For{each subgraph answer $g\in A(q,t-1)$}{
            \If{edge $(v_i,v_j)$ exists in $E(g)$}{
                $A(q,t)\leftarrow A(q,t)-\{g\}$\\
            }
        }   
    }
\Return $A(q,t)$\\
\end{algorithm}

\subsection{CSM Query Answering}
\label{subsec:csm_query_answering}
Given a dynamic graph $G_{t-1}$ at timestamp $(t-1)$, a registered query graph $q$, and a graph update operation $\Delta G_t$ at timestamp $t$, 
Algorithm~\ref{alg4} provides a CSM query answering algorithm, which maintains/obtains the latest subgraph matching answer set, $A(q, t)$.
Specifically, if the graph update operation $\Delta G_t$ is an insertion of edge $e= (v_i, v_j)$, we will calculate an incremental update, $\Delta A(q, t)$, of new subgraph matching results (lines 1-11).

\noindent {\bf Edge Insertion:} 
For the insertion of an edge $e= (v_i, v_j)$, we will first find matching edge(s) $(q_i, q_j)$ in the query graph with the same labels (line 3). Then, we will check whether or not edges $(v_i, v_j)$ and $(q_i, q_j)$ match each other, by applying the {\it embedding dominance} and {\it MBR range pruning} strategies (described in Section~\ref{subsec:pruning}; lines 4-5). If the answer is yes, then we will generate a new query plan $Q'$ starting from query vertices $q_i$ and $q_j$ (note: the remaining ones are obtained from $Q$, similar to line 10 of Algorithm \ref{alg2}; line 7), initialize the first two matching vertices $v_i$ and $v_j$ in a sorted list $M'$ (lines 8-9), and invoke the function {\sf Refinement($\cdot$)} with parameters of new query plan $Q'$, incremental answer set $\Delta A(q,t)$, initialized sorted list $M'$, and recursion depth $n=2$ (as two matching vertices have been found; line 10). After that, we can update the answer set $A(q,t)$ with $A(q,t-1) \cup \Delta A(q,t)$ (line 11).

\noindent {\bf Edge Deletion:} When $\Delta G_t$ is an edge deletion operation $(e-, t)$, we can simply remove those existing subgraph matching answers $g \in A(q, t-1)$ which contain the deleted edge $e$, and obtain the updated answer set $A(q, t)$ (lines 12-16). Finally, we return the CSM query answer set $A(q, t)$ at current timestamp $t$ (line 17).

\noindent{\bf Complexity Analysis:} 
In Algorithm~\ref{alg4}, for the edge insertion operation, we check all query edges, generate a new query plan $Q'$, and refine new candidate subgraphs with the new edge (lines 1-11). Therefore, the worst-case time complexity is given by $O((|V(q)|^2+\prod_{i=2}^{|V(q)|-1}|Q'[i].cand\_set|)\cdot |E(q)|)$.

For the edge deletion operation (lines 12-16), since we delete the matching results from $A(q,t-1)$ that contain the deleted edge, by using the hash file to check the edge existence with $O(1)$ cost, the time complexity is given by $O(|A(q,t-1)|)$.

\noindent{\bf Discussions on Batch Updates:}
Similar to existing works on CSM
\cite{choudhury2015selectivity,kankanamge2017graphflow,idris2017dynamic,idris2020general,kim2018turboflux,min2021symmetric,sun2022depth}, we adopt dynamic graph model (Definition \ref{def:dynamic_graph}) with one single edge update per timestamp. Thus, as a straightforward method, batch updates can be supported via sequential execution of atomic edge updates.

To enhance the efficiency of batch processing multiple graph edge updates at a time, we can group similar embeddings of query edges (from registered query graphs), as well as those of data edges affected by edge updates from the batch, into MBRs. Then, we can obtain candidate pairs of query/data MBRs, in which we filter/refine candidate query/data vertices. Specifically, to enable batch processing, in Algorithm \ref{alg4}, we can add an intersection checking between edge label MBRs for query/data edges (line 3). Then, we check the dominance between the minimum and maximum corner points of the query and data edge MBRs, respectively, and if the dominance holds, we consider candidate pairs of query/data vertices from MBR pairs (line 4). This way, we can save the cost of some unnecessary pairwise dominance checks for batch updates. 
We would like to leave the interesting topics of further optimizations for batch update as our future work.

\noindent{\bf Discussions on Dynamic Vertex Label Change:}
In this paper, our proposed DIVINE framework can naturally support dynamic vertex label changes by updating the embeddings of the vertex and its 1-hop neighbors. Specifically, for an affected vertex $v_i$, we only need to (1) update the SPUR vector of this vertex $v_i$, using seeded pseudo-random values derived from vertex labels, and; (2) update the SPAN vectors of 1-hop neighbors of vertex $v_i$, using $v_i$'s updated SPUR vector. Then, new embedding vectors of the affected vertices (i.e., $v_i$ and its 1-hop neighbors) will be updated in DAS$^3$.

\noindent{\bf Discussions on Graphs with Directed, Labeled and/or Loop Edges:}
In this paper, our DIVINE approach focuses on a dynamic, undirected, vertex-label graph without loops. To consider labeled edges, we may construct the SPAN vector $y_i$ for each vertex $v_i$, by summing up SPUR vectors $x_j$ of $v_i$'s 1-hop neighbors $v_j$, as well as randomized vectors of edges $e_{i,j}$ (with their edge labels as seeds). To support CSM over graphs with directed and/or loop edges, we can first use DIVINE to compute candidate subgraphs by ignoring edge directions, and then apply a post-processing refinement phase that validates edge direction and/or loop constraints. We would like to leave the topics of generalizing our DIVINE method to graphs with directed, labeled, and/or loop edges as our future work.

\section{Cost-Model-Guided Embedding Optimization}
\label{sec:cost_model}

\subsection{Cost Model for CSM via Vertex Embeddings}

In this subsection, we will propose a cost model to evaluate the performance of our continuous subgraph matching (or the pruning power of using vertex dominance embeddings).

Specifically, we estimate the number, $Cost_{CSM}$, of candidate vertices $v_i$ (to retrieve and refine) whose embedding vectors $o'(v_i)$ are dominated by query embedding vector $o'(q_i)$ (given by Lemma \ref{lemma:dominance_pruning}): 

\begin{equation}
    Cost_{CSM}=\sum_{v_i\in V(G_t)}\prod_{j=1}^{2d}Pr\{o'(q_i)[j]\leq o'(v_i)[j]\},
\label{eq:cost_model}
\end{equation}
where $2d$ is the dimension of the embedding vector $o'(\cdot)$, and $Pr\{\cdot\}$ is a probability function. 

\noindent{\bf Analysis of the Cost Model:} We consider $o'(v_i)[j]$ as a random number generated from a random variable, with mean $\mu_{o'(v_i)[j]}$ and variance $\sigma_{o'(v_i)[j]}^2$. Moreover, $o'(q_i)[j]$ can be considered as a constant. We have the following equation:

\begin{align}
    & \Pr\{o'(q_i)[j]\leq o'(v_i)[j]\} \notag \\
    =\ & \Pr\left\{\frac{(o'(q_i)[j]-o'(v_i)[j])-(o'(q_i)[j]-\mu_{o'(v_i)[j]})}{\sigma_{o'(v_i)[j]}^{2}} \right.\notag\\
    &\left. \quad \leq \frac{-(o'(q_i)[j]-\mu_{o'(v_i)[j]})}{\sigma_{o'(v_i)[j]}^{2}}\right\}.
\label{eq:cdf_function}
\end{align}

By applying \textit{Central Limit Theorem} (CLT) \cite{CLT23} to Eq.~(\ref{eq:cdf_function}), we have:

\begin{eqnarray}
    Pr\{o'(q_i)[j]\leq o'(v_i)[j]\}\approx \Phi\left(\frac{\mu_{o'(v_i)[j]}-o'(q_i)[j]}{\sigma_{o'(v_i)[j]}^{2}}\right),
\label{eq:cdf_function2}
\end{eqnarray}

where $\Phi(\cdot)$ is the \textit{cumulative density function} (cdf) of \textit{standard normal distribution}.

We substitute Eq.~(\ref{eq:cdf_function2}) into Eq.~(\ref{eq:cost_model}) and obtain:

\begin{equation}
    Cost_{CSM}=|V(G_t)| \cdot \prod_{j=1}^{2d}\Phi\left(\frac{\mu_{o'(v_i)[j]}-o'(q_i)[j]}{\sigma_{o'(v_i)[j]}^{2}}\right).
\label{eq:simplified_cost}
\end{equation}

\subsection{Cost-Model-Based Vertex Embedding Design}
Eq.~(\ref{eq:simplified_cost}) estimates the (worst-case) query cost for one query vertex $q_i$ during the continuous subgraph matching. Intuitively, our goal is to design vertex embeddings $o'(\cdot)$ that minimizes the cost model $Cost_{CSM}$ (given in Eq.~(\ref{eq:simplified_cost})). 

Note that, $\Phi(\cdot)$ in Eq.~(\ref{eq:simplified_cost}) is a monotonically increasing function. Therefore, if we can minimize the term $\frac{\mu_{o'(v_i)[j]}-o'(q_i)[j]}{\sigma_{o'(v_i)[j]}^{2}}$ in Eq.~(\ref{eq:simplified_cost}), then low query cost $Cost_{CSM}$ can be achieved. In other words, guided by our proposed cost model (to minimize the query cost $Cost_{CSM}$), {\bf our target is to design/select a ``good'' distribution of vertex dominance embeddings $o'(v_i)$ with:

\begin{enumerate}
    \item low mean $\mu_{o'(v_i)[j]}$ , and;
    \item high variance $\sigma_{o'(v_i)[j]}^2$.
\end{enumerate}
}

In this paper, unlike standard vertex dominance embeddings generated by \textit{Uniform} random function $f_1(\cdot)$ (as discussed in Section \ref{subsec:embedding}), we choose to use a (seeded) \textit{Zipf} random function to produce SPUR vectors $x_i$ in vertex embeddings $o'(v_i)$ (or in turn generate SPAN vectors $y_i = \sum x_j$). The $Zipf$ distribution exactly follows our target of finding a random variable with low mean and high variance, which can achieve low query cost, as guided by our proposed cost model $Cost_{CSM}$ (given in Eq.~(\ref{eq:cost_model})). Please refer to experimental comparison of $Zipf$ with $Uniform$ in Figure~\ref{subfig:emb_syn} of Section \ref{sec:experiments}. We would like to leave the interesting topic of studying other low-mean/high-variance distributions as our future work.

\noindent {\bf Discussions on the Seeded Zipf Generator:} We consider two distributions, $Uniform$ and $Zipf$, each of which is divided to $b$ buckets with the same area. This way, we create a 1-to-1 mapping between buckets in a Uniform distribution and that in a $Zipf$ distribution. Given a seeded pseudo-random number, $r$, from the Uniform distribution, we can first find the bucket this Uniform random number falls into, obtain its corresponding bucket in the $Zipf$ distribution, and compute its proportional location $r'$ in the $Zipf$ bucket. As a result, $r'$ is the random number that follows the $Zipf$ distribution.

\noindent {\bf Integration of the Cost-Model-Based Vertex Embeddings into our Dynamic Subgraph Matching Framework:} 
In light of our cost model above, we design a novel cost-model-based vertex dominance embedding, denoted as $o'_{C}(v_i)$, for each vertex $v_i$, that is:

\begin{equation}
    o'_{C}(v_i)=\alpha(x'_i||y'_i)+\beta z_i,
    \label{eq:cost_model_embedding}
\end{equation}

where $x'_i=f_Z(l(v_i))$ is the newly designed SPUR vector generated by a seeded $Zipf$ random generator $f_Z(\cdot)$, and $y_i' = \sum_{\forall v_j \in \mathcal{N}_{v_i}} x_j'$.

Note that, since the SPAN vector $y'_i$ in Eq.~(\ref{eq:cost_model_embedding}) is given by $\sum_{\forall v_j \in \mathcal{N}_{v_i}} x_j'$, its distribution still follows the property of low mean and high variance to achieve low query cost.

\begin{table}[t]\small
\begin{center}
\caption{Statistics of real-world graph data sets.}
\label{tab:datasets}
\begin{tabular}{|l||c|c|c|c|}
\hline
\textbf{\text{ }\text{ }Data Sets}&\textbf{$|V(G)|$}&\textbf{$|E(G)|$}&\textbf{$|\sum|$}&\textbf{$avg\_deg(G)$} \\
\hline\hline
    Yeast (ye) & 3,112 & 12,519 & 71 & 8.0\\\hline
    HPRD (hp) & 9,460 & 34,998 & 307 & 7.4\\\hline
    Reddit (re) & 232,965 & 57,307,946 & 41 & 491.98 \\\hline
    DBLP (db) & 317,080 & 1,049,866 & 15 & 6.6\\\hline
    Youtube (yt) & 1,134,890 & 2,987,624 & 25 & 5.3\\\hline
    US Patents (up) & 3,774,768 & 16,518,947 & 20 & 8.8\\\hline
\end{tabular}
\end{center}
\end{table}

\section{Experimental Evaluation}
\label{sec:experiments}
\subsection{Experimental Settings}
To evaluate the performance of our DIVINE approach, we conduct experiments on a Ubuntu server equipped with an Intel Core i9-12900K CPU and 128GB memory. Our source code in C++ and real/synthetic graph data sets are available at URL: {\it \url{https://github.com/JamesWhiteSnow/DSM}}.

\noindent{\bf Real/Synthetic Graph Data Sets:} We evaluate our DIVINE approach over both real and synthetic graphs.

\underline{\textit{Real-world graphs.}} We test six real-world graph data used by previous works \cite{he2008graphs, shang2008taming, zhao2010graph, lee2012depth, sun2012efficient, han2013turboiso, ren2015exploiting, bi2016efficient, katsarou2017subgraph, hamilton2017inductive, bhattarai2019ceci, han2019efficient, sun2020memory}, which can be classified into three categories: i) biology networks (Yeast and HPRD); ii) bibliographical/social networks (Reddit, DBLP and Youtube), and; iii) citation networks (US Patents). 
Statistics of these real graphs are summarized in Table~\ref{tab:datasets}.

\underline{\textit{Synthetic graphs.}} We generate synthetic graphs via NetworkX \cite{hagberg2020networkx}, and produce small-world graphs following the Newman-Watts-Strogatz model \cite{watts1998collective}. 
Parameter settings of synthetic graphs are depicted in Table~\ref{tab:parameters}. For each vertex $v_i$, we generate its label $l(v_i)$ by randomly picking up an integer within $[1, |\sum|]$, following the $Uniform$, $Gaussian$, or $Zipf$ distribution. Accordingly, we obtain 3 types of graphs, denoted as $Syn\text{-}Uni$, $Syn\text{-}Gau$, and $Syn\text{-}Zipf$, respectively.

The insertion (or deletion) rate is defined as the ratio of the number of edge insertions (or deletions) to the total number of edges in the raw graph data.
Following the literature \cite{choudhury2015selectivity,kim2018turboflux,min2021symmetric,sun2022depth}, we set the insertion rate to 10\% by default (i.e., initial graph $G_0$ contains 90\% edges, and the remaining 10\% edges arrive as the insertion stream).
Similarly, for the case of edge deletions, we set the default deletion rate to 10\%  (i.e., initial graph $G_0$ contains all edges, and deletion operations of 10\% edges arrive in the stream). 
Below, we will report the CSM performance with edge insertion only, unless specified otherwise (see the edge-deletion case in Figure~\ref{subfig:dsm_del_real_mix}).

\noindent{\bf Query Graphs:} Similar to previous works \cite{sun2012efficient, han2013turboiso, ren2015exploiting, bi2016efficient, katsarou2017subgraph, archibald2019sequential, bhattarai2019ceci, han2019efficient,sun2022depth}, for each graph $G_D$, we randomly extract/sample 100 connected subgraphs as query graphs, where parameters of query graphs $q$ (e.g., $|V(q)|$ and $avg\_deg(q)$) are depicted in Table~\ref{tab:parameters}. 

\noindent{\bf Baseline Methods:} 
We compare our DIVINE approaches (using the cost-model-based vertex dominance embeddings $o'_C(v_i)$, as discussed in Section \ref{sec:cost_model}) with five representative baseline methods of dynamic subgraph matching and one GNN-based path embedding baseline of static subgraph matching as follows: 
\begin{enumerate}
    \item {\bf Graphflow (GF)} \cite{kankanamge2017graphflow} is a direct-incremental algorithm that enumerates updates of matching results without any auxiliary data structure. 

    \item  {\bf SJ-Tree (SJ)} \cite{choudhury2015selectivity} is an index-based incremental method that evaluates the join query with binary joins using the index.

    \item {\bf TurboFlux (TF)} \cite{kim2018turboflux} is an index-based incremental method that stores matches of paths in $q$ without materialization and evaluates the query with the vertex-at-a-time method.

    \item {\bf SymBi (Sym)} \cite{min2021symmetric} is an index-based incremental method that prunes candidate vertex sets using all query edges.

    \item {\bf IEDyn (IED)} \cite{idris2017dynamic,idris2020general} is an index-based incremental method that supports acyclic queries and can achieve constant delay enumeration under the setting of graph homomorphism.

    \item {\bf GNN-PE} \cite{ye2024efficient} retrains the GNN model for path embedding if the dominance relationship does not hold (due to dynamic graph updates).
\end{enumerate}

We used the code of baseline methods from \cite{sun2022depth,ye2024efficient}, where GNN-PE is implemented by Python and C++, and the other methods are implemented in C++ for a fair comparison.

\begin{table}[t]\small
\setlength{\tabcolsep}{3pt} 
\begin{center}
\caption{Parameter settings.}
\label{tab:parameters}
\begin{tabular}{|p{8cm}||p{6cm}|}
\hline
\textbf{Parameters}&\textbf{Values}\\
\hline\hline
    the dimension, $d$, of the SPUR/SPAN vector  & 1 {\bf 2}, 3, 4\\\hline
    the ratio, $\beta/\alpha$ & 10, 100, {\bf 1,000}, 10,000, 100,000\\\hline
    the number, $m$, of degree groups & 1, 2, {\bf 3}, 4, 5\\\hline 
    the number, $K$, of cell intervals in each dimension of $Syn_j$ & 1, 2, {\bf 5}, 8, 10\\\hline
    the number, $|\Sigma|$, of distinct labels & 5, 10, {\bf 15}, 20, 25\\\hline
    the average degree, $avg\_deg(q)$, of the query graph $q$ & 2, {\bf 3}, 4\\\hline  
    the size, $|V(q)|$, of the query graph $q$ & 5, 6, {\bf 8}, 10, 12\\\hline
    the average degree, $avg\_deg(G_D)$, of dynamic graph $G_D$ & 3, 4, {\bf 5}, 6, 7\\\hline     
    the size, $|V(G_D)|$, of the dynamic data graph $G_D$ & 10K, 30K, {\bf 50K}, 80K, 100K, 500K, 1M, 5M, 10M\\\hline 
\end{tabular}
\end{center}
\end{table}

\noindent{\bf Evaluation Metrics:}
In our experiments, we report the efficiency of our DIVINE approach and baseline methods, in terms of the \textit{wall clock time}, which is the time cost of processing the CSM query over the entire graph update sequence (i.e., continuous subgraph matching answer set monitoring).
In particular, this time cost includes the time of filtering, refinement, embedding updates, and synopsis updates for each CSM query.
We also evaluate the \textit{pruning power} of our \textit{embedding dominance pruning} and \textit{MBR range pruning strategies} (as mentioned in Section \ref{subsec:pruning}), which is defined as the percentage of vertices that can be ruled out by our pruning methods. For all experiments, we take an average of each metric over 100 runs (with 100 query graphs, respectively). 

Table~\ref{tab:parameters} depicts our parameter settings, where default parameter values are in bold. For each set of experiments, we vary the value of one parameter while setting other parameters to default values. 

\subsection{Parameter Tuning}
We first tune the parameters for our DIVINE approach using synthetic graph data sets.

\begin{figure}[t]
\centering
\subfigure[][{DIVINE vs. $d$}]{                    
\scalebox{0.15}[0.15]{\includegraphics{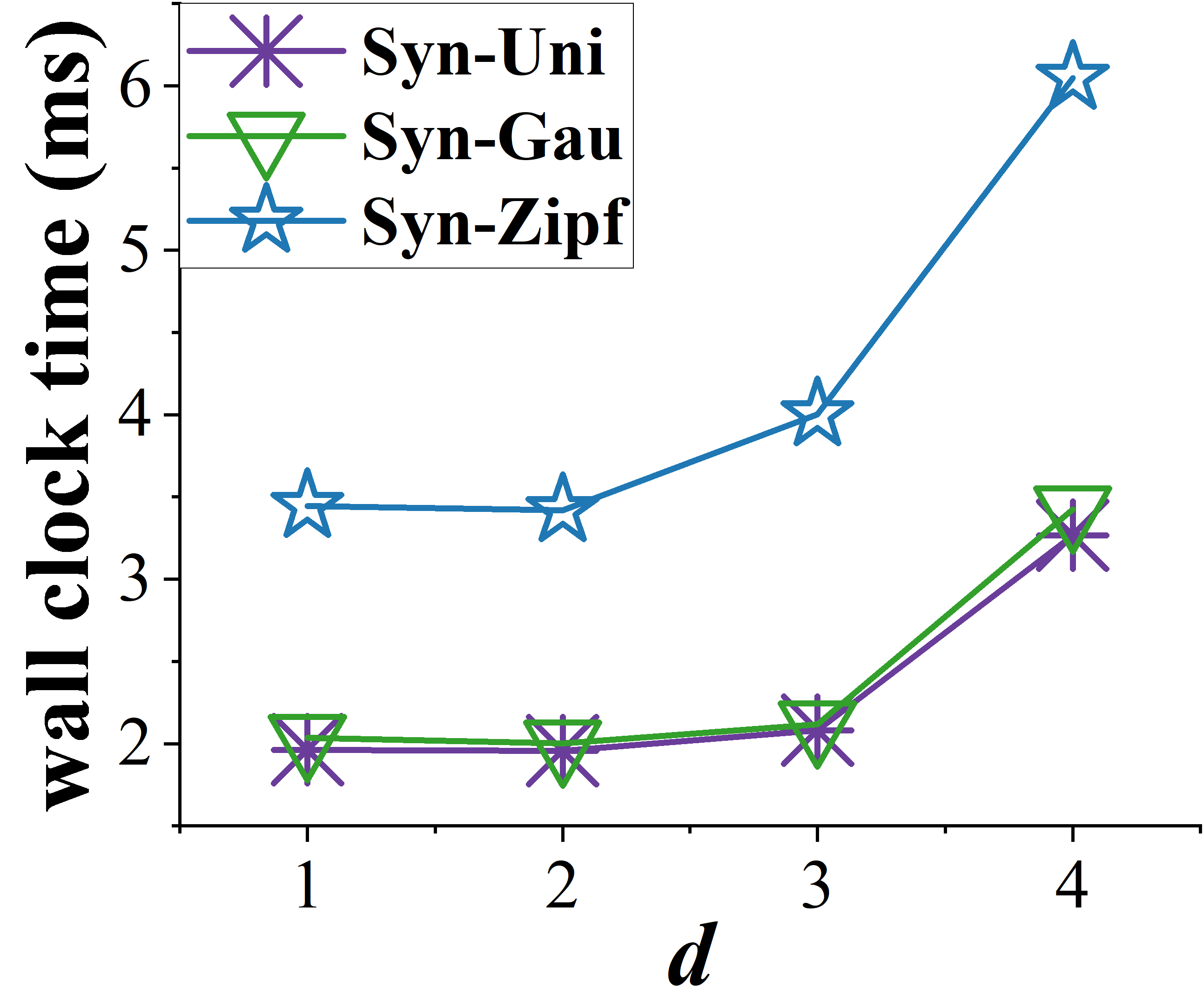}}\label{subfig:dim_syn_mix}}
\qquad
\subfigure[][{DIVINE vs. $\beta/\alpha$}]{              
\scalebox{0.15}[0.15]{\includegraphics{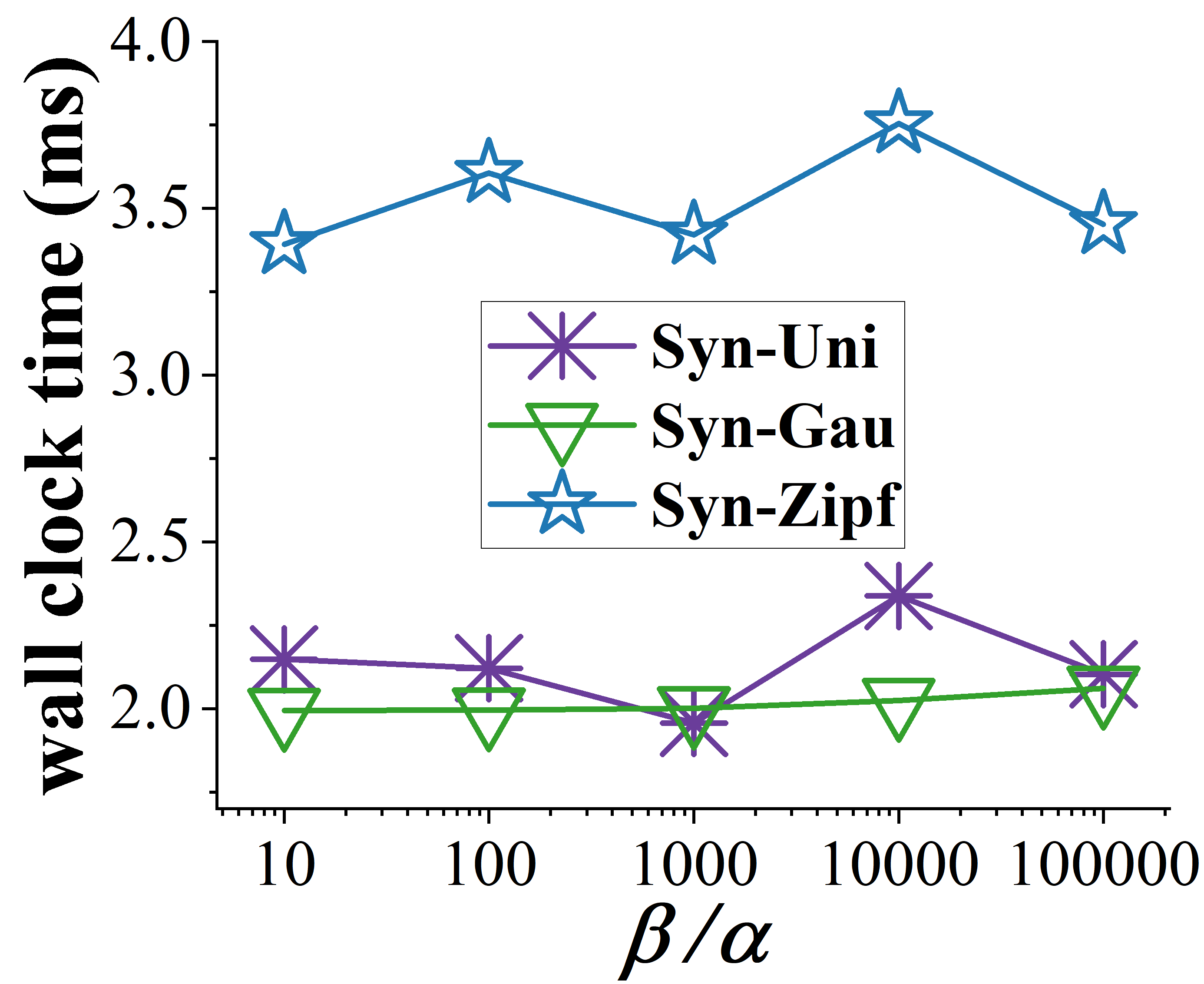}}\label{subfig:ratio_syn}}
\qquad
\subfigure[][{DIVINE vs. $K$}]{                    
\scalebox{0.15}[0.15]{\includegraphics{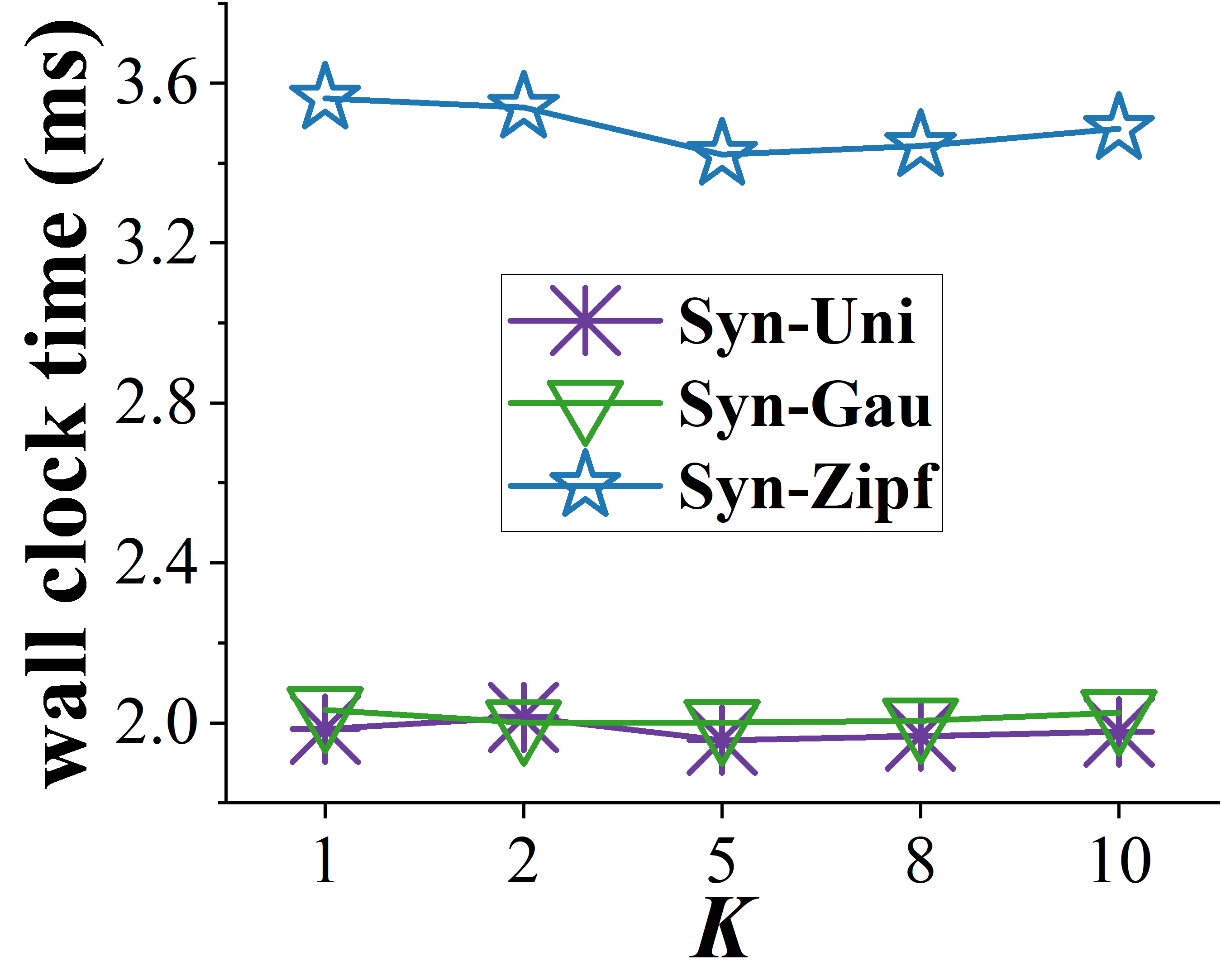}}\label{subfig:cell_syn}}
\qquad
\subfigure[][{DIVINE vs. $m$}]{                    
\scalebox{0.15}[0.15]{\includegraphics{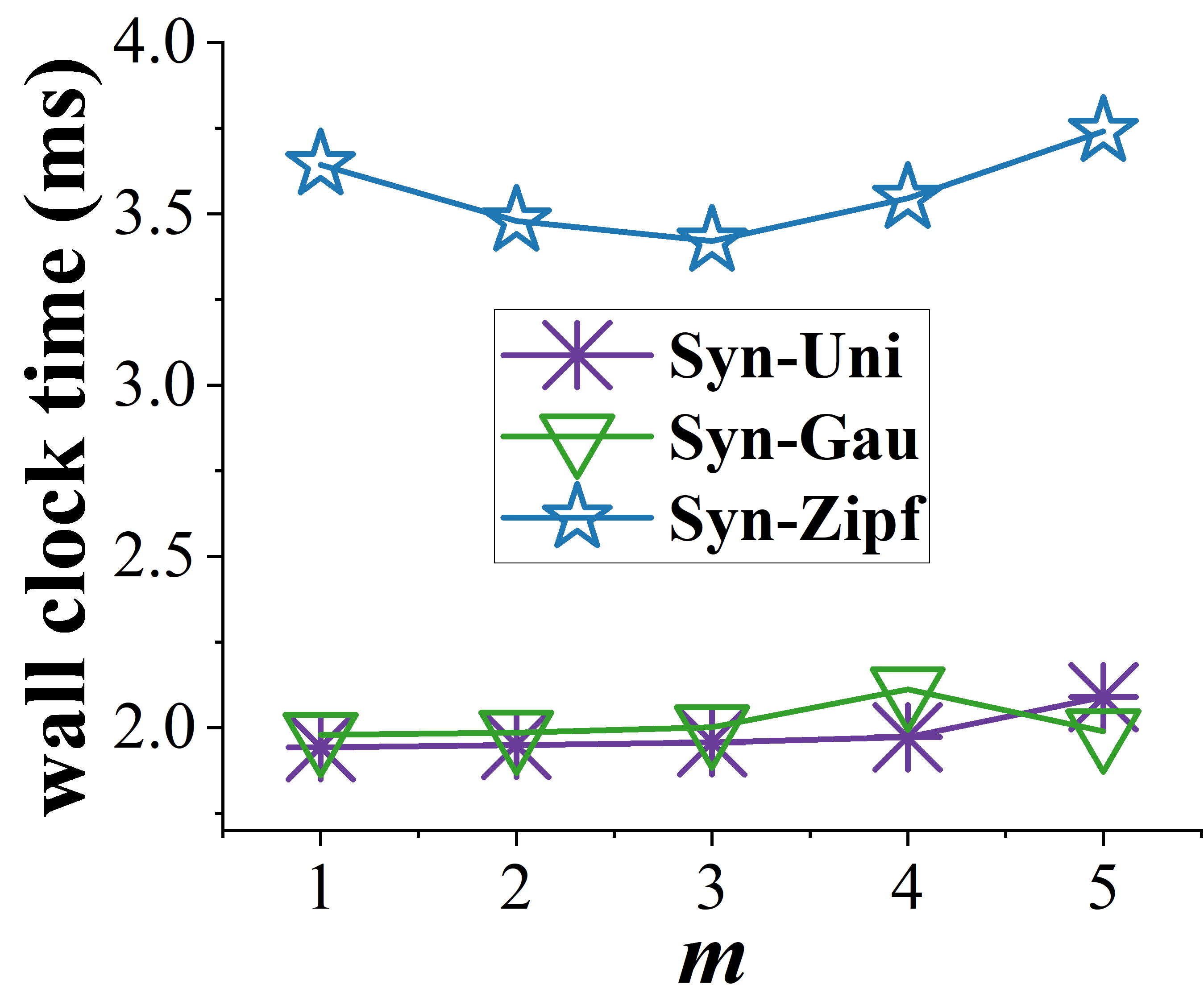}}\label{subfig:group_syn}}
\qquad
\subfigure[][{DIVINE vs. embeddings}]{
\scalebox{0.15}[0.15]{\includegraphics{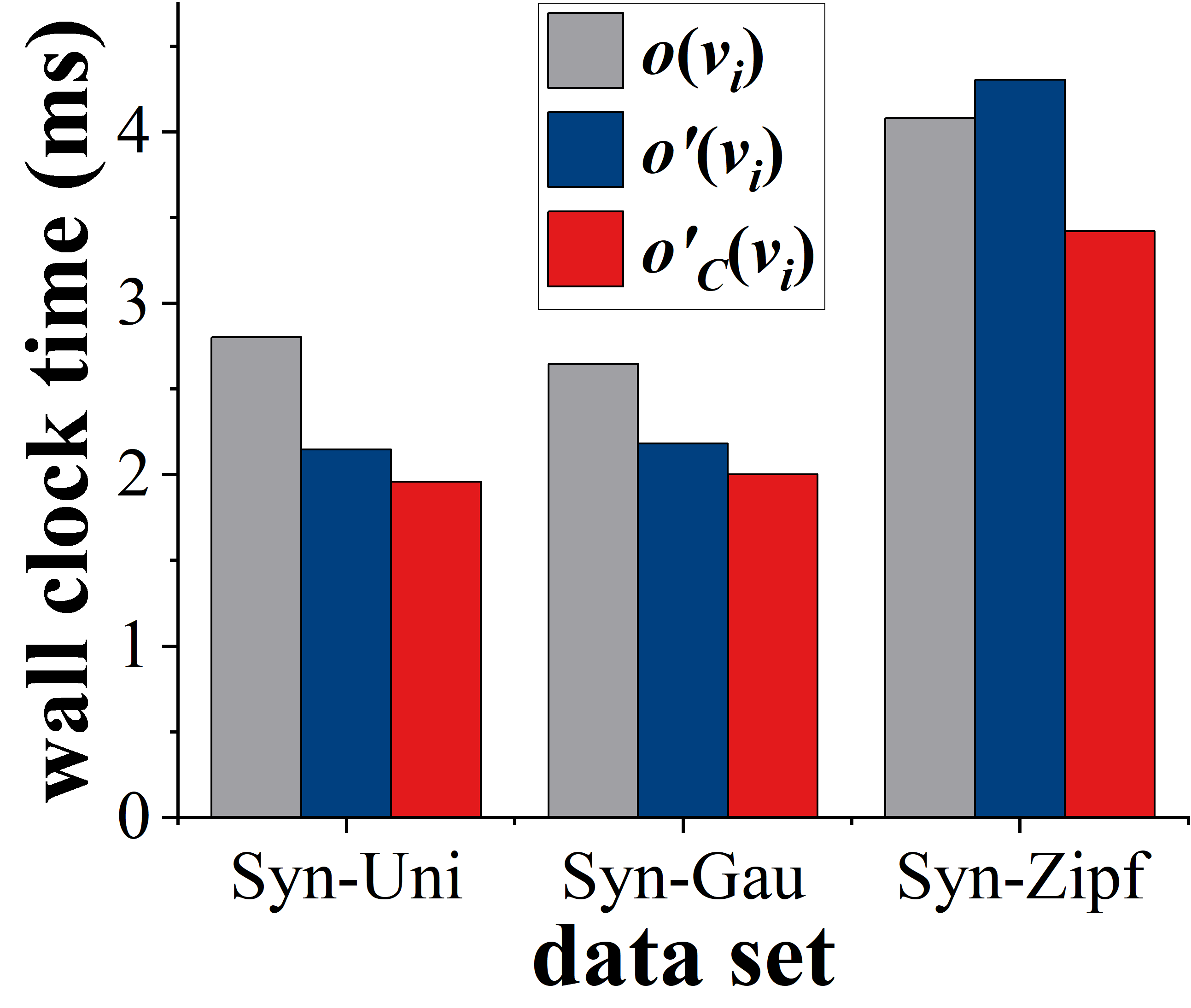}}\label{subfig:emb_syn}}
\caption{The DIVINE efficiency w.r.t parameters $d$, $\beta/\alpha$, $K$, $m$, and embedding design strategies.}
\label{fig:parameter_tuning_app}
\end{figure}

\noindent {\bf The DIVINE Efficiency w.r.t. SPUR/SPAN Vector Dimension $\bm{d}$:}
Figure~\ref{subfig:dim_syn_mix} illustrates the DIVINE performance by varying the dimension, $d$, of the cost-model-based SPUR/SPAN vector (in $o'_C(v_i)$ of Eq.~(\ref{eq:cost_model_embedding})) from 1 to 4, where other parameters are set to their default values. 
With a higher embedding dimension $d$, the pruning power of our proposed pruning strategies in a higher-dimensional space increases. However, the access of synopses with larger $d$ may also incur higher costs due to the "dimensionality curse" \cite{BerchtoldKK96}. Thus, in this figure, for larger $d$, the wall clock time of DIVINE first decreases and then increases over all synthetic graphs. Nonetheless, the time cost remains low (i.e., 1.96 $\sim$ 6.05 $ms$) for different $d$ values.

\noindent {\bf The DIVINE Efficiency w.r.t. $\bm{\beta/\alpha}$ Ratio:}
Figure~\ref{subfig:ratio_syn} varies the ratio, $\beta/\alpha$, from 10 to 100,000 for the optimized (or cost-model-based) vertex dominance embeddings $o'(v_i)$ (or $o'_C(v_i)$), where other parameters are set by default.
In the figure, we can see that our DIVINE approaches are not very sensitive to the ratio $\beta/\alpha$. For different $\beta/\alpha$ ratios, the query cost remains low (i.e., 1.96 $\sim$ 3.75 $ms$).

\noindent {\bf The DIVINE Efficiency w.r.t. $\bm{\#}$ of Cell Intervals on Each Dimension, $\bm{K}$:}
Figure~\ref{subfig:cell_syn} evaluates the effect of the number, $K$, of cell intervals on each dimension on the DIVINE performance, where $K$ varies from 1 to 10, and other parameters are set by default. 
When $K$ becomes larger, more vertices in synopsis cells can be pruned, however, more cells need to be accessed. Therefore, for DIVINE, with the increase of $K$, the time cost first decreases and then increases. Nonetheless, for different $K$ values, the query cost remains low (i.e., 1.96 $\sim$ 3.54 $ms$).

\noindent {\bf The DIVINE Efficiency w.r.t. $\bm{\#}$ of Degree Groups, $\bm{m}$:}
Figure~\ref{subfig:group_syn} reports the performance of our DIVINE approach, by varying the number, $m$, of degree groups from 1 to 5, where other parameters are set by default. 
In this figure, the time costs over $Syn\text{-}Uni$ and $Syn\text{-}Zipf$ first decrease and then increase when $m$ increases, and there are some fluctuations for $Syn\text{-}Gau$ (e.g., for $m=4$ or $5$). 
For all $m$ values, the time cost remains low (i.e., 1.94 $\sim$ 3.74 $ms$).

\noindent {\bf The DIVINE Efficiency w.r.t. Vertex Dominance Embedding Design Strategies:} Figure~\ref{subfig:emb_syn} tests the DIVINE performance with different designs of vertex dominance embeddings, $o(v_i)$ (in Eq.~(\ref{eq:embedding})), $o'(v_i)$ (in Eq.~(\ref{eq:new_vertex_embedding})), and $o'_C(v_i)$  (in Eq.~(\ref{eq:cost_model_embedding})), where default values are used for parameters. 
In the figure, we can see that, the optimized vertex embeddings $o'(v_i)$ (via the base vector $z_i$) incur a smaller time cost than $o(v_i)$ in all cases, whereas the cost-model-based vertex embeddings $o'_C(v_i)$ consistently achieve the lowest time. For different vertex embeddings, the query cost remains low (i.e., 1.96$\sim$4.3$ms$).

The experimental results on real-world graphs are similar and thus omitted here.

In subsequent experiments, we will set parameters $d=2$, $\beta/\alpha=1,000$, $m=3$, and $K=5$, and use the best cost-model-based vertex embeddings $o'_C(v_i)$ (given by Eq.~(\ref{eq:cost_model_embedding})).

\subsection{The DIVINE Effectiveness Evaluation}
In this subsection, we report the pruning power of our pruning strategies (as discussed in Section~\ref{subsec:pruning}), as well as the query selectivity of our proposed embeddings, for continuous subgraph matching over real/synthetic graphs.

\begin{figure}[t]
\centering
\subfigure[][{real-world graphs}]{                    
\scalebox{0.15}[0.15]{\includegraphics{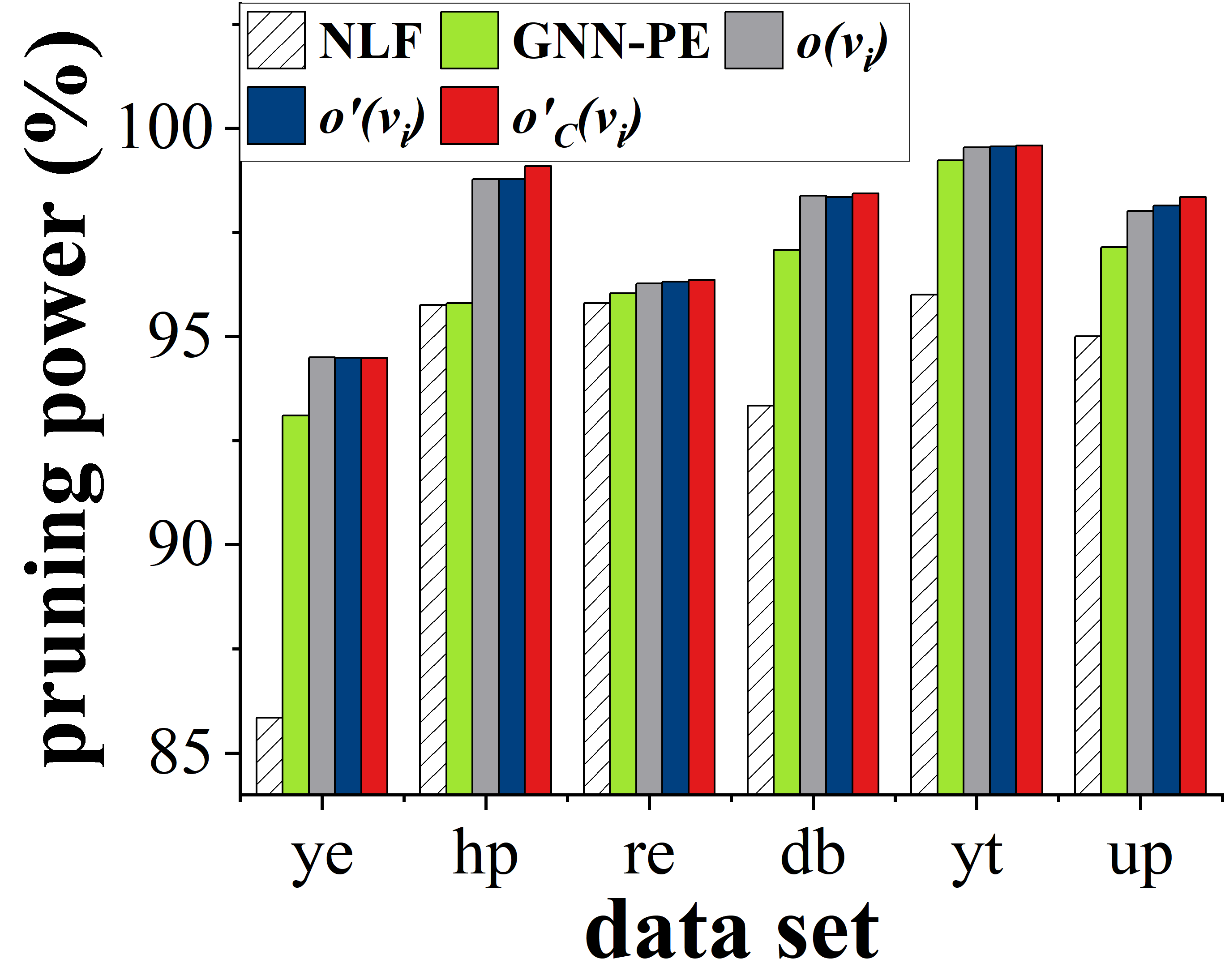}}\label{subfig:prune_real_new}}
\qquad
\subfigure[][{synthetic graphs}]{
\scalebox{0.15}[0.15]{\includegraphics{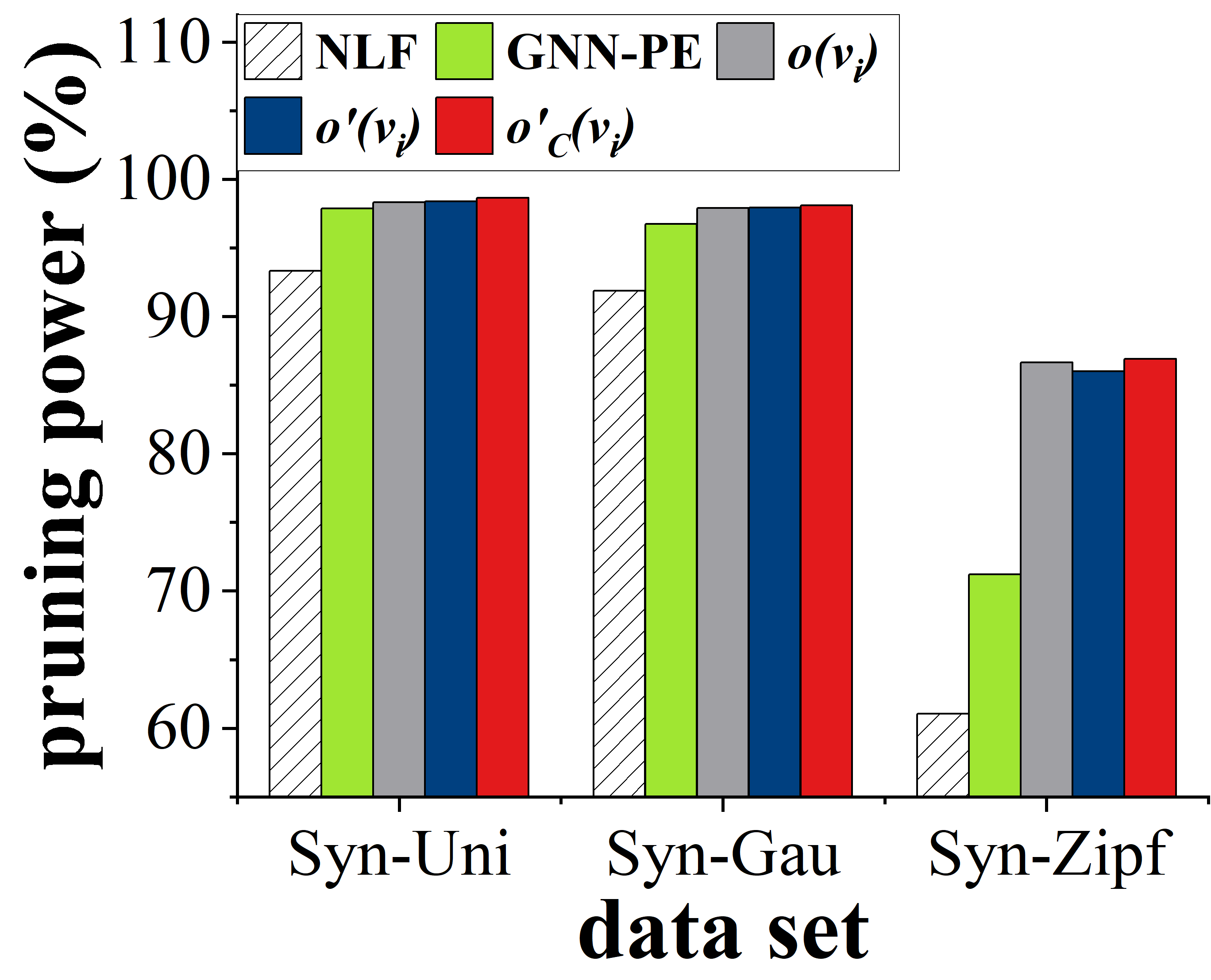}}\label{subfig:prune_syn_new}}
\caption{The DIVINE pruning power w.r.t. different design strategies of vertex dominance embeddings, compared with \textit{naive label filtering} (NLF) and GNN-PE.}
\label{fig:pruning}
\end{figure}

\noindent {\bf The DIVINE Pruning Power:}
Figure~\ref{fig:pruning} shows the pruning power of our DIVINE approach (different embedding designs $o(v_i)$, $o'(v_i)$, and $o'_C(v_i)$) over real/synthetic graphs, compared with that of \textit{naive label filtering} (NLF) and GNN-PE \cite{ye2024efficient}, where all parameters are set by default. Specifically, NLF filters out candidate vertices with labels different from the query vertex, whereas GNN-PE rules out candidate vertices whose learned embedding vectors are not dominated by that of the query vertex. In the subfigures, we can see that our DIVINE approach even with the naive embedding design $o(v_i)$ can significantly outperform both NLF and GNN-PE, over all the real/synthetic graphs. 

Moreover, we can see that the DIVINE pruning power of our proposed vertex embedding designs can reach as high as $94.47\% \sim 99.58\%$ for real-world graphs and $85.99\% \sim 98.65\%$ for synthetic graphs, which confirms the effectiveness of our embedding-based pruning strategies. Our cost-model-based vertex embedding strategy $o'_C(v_i)$ always achieves the highest pruning power (i.e., $86.93\% \sim 99.58\%$). Thus, in the sequel, we will always use the cost-model-based vertex embeddings in our DIVINE approach.

\begin{figure}[t]
\centering
\subfigure[][{real-world graphs}]{                    
\scalebox{0.15}[0.15]{\includegraphics{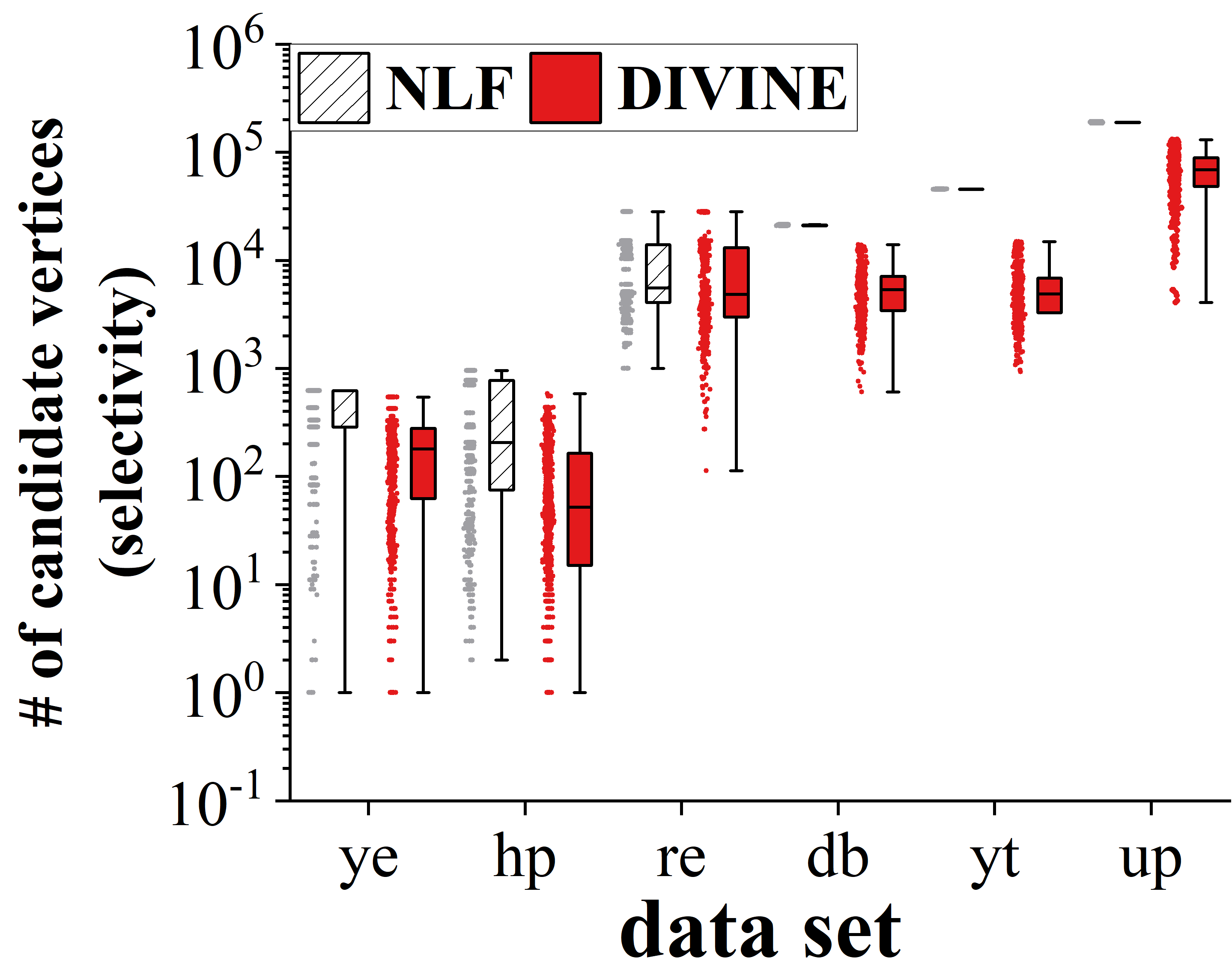}}\label{subfig:selectivity_real}}
\qquad
\subfigure[][{synthetic graphs}]{
\scalebox{0.15}[0.15]{\includegraphics{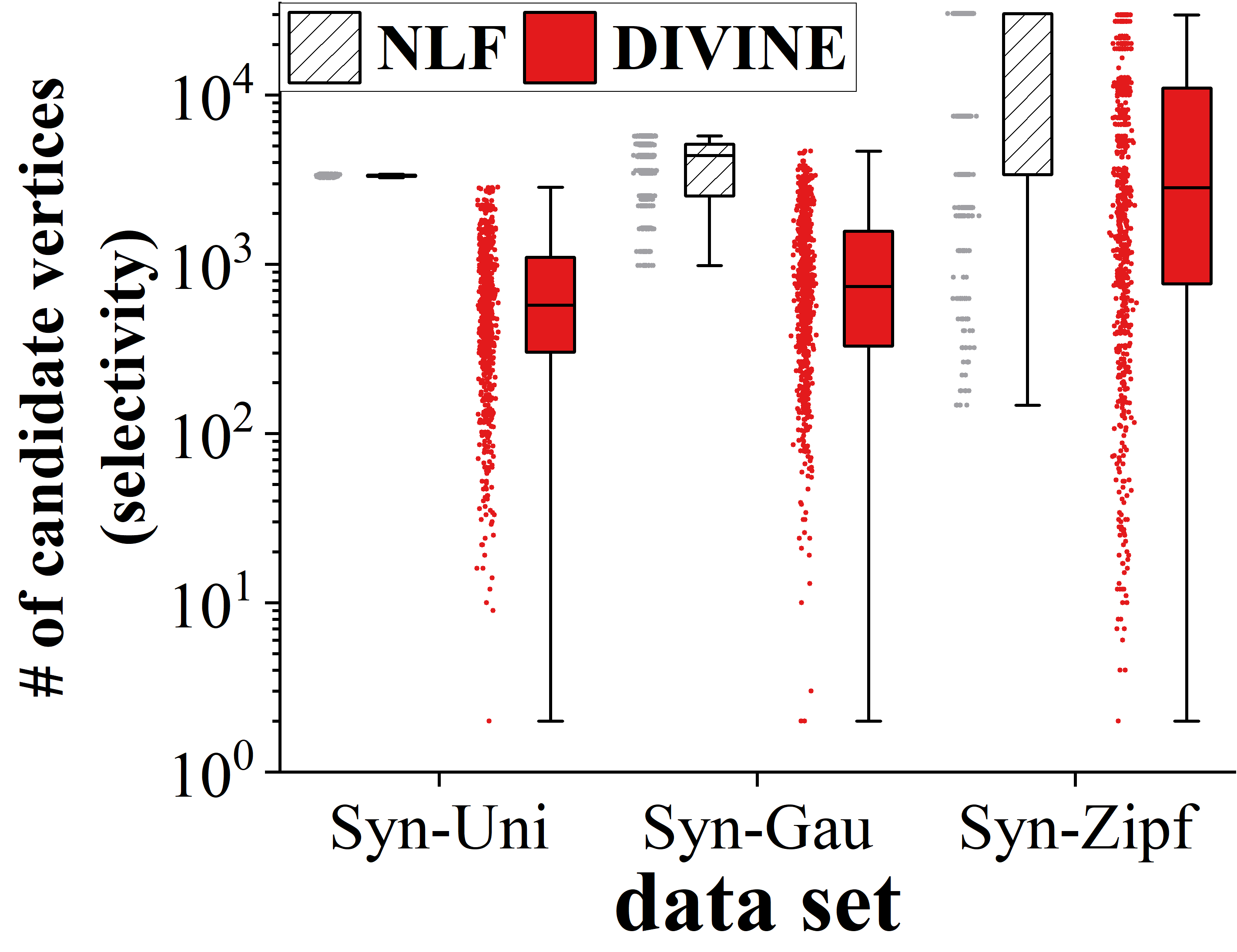}}\label{subfig:selectivity_syn}}
\caption{The comparison of the selectivity distributions between DIVINE and NLF.}
\label{fig:selectivity}
\end{figure}

\noindent {\bf The DIVINE Selectivity Distribution:}
Figure \ref{fig:selectivity} evaluates the query selectivity distribution of our DIVINE approach, compared with that of NFL, where default values are assigned to parameters. Specifically, we first obtain 800 query vertices from 100 query graphs (i.e., each graph with 8 query vertices), and then report the numbers of candidate vertices dominated by their corresponding 800 query embedding vectors (using DIVINE or NLF), respectively, in box plots  (i.e., query selectivity distribution with MIN, MAX, medium, and lower/upper quantiles). From the subfigures, we can see that, compared with NLF, our DIVINE approach has significantly fewer candidate vertices for most query vertices, which indicates better selectivity of our DIVINE embedding design.

\begin{figure}[t]
    \centering
    \includegraphics[width=0.75\textwidth]{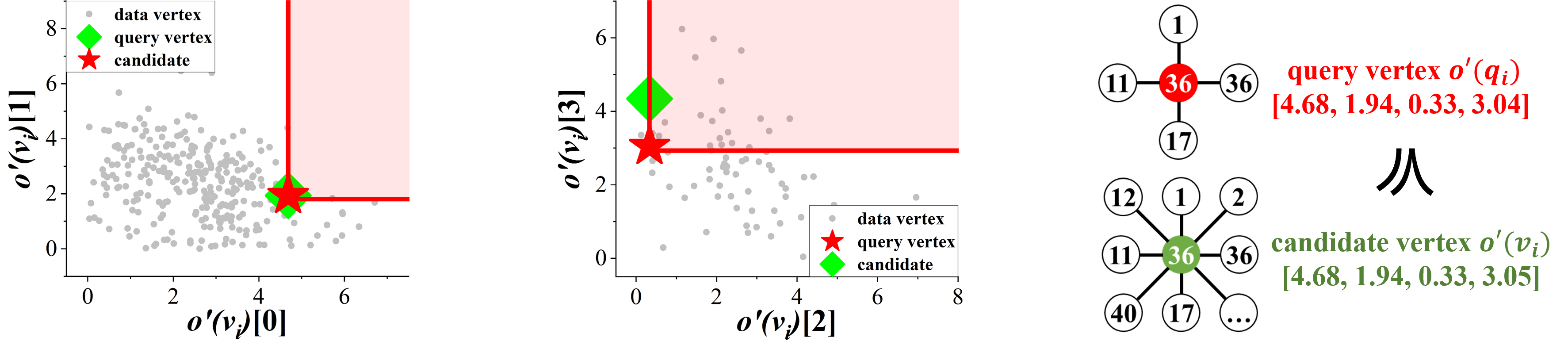}
    \caption{ A visualization of vertex dominance embeddings.}
    \label{fig:emb_visualization}
\end{figure}

\noindent {\bf The Visualization of Our Vertex Dominance Embedding:} 
Figure \ref{fig:emb_visualization} visualizes vertex dominance embeddings over the ye data set (gray points), where we plot 2D SPUR and SPAN vectors in the two leftmost figures. As a case study, given a query vertex $o'(q_i)$ (red star point), its dominating region $DR(o'(q_i))$ contains a candidate vertex $v_i$ (green diamond) that matches with $q_i$. We can see that the subgraph relationship of star structures can be well-preserved if their vertex embeddings follow the dominance relationship.

\subsection{The DIVINE Efficiency Evaluation}
In this subsection, we report the efficiency of our DIVINE approaches over real/synthetic graphs.

\begin{figure}[t]
\centering
\subfigure[][{\footnotesize real-world graphs ($Ins_{10\%}$)}]{                    
\scalebox{0.15}[0.15]{\includegraphics{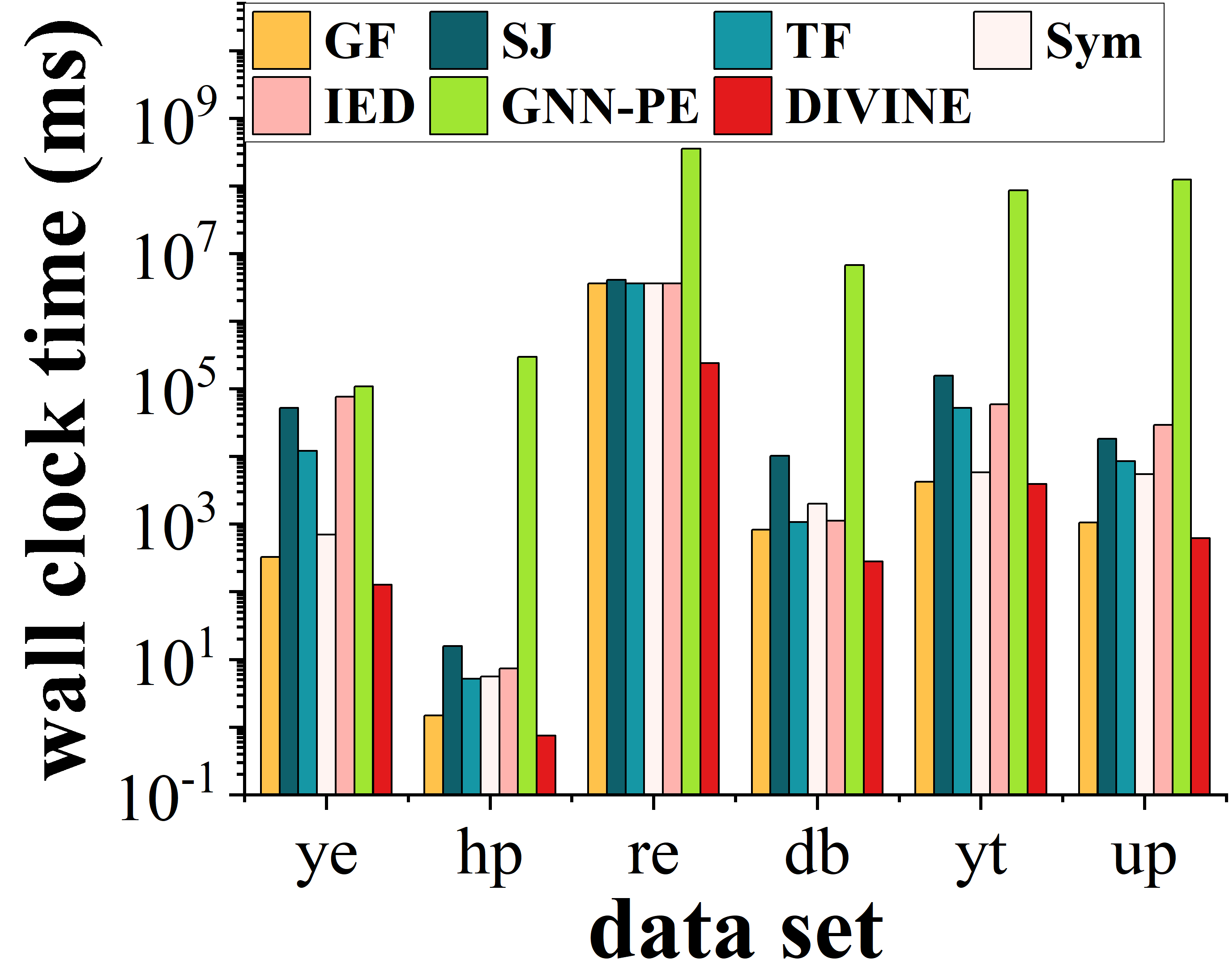}}\label{subfig:dsm_real}}
\qquad
\subfigure[][{\footnotesize real-world graphs ($Ins_{20\%}$)}]{                    
\scalebox{0.15}[0.15]{\includegraphics{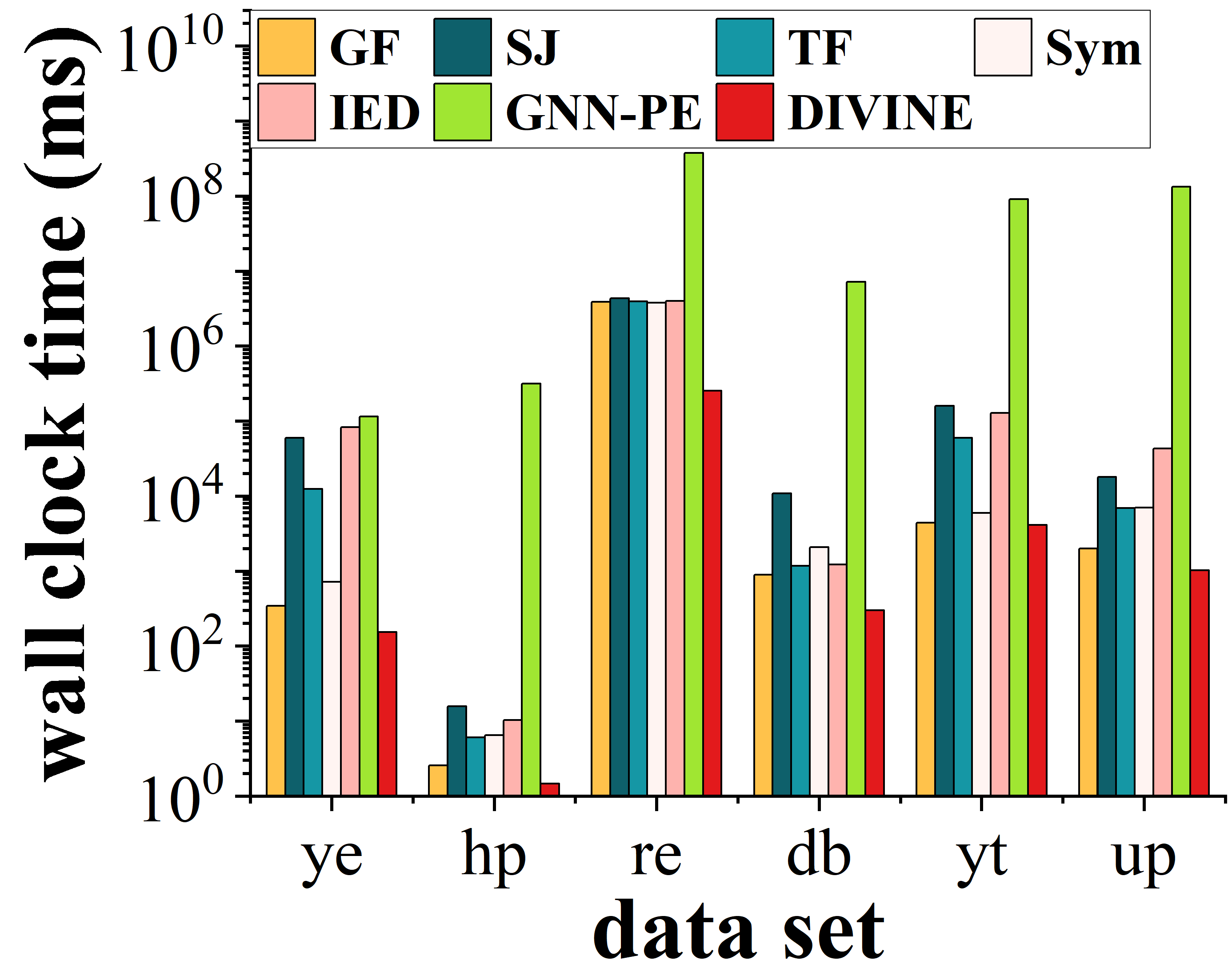}}\label{subfig:ins_real_20}}
\qquad
\subfigure[][{\footnotesize real-world graphs ($Ins_{30\%}$)}]{                    
\scalebox{0.15}[0.15]{\includegraphics{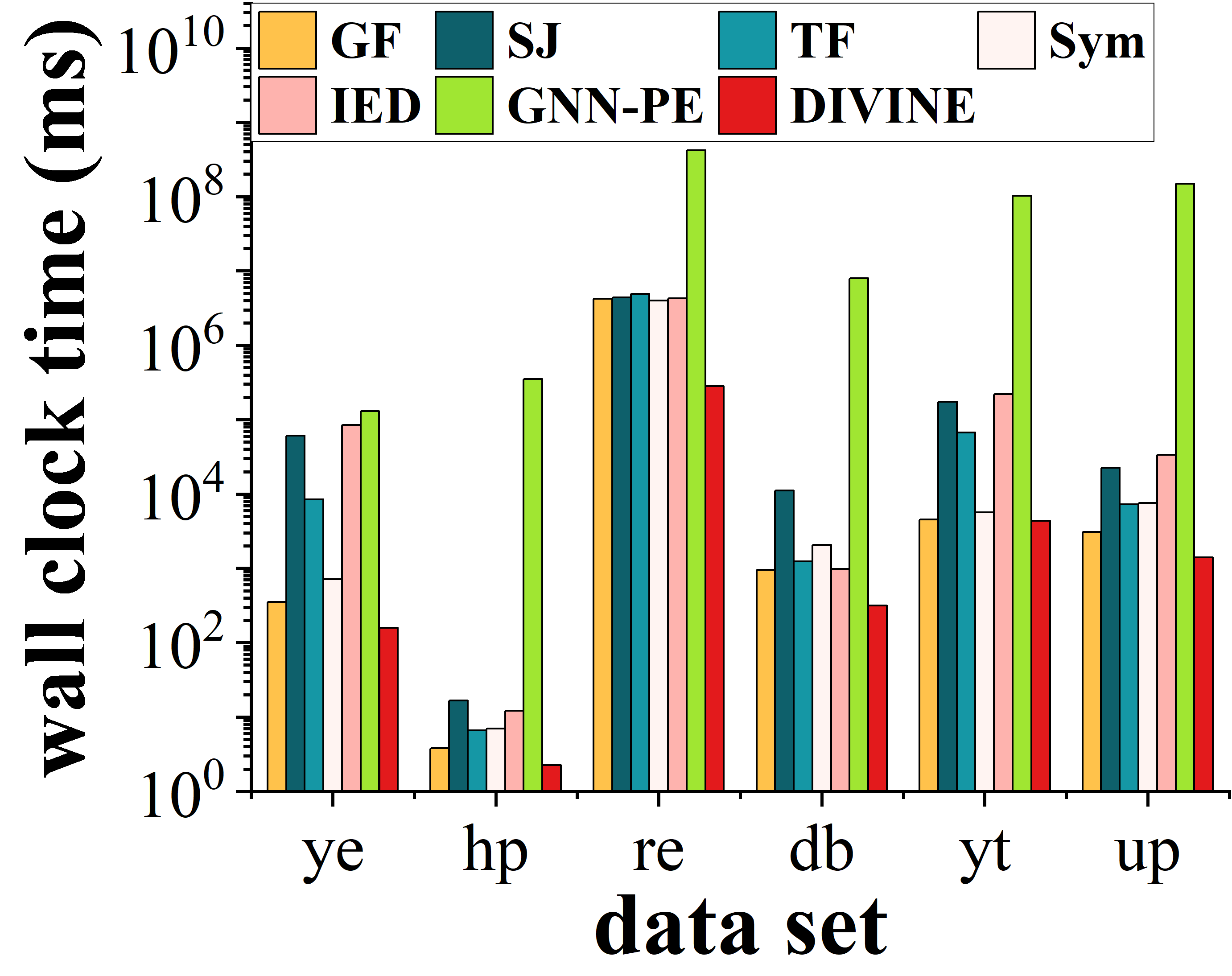}}\label{subfig:ins_real_30}}
\\
\subfigure[][{\footnotesize real-world graphs ($Ins_{40\%}$)}]{                    
\scalebox{0.15}[0.15]{\includegraphics{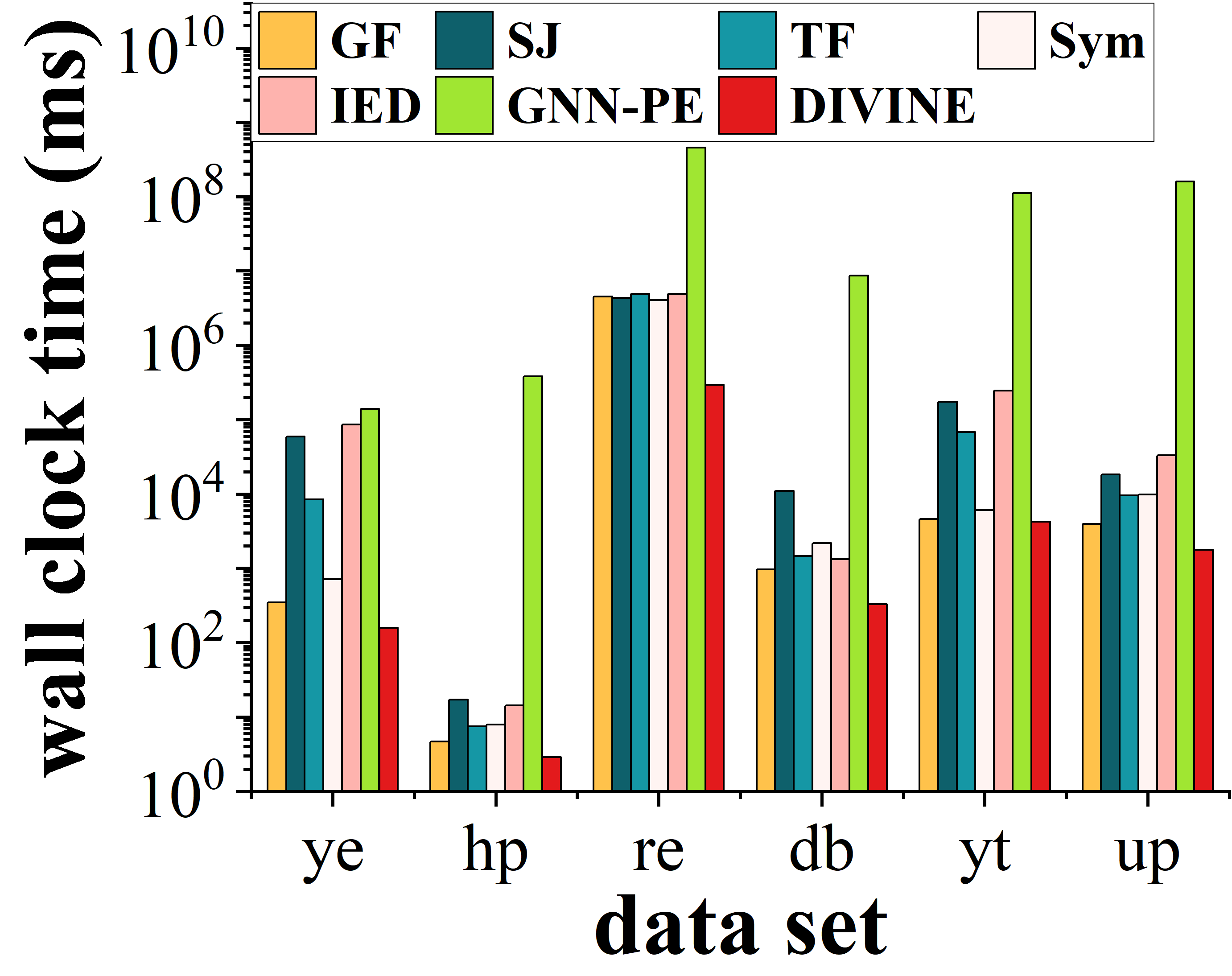}}\label{subfig:ins_real_40}}
\qquad
\subfigure[][{\footnotesize real-world graphs ($Ins_{50\%}$)}]{                    
\scalebox{0.15}[0.15]{\includegraphics{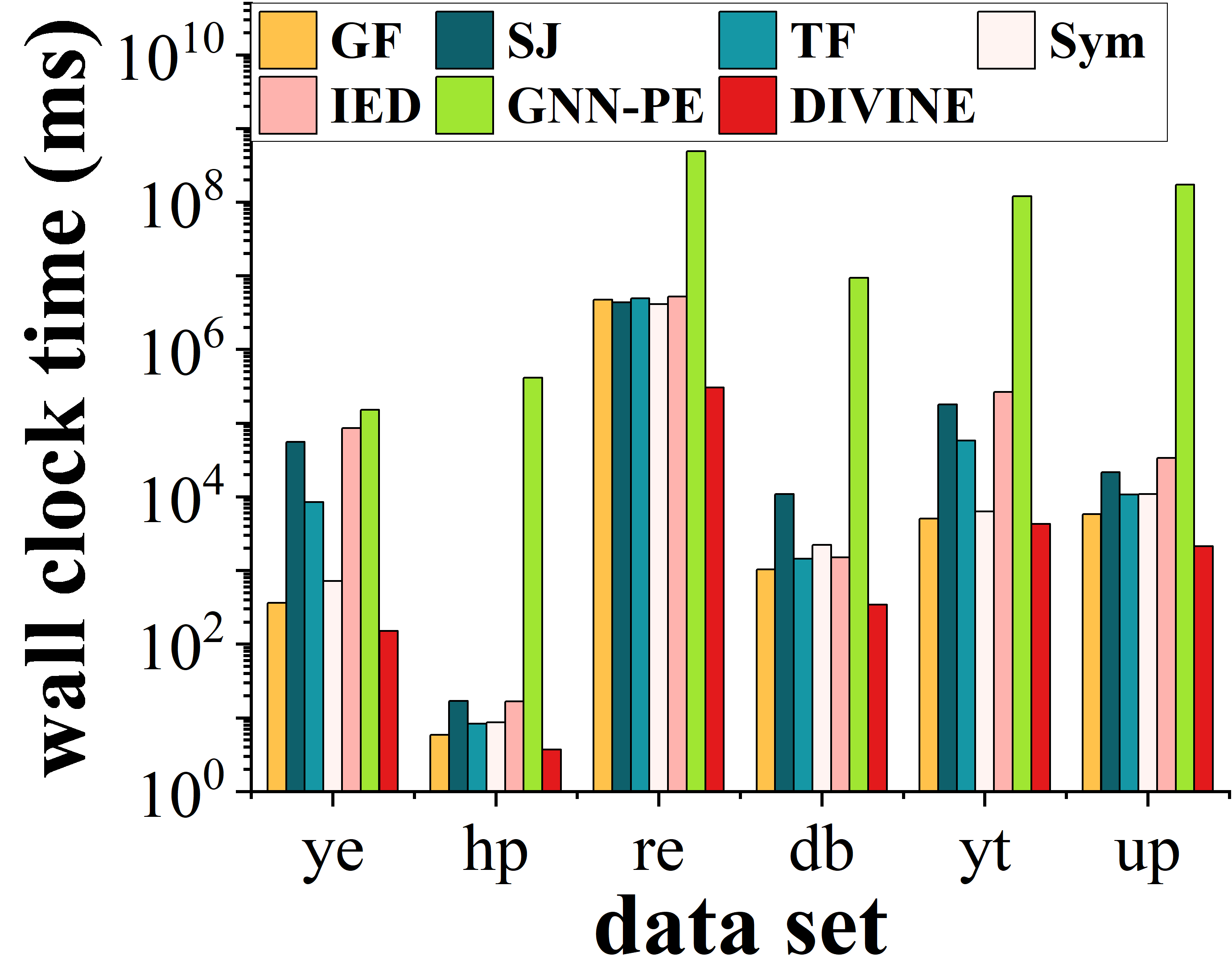}}\label{subfig:ins_real_50}}
\\
\subfigure[][{\footnotesize synthetic graphs ($Ins_{10\%}$)}]{
\scalebox{0.15}[0.15]{\includegraphics{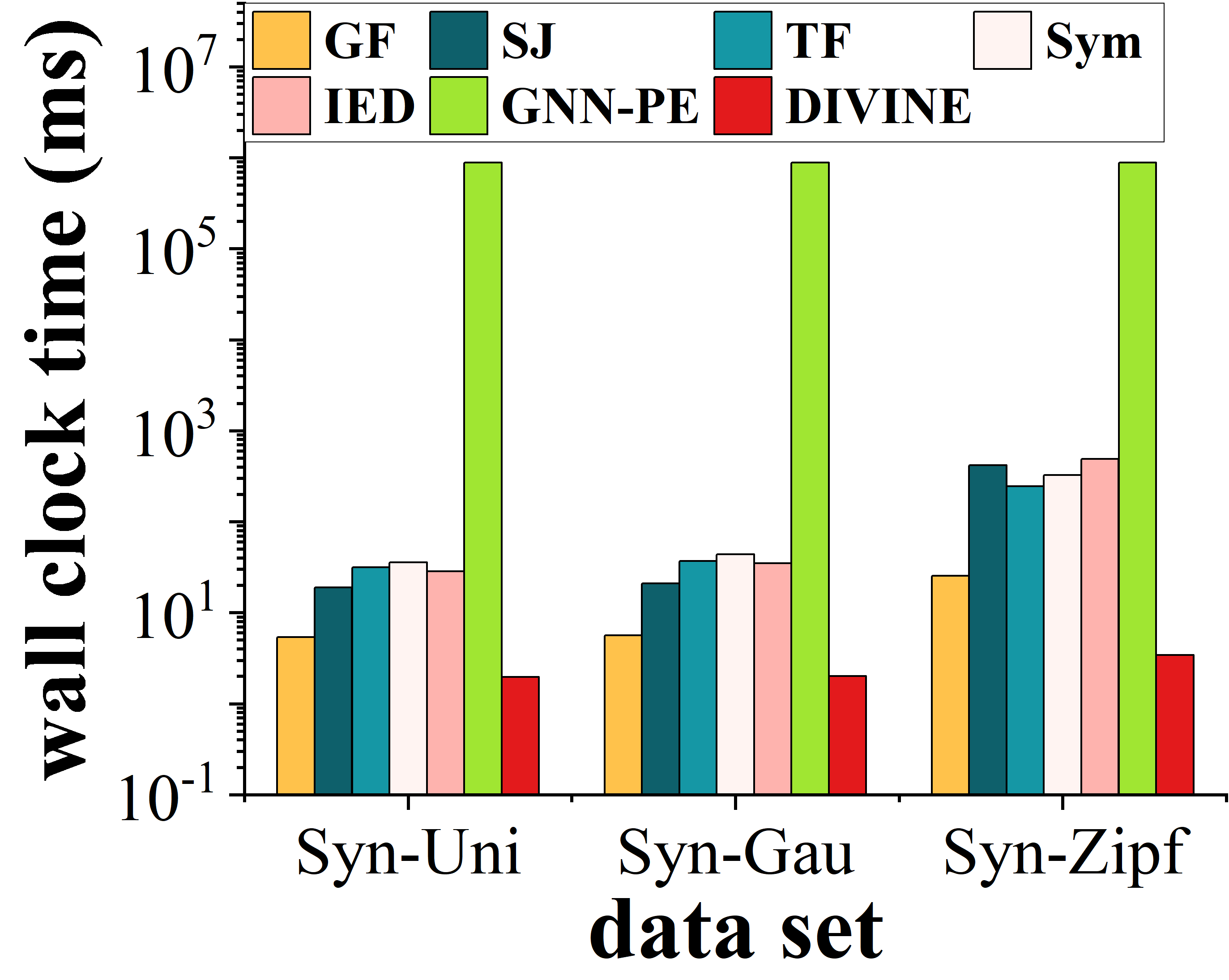}}\label{subfig:dsm_syn}}
\qquad
\subfigure[][{\footnotesize synthetic graphs ($Ins_{20\%}$)}]{                    
\scalebox{0.15}[0.15]{\includegraphics{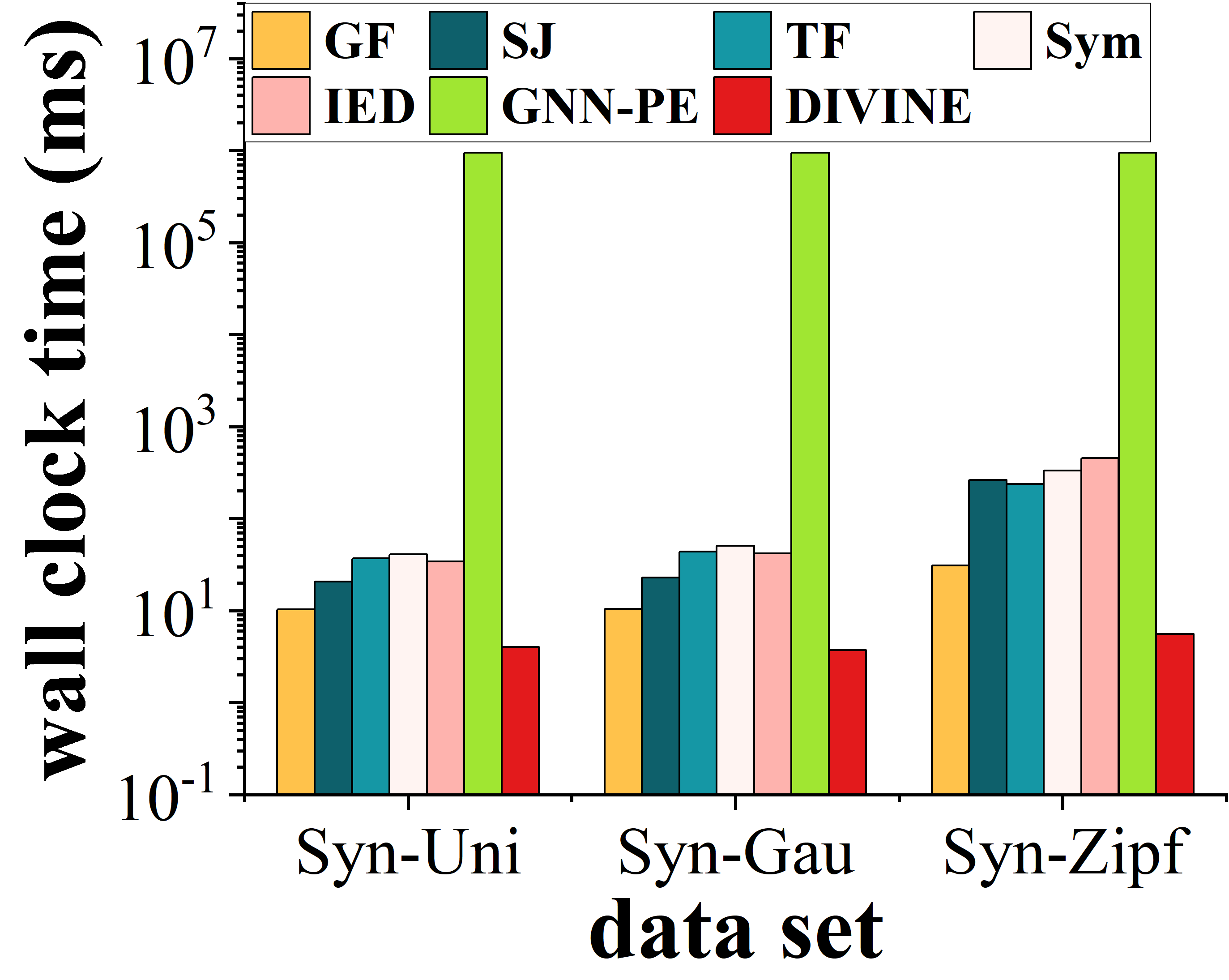}}\label{subfig:ins_syn_20}}
\qquad
\subfigure[][{\footnotesize synthetic graphs ($Ins_{30\%}$)}]{                    
\scalebox{0.15}[0.15]{\includegraphics{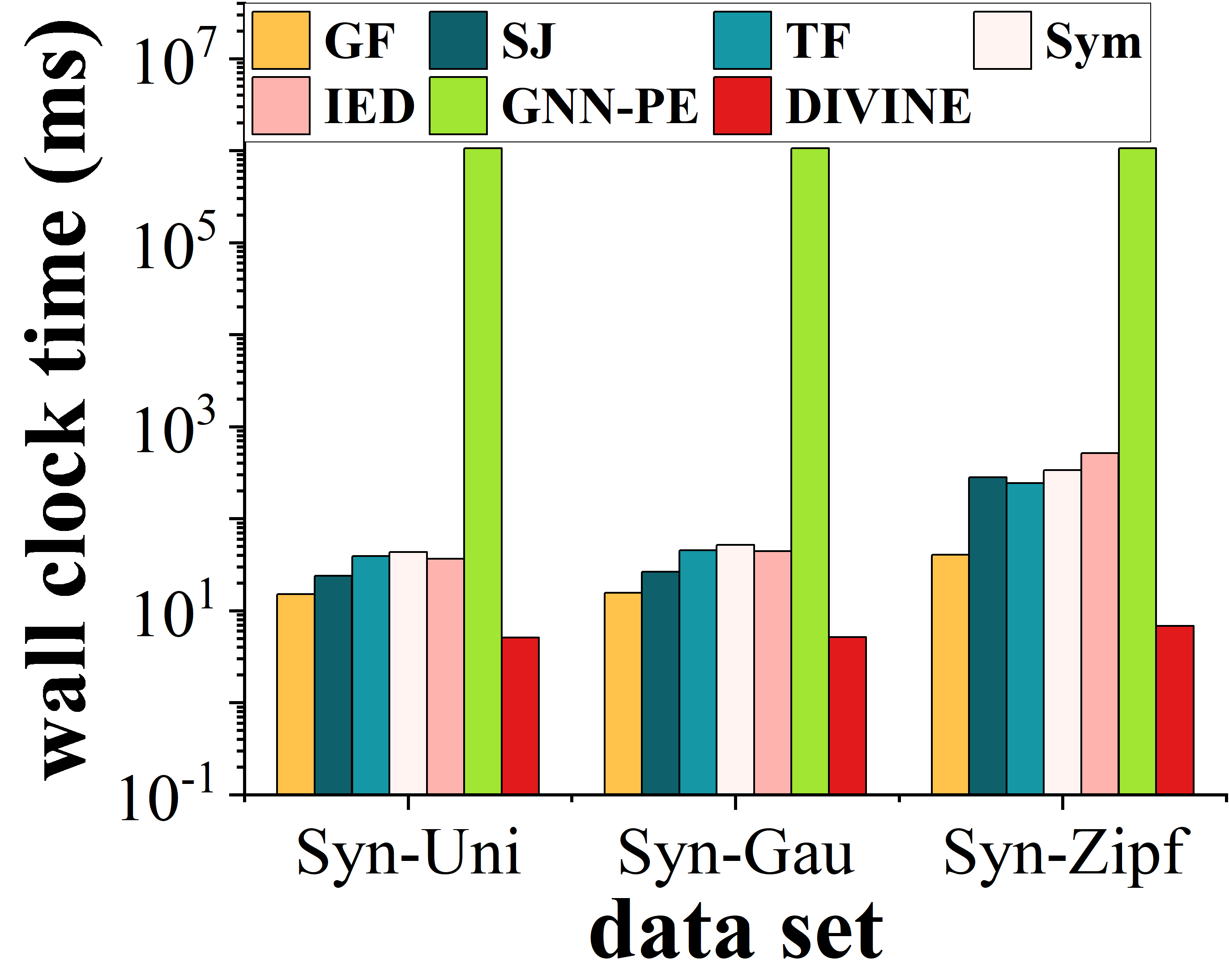}}\label{subfig:ins_syn_30}}
\\
\subfigure[][{\footnotesize synthetic graphs ($Ins_{40\%}$)}]{                    
\scalebox{0.15}[0.15]{\includegraphics{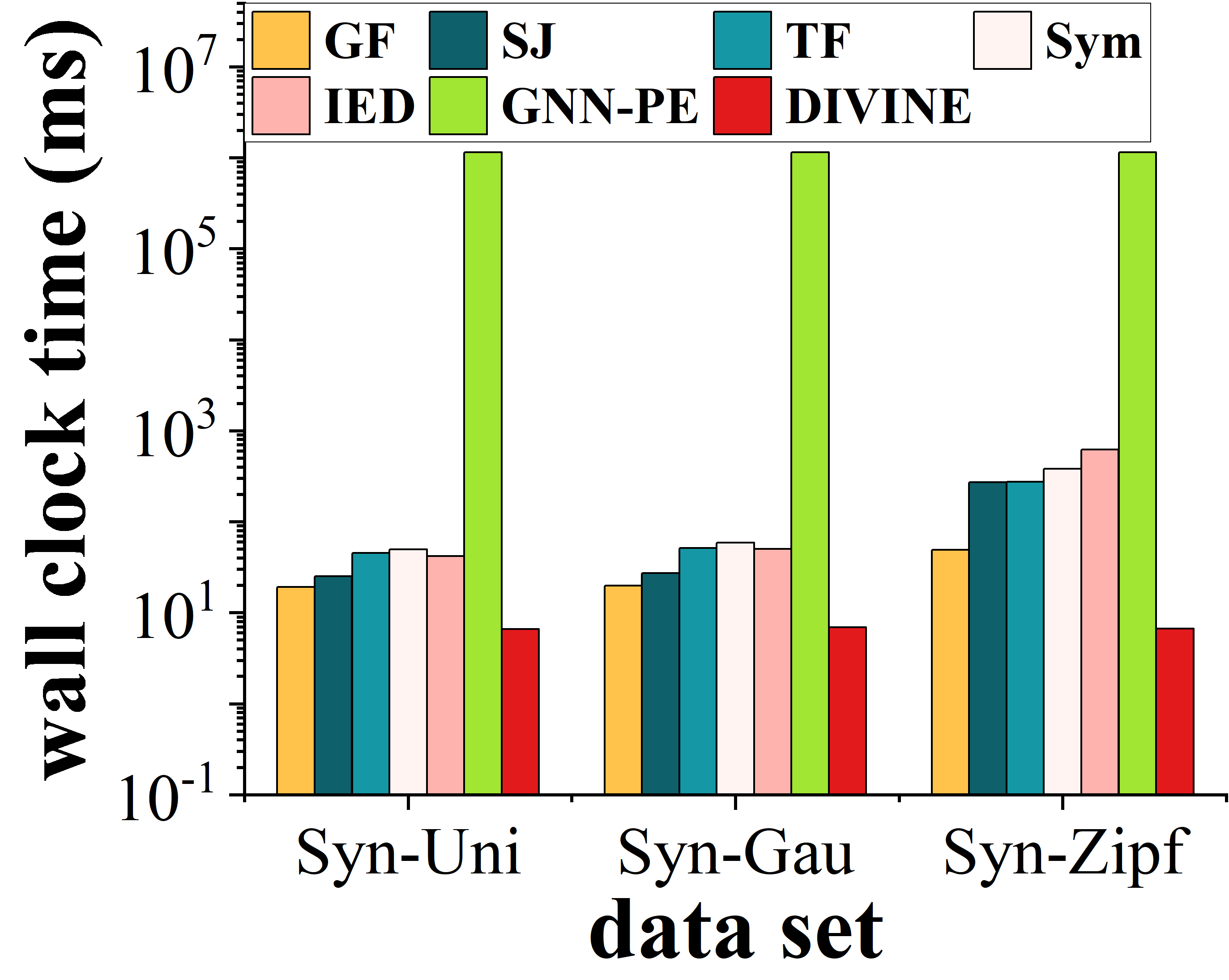}}\label{subfig:ins_syn_40}}
\qquad
\subfigure[][{\footnotesize synthetic graphs ($Ins_{50\%}$)}]{                    
\scalebox{0.15}[0.15]{\includegraphics{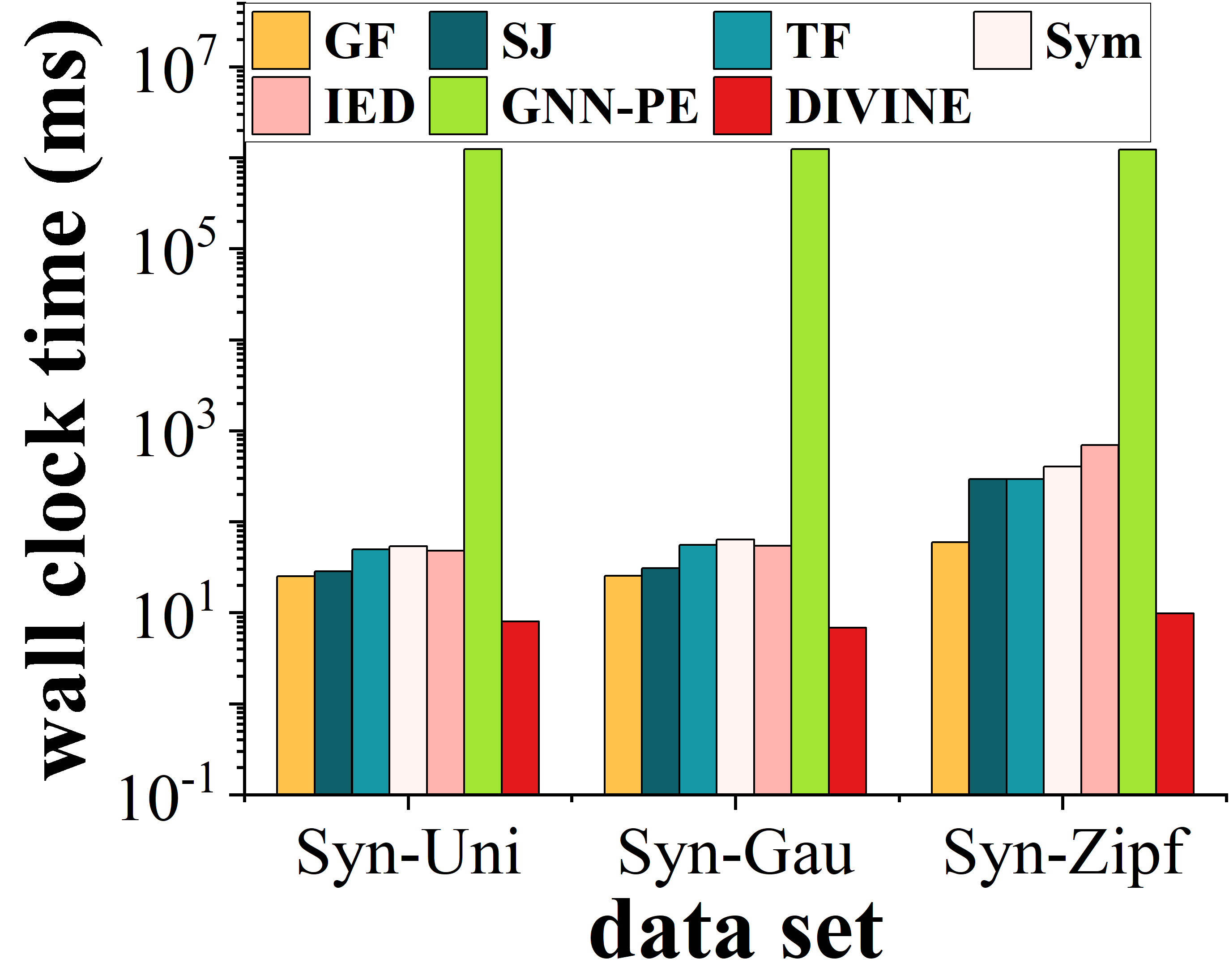}}\label{subfig:ins_syn_50}}
\caption{The DIVINE efficiency on edge-insertion-only real/synthetic graphs, compared with baseline methods.}
\label{fig:dsm_vs_graphs}
\end{figure}

\noindent {\bf The DIVINE Efficiency on Edge-Insertion-Only Real/Synthetic Graphs:}
Figure~\ref{fig:dsm_vs_graphs} compares the efficiency of our DIVINE approach with that of 5 state-of-the-art baseline methods by varying the insertion ratio from 10\% to 50\% over both real and synthetic graphs, where all other parameters are set to default values.
For GNN-PE over static graphs \cite{ye2024efficient}, due to the extremely high cost of GNN re-training, we cannot run the entire edge update stream, thus, we estimate the expected time cost for a period of consecutive (sampled) timestamps (with GNN re-training at least once).
In Figures \ref{subfig:dsm_real} and \ref{subfig:dsm_syn} with the default insertion ratio 10\%, we can see that our DIVINE approach outperforms baseline methods mostly by 1-5 orders of magnitude. 
Note that DIVINE and all baselines require a higher time cost on Reddit (re) than other real/synthetic graphs, due to its high degree (i.e., $avg\_deg(G)=491.98$). Moreover, DIVINE performs better than GNN-PE by 3-5 orders of magnitude, since even one-time GNN re-training in GNN-PE incurs a much higher time cost than our DIVINE approach.

For other subfigures with different insertion ratios from 20\% to 50\%, we can see similar experimental results over both real and synthetic graphs. Although the total query time will increase as the number of inserted edges increases, our DIVINE approach can achieve performance that is up to 1-5 orders of magnitude higher than baseline methods.

\begin{figure}[t]
\centering
\subfigure[][{\footnotesize real-world graphs ($Del_{10\%}$)}]{                    
\scalebox{0.15}[0.15]{\includegraphics{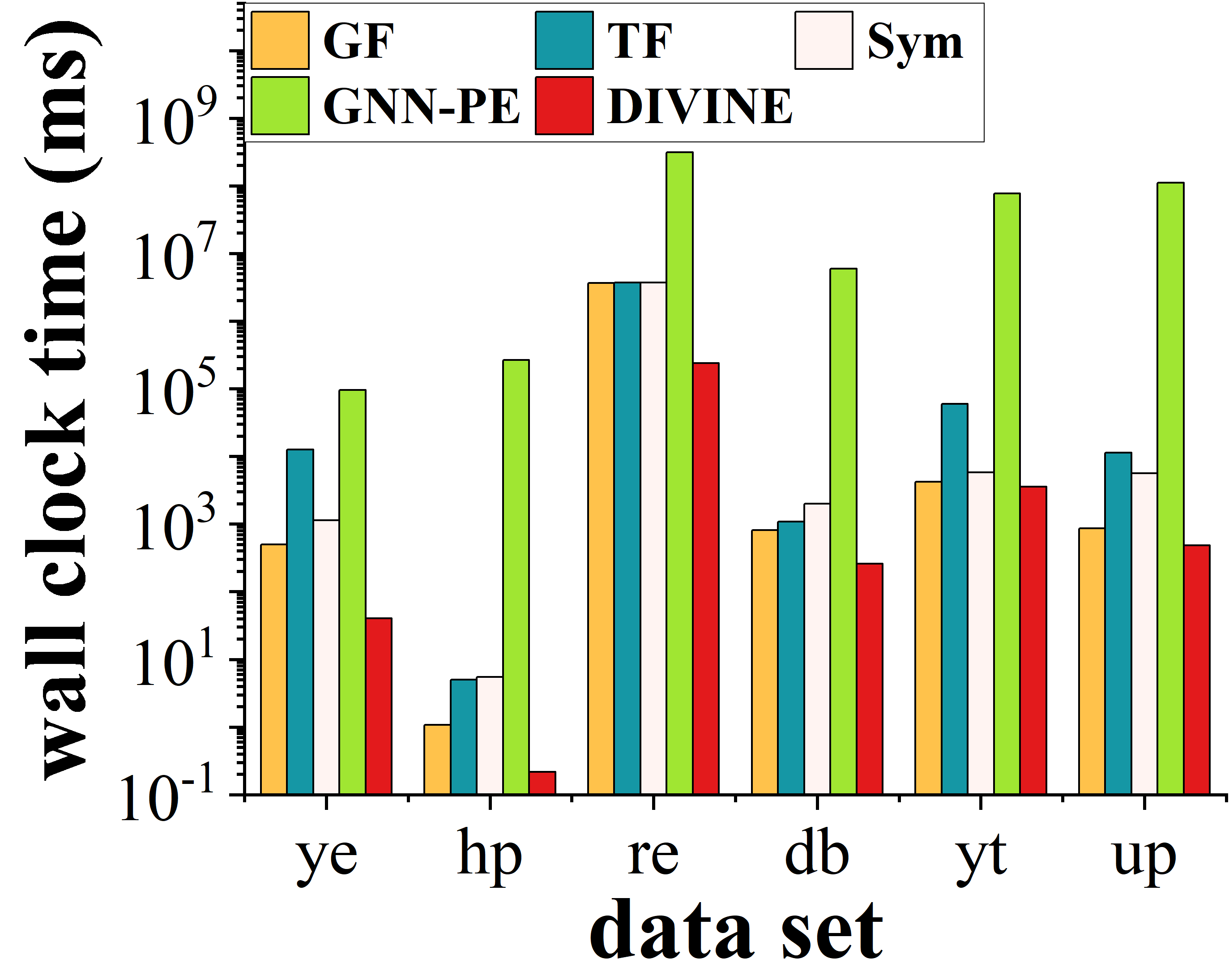}}\label{subfig:dsm_del_real}}
\qquad
\subfigure[][{\footnotesize real-world graphs ($Del_{20\%}$)}]{                    
\scalebox{0.15}[0.15]{\includegraphics{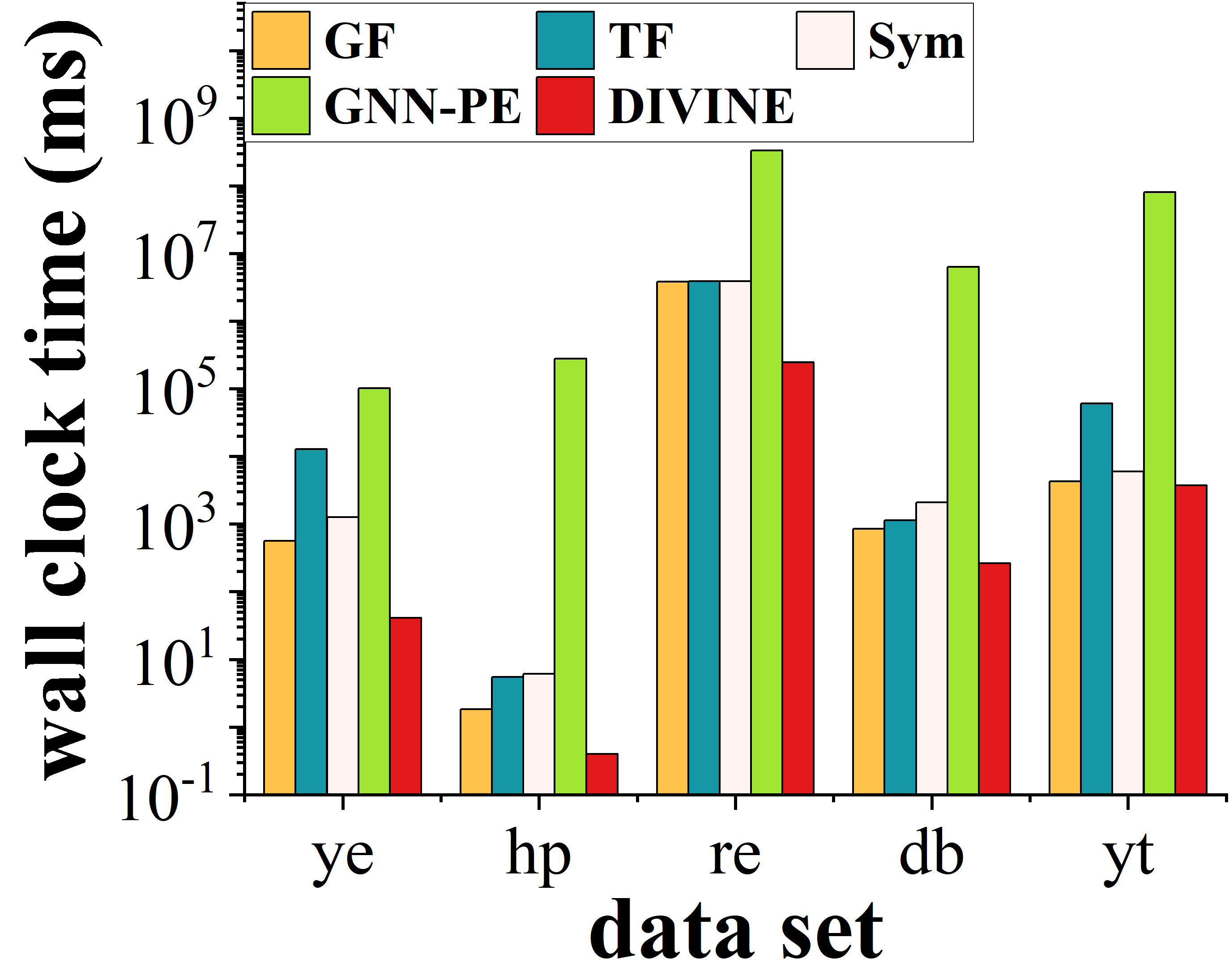}}\label{subfig:del_real_20}}
\qquad
\subfigure[][{\footnotesize real-world graphs ($Del_{30\%}$)}]{                    
\scalebox{0.15}[0.15]{\includegraphics{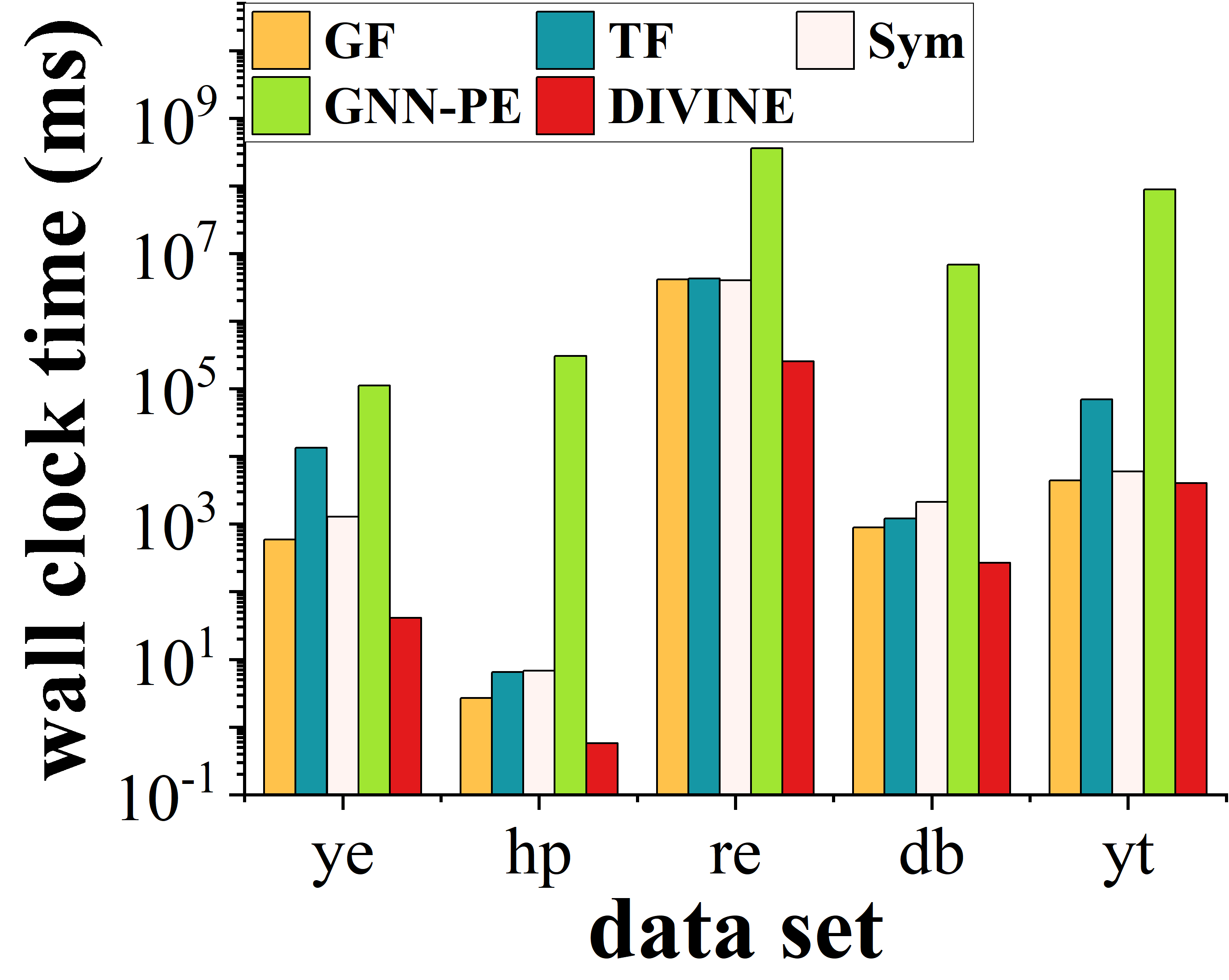}}\label{subfig:del_real_30}}
\\
\subfigure[][{\footnotesize real-world graphs ($Del_{40\%}$)}]{                    
\scalebox{0.15}[0.15]{\includegraphics{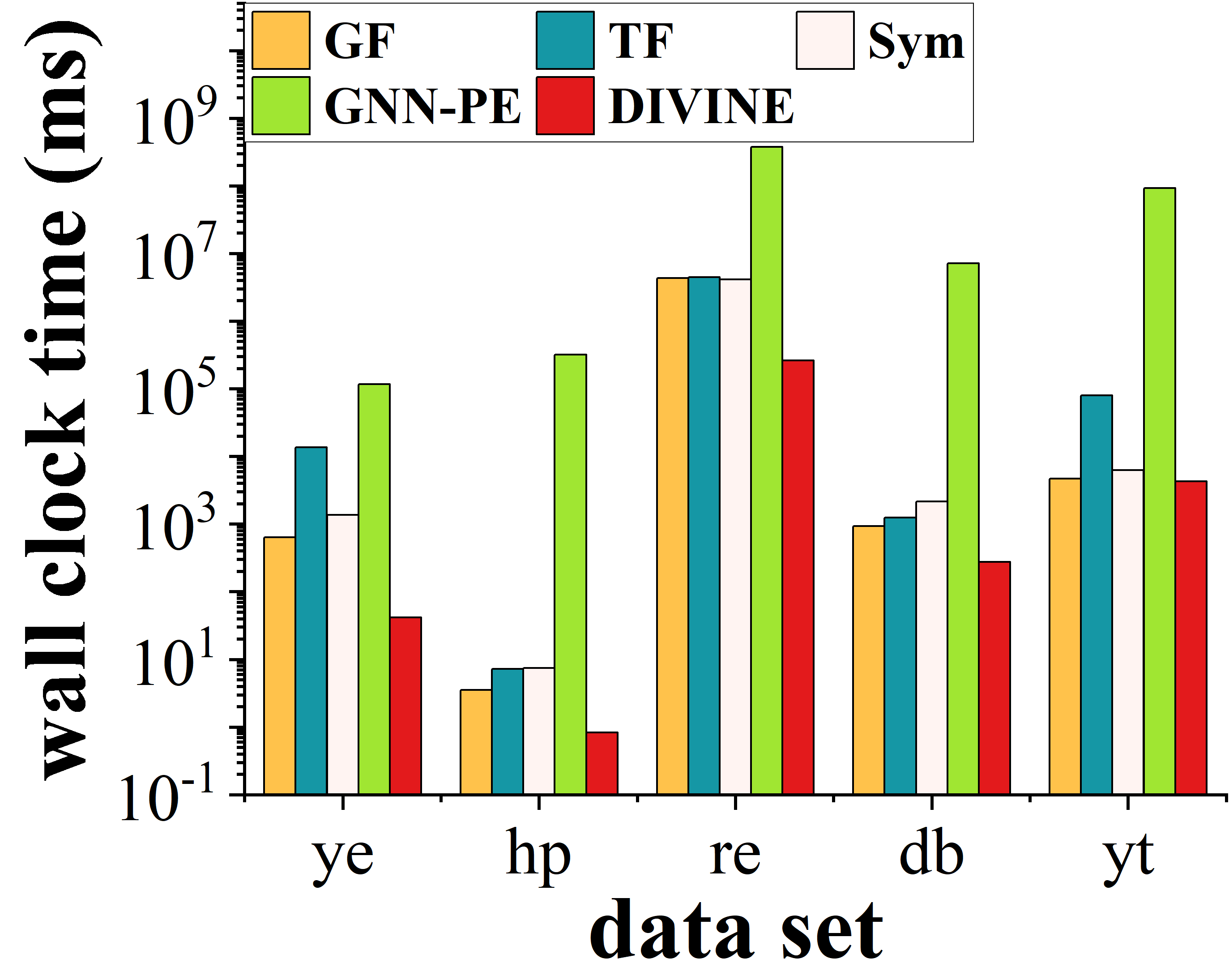}}\label{subfig:del_real_40}}
\qquad
\subfigure[][{\footnotesize real-world graphs ($Del_{50\%}$)}]{                    
\scalebox{0.15}[0.15]{\includegraphics{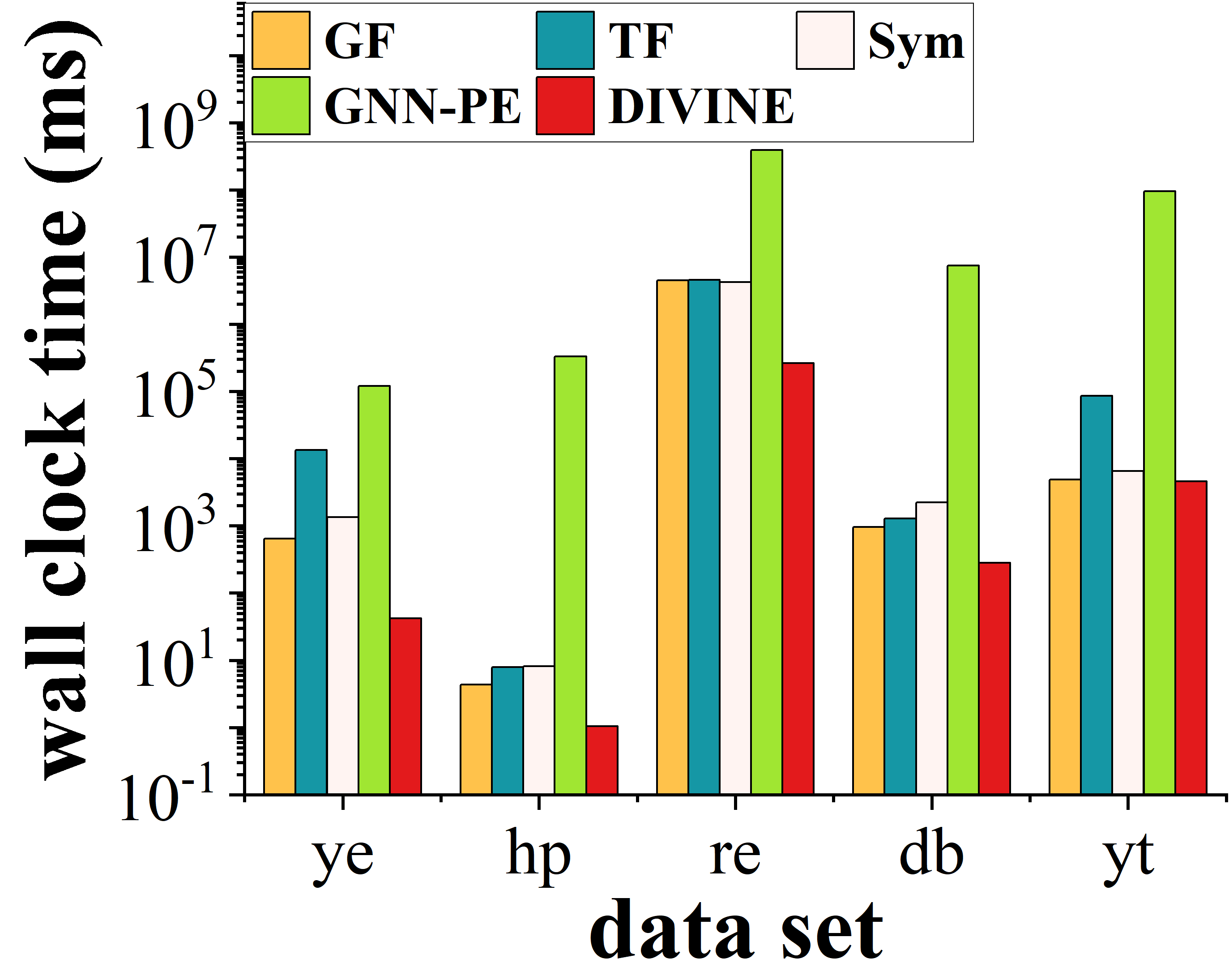}}\label{subfig:del_real_50}}
\\
\subfigure[][{\footnotesize synthetic graphs ($Del_{10\%}$)}]{
\scalebox{0.15}[0.15]{\includegraphics{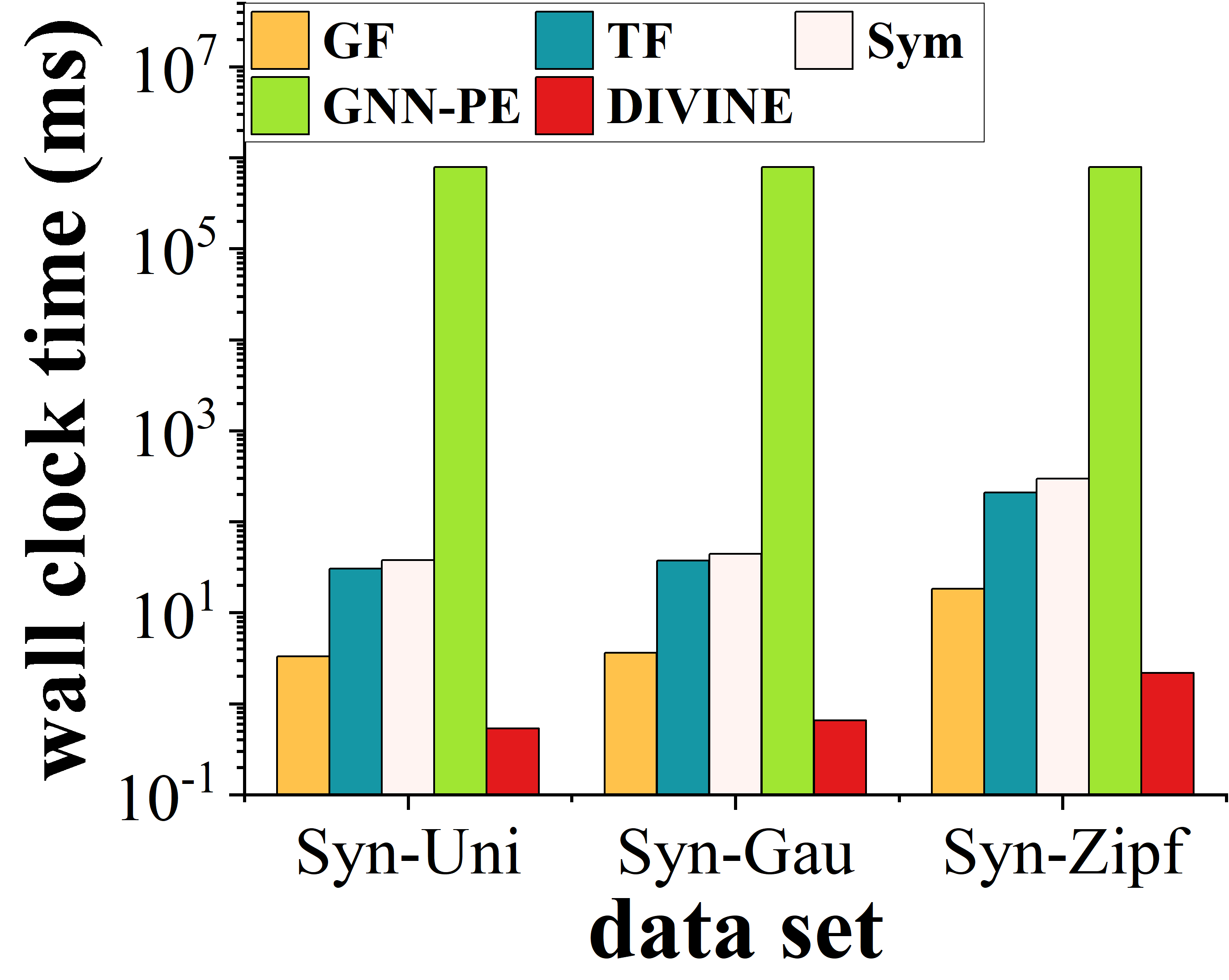}}\label{subfig:dsm_del_syn}}
\qquad
\subfigure[][{\footnotesize synthetic graphs ($Del_{20\%}$)}]{                    
\scalebox{0.15}[0.15]{\includegraphics{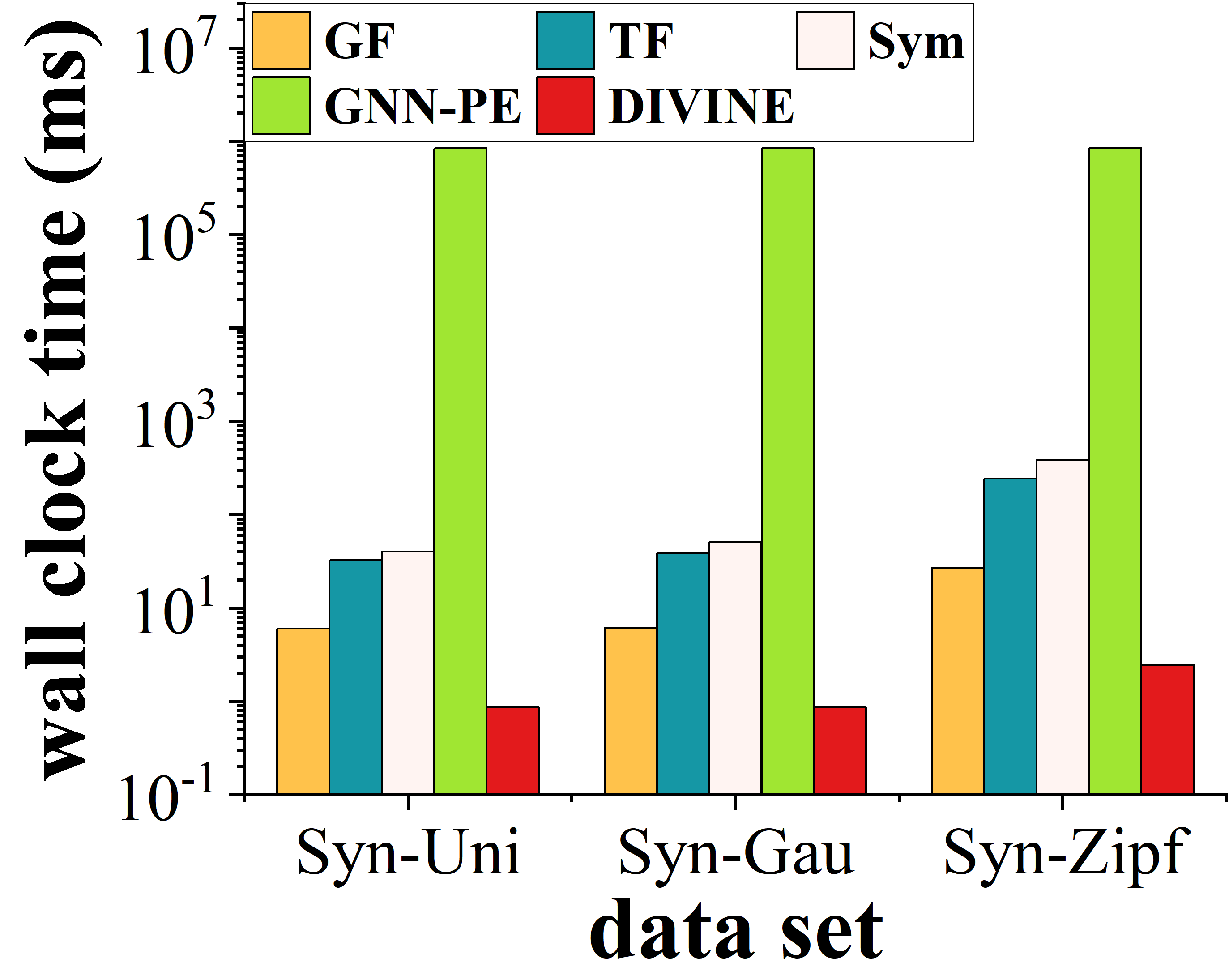}}\label{subfig:del_syn_20}}
\qquad
\subfigure[][{\footnotesize synthetic graphs ($Del_{30\%}$)}]{                    
\scalebox{0.15}[0.15]{\includegraphics{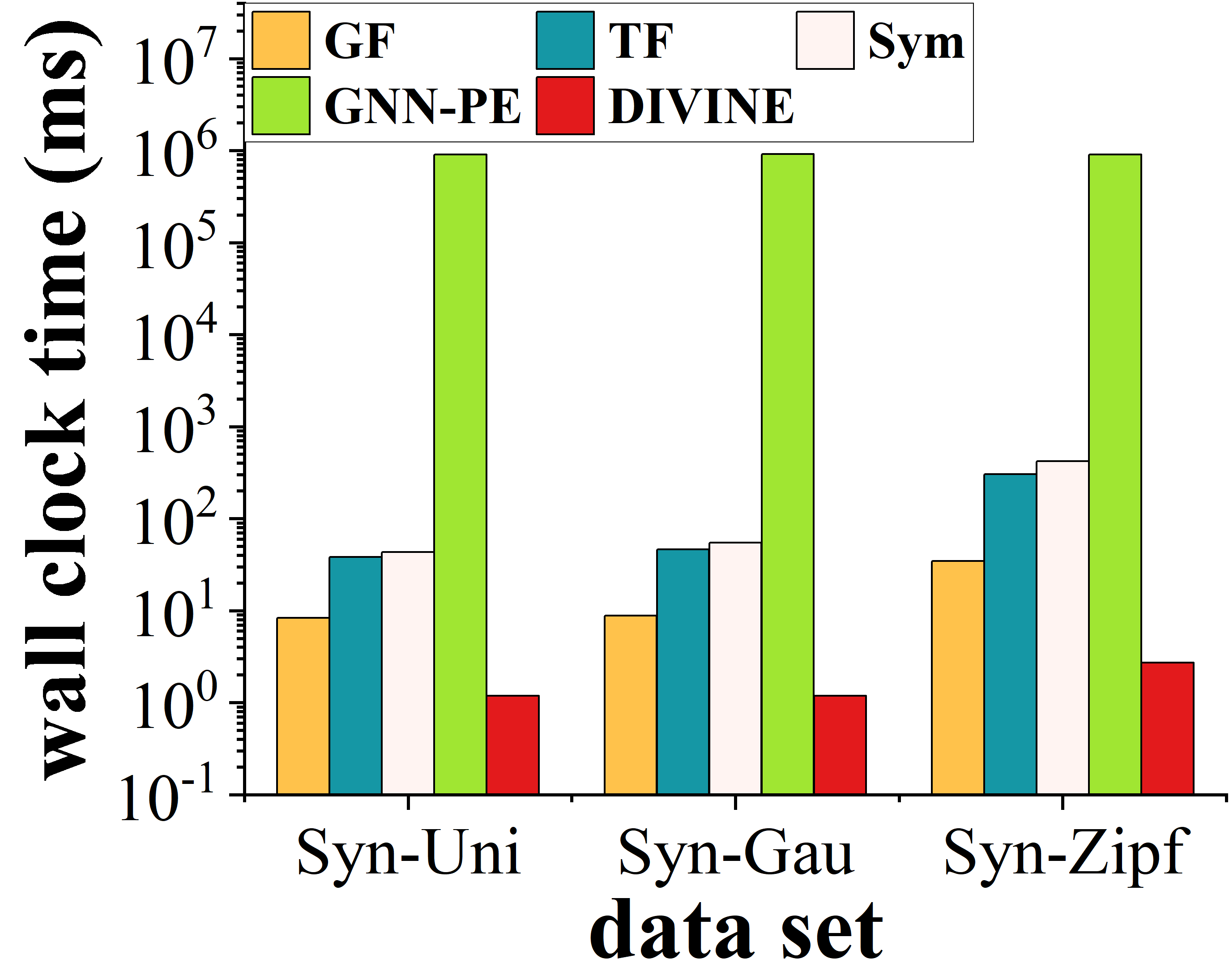}}\label{subfig:del_syn_30}}
\\
\subfigure[][{\footnotesize synthetic graphs ($Del_{40\%}$)}]{                    
\scalebox{0.15}[0.15]{\includegraphics{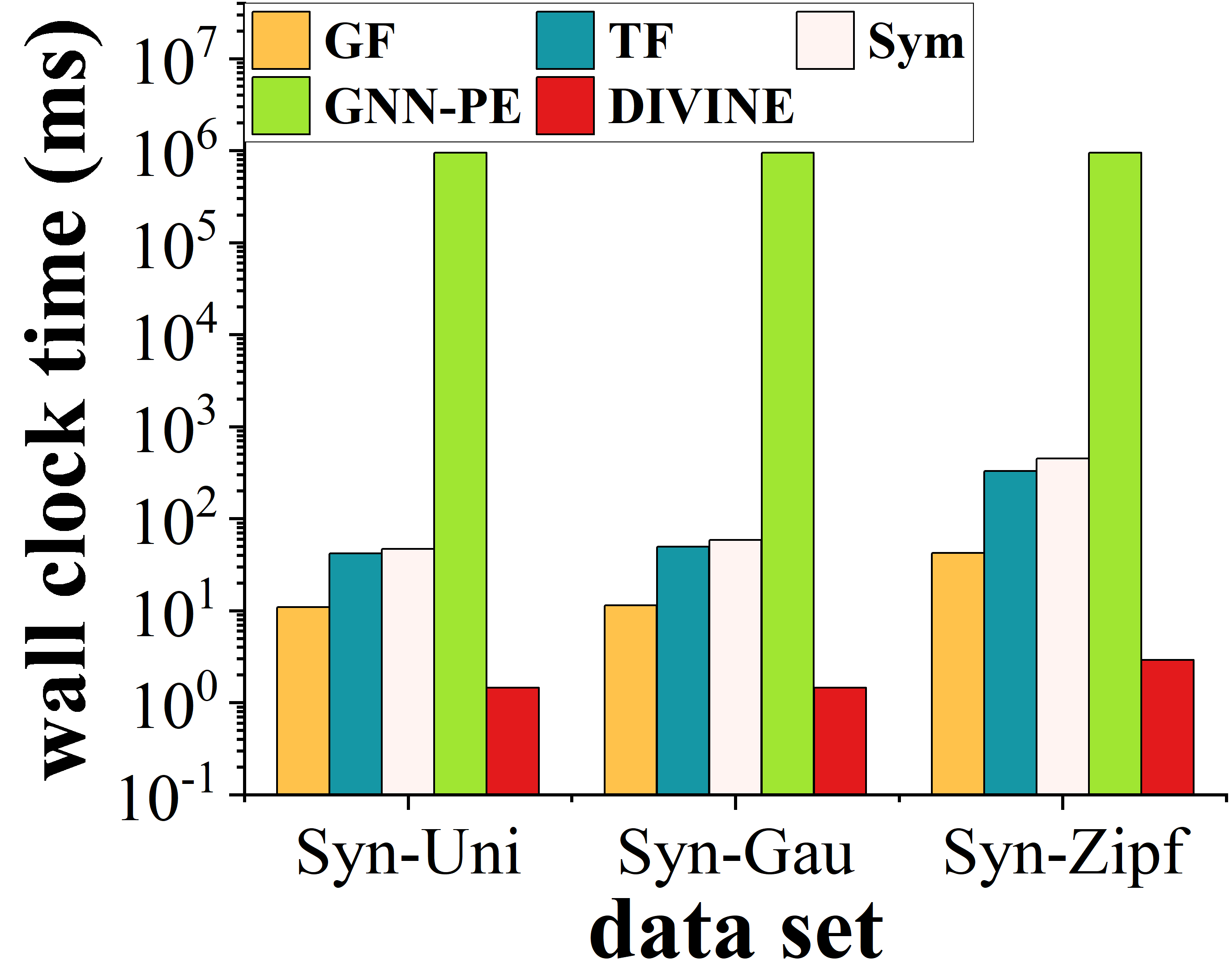}}\label{subfig:del_syn_40}}
\qquad
\subfigure[][{\footnotesize synthetic graphs ($Del_{50\%}$)}]{                    
\scalebox{0.15}[0.15]{\includegraphics{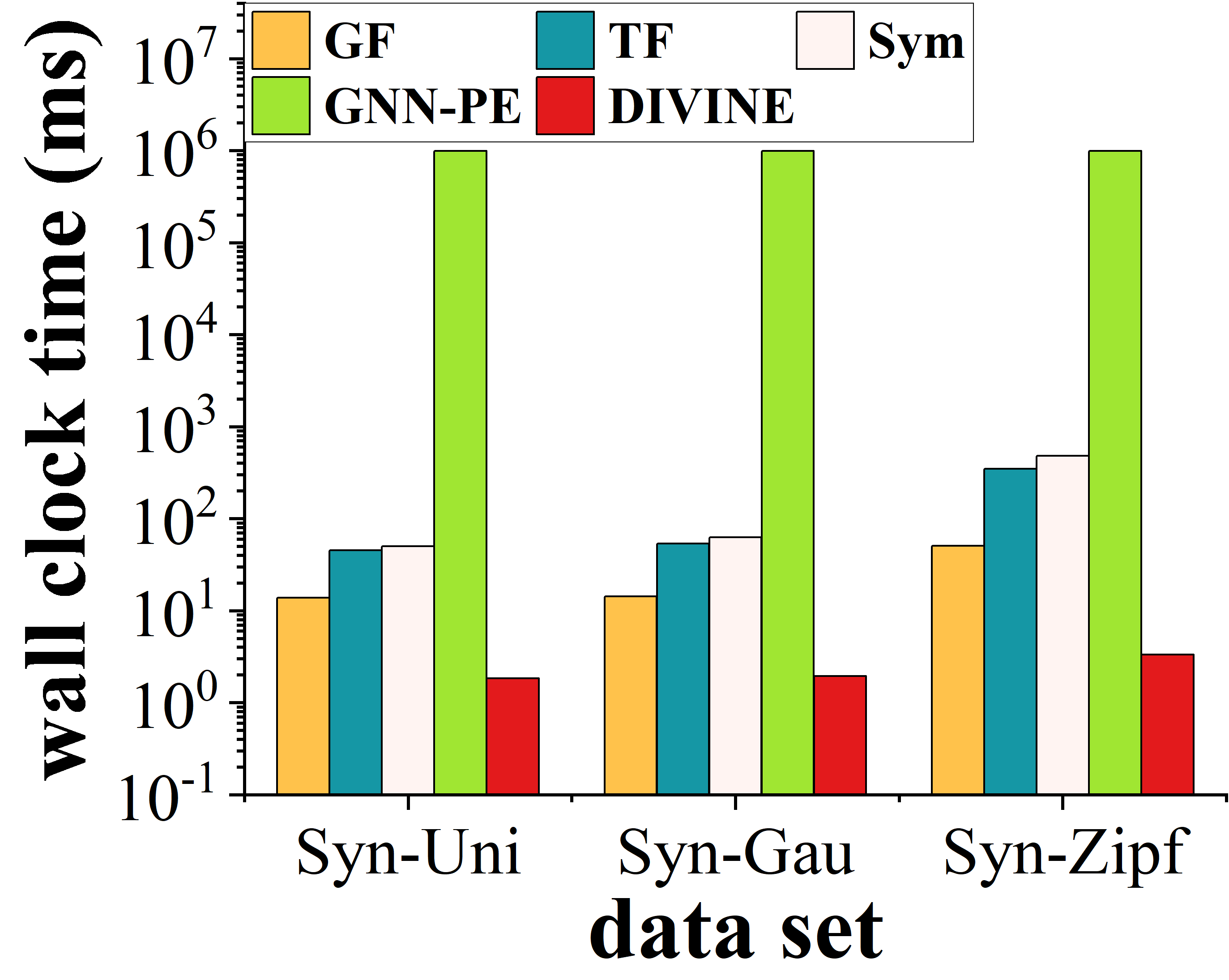}}\label{subfig:del_syn_50}}
\caption{The DIVINE efficiency on edge-deletion-only real/synthetic graphs, compared with baseline methods.}
\label{fig:dsm_deletion}
\end{figure}

\noindent{\bf The DIVINE Efficiency on Edge-Deletion-Only Real/Synthetic Graphs:}
Since SJ and IED baselines do not support the edge deletion \cite{sun2022depth}, Figure~\ref{fig:dsm_deletion} only compares the efficiency of our DIVINE approach with that of GF, TF, SYM, and GNN-PE over real/synthetic graphs, where the deletion ratio varies from 10\% to 50\% and all other parameters are set to their default values.
From Figures \ref{subfig:dsm_del_real} and \ref{subfig:dsm_del_syn} with the default deletion ratio 10\%, our DIVINE approach can always outperform the four baseline methods. 
Overall, for all real/synthetic data sets (even for $re$ with 5.73$M$ edge deletions), the wall clock time of our DIVINE approach remains the lowest, compared with baselines.

From other subfigures with different deletion ratios 20\% $\sim$ 50\%, we can find similar experimental results over both real and synthetic graphs, where our DIVINE approach can always outperform the four baseline methods by up to 1-5 orders of magnitude.

\begin{figure}[t]
\centering
\subfigure[][{\footnotesize real-world graphs ($Ins_{25\%}$)}]{                    
\scalebox{0.15}[0.15]{\includegraphics{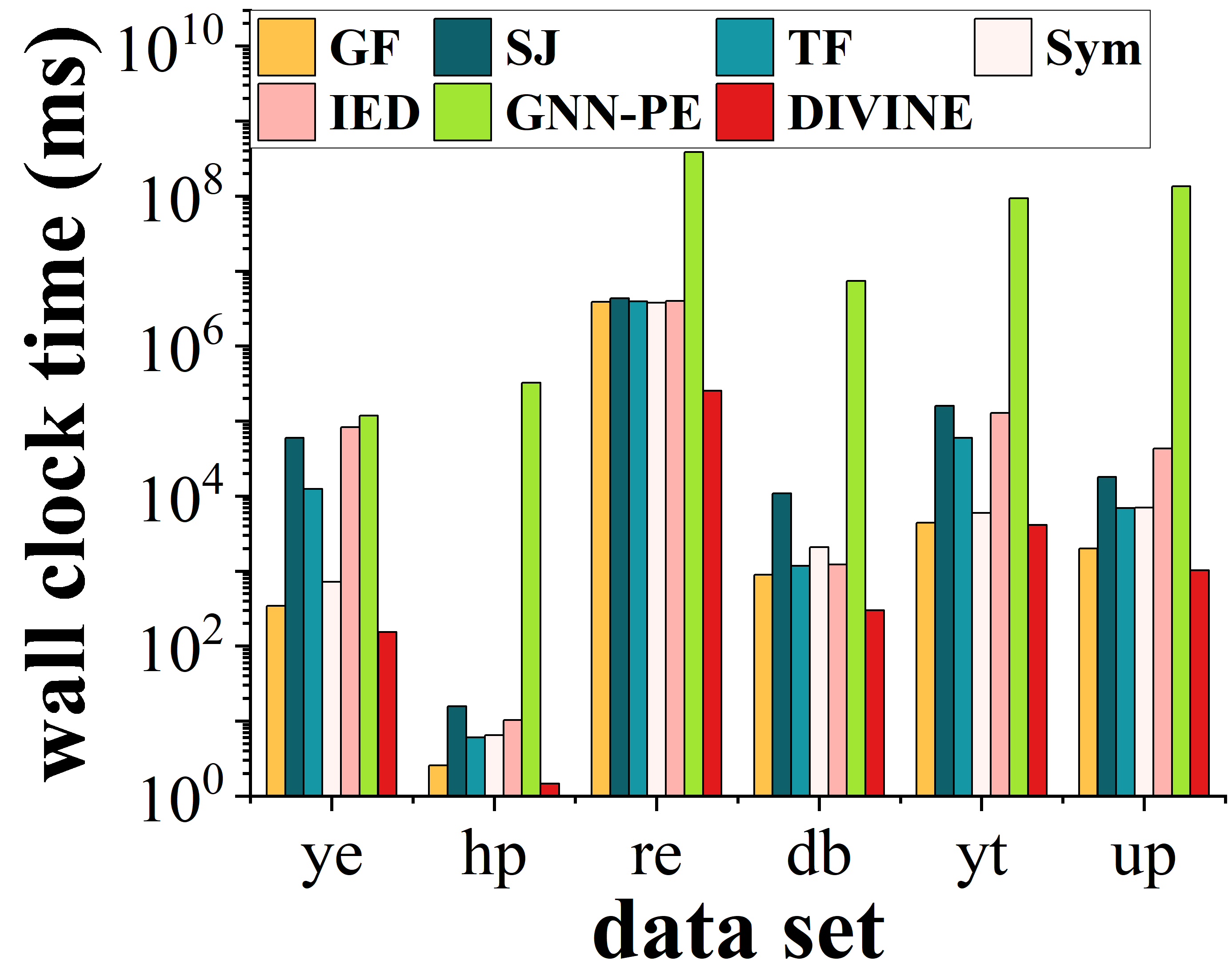}}\label{subfig:ins_real_25}}
\qquad
\subfigure[][{\footnotesize real-world graphs ($Ins_{50\%}$)}]{                    
\scalebox{0.15}[0.15]{\includegraphics{figures/ins-real-50.png}}\label{subfig:ins_real_50_1}}
\qquad
\subfigure[][{\footnotesize real-world graphs ($Del_{25\%}$)}]{                    
\scalebox{0.15}[0.15]{\includegraphics{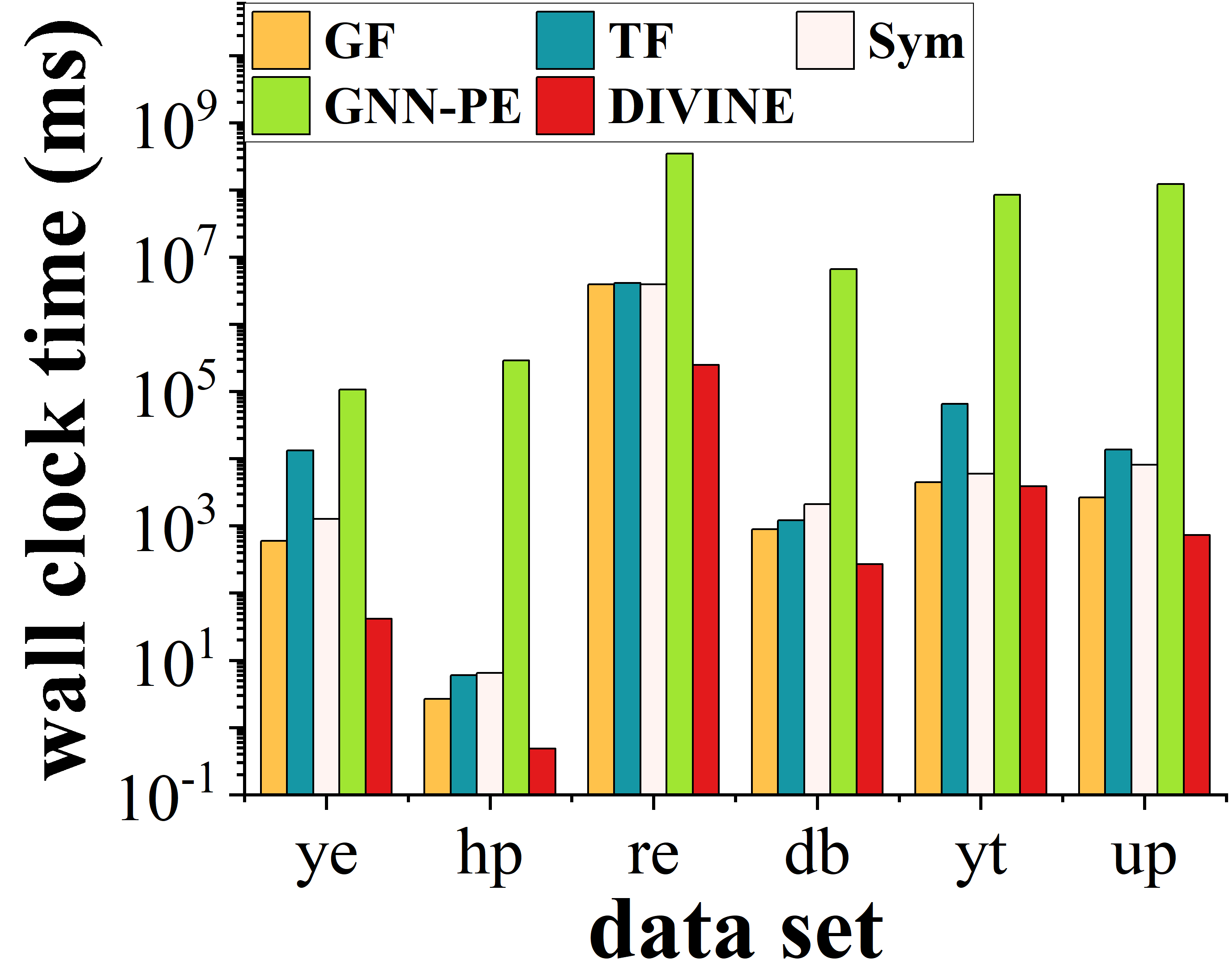}}\label{subfig:des_real_25}}
\\
\subfigure[][{\footnotesize real-world graphs ($Del_{50\%}$)}]{                    
\scalebox{0.15}[0.15]{\includegraphics{figures/des-real-50.png}}\label{subfig:des_real_50_1}}
\qquad
\subfigure[][{\footnotesize real-world graphs ($Ins_{300K}$)}]{                    
\scalebox{0.15}[0.15]{\includegraphics{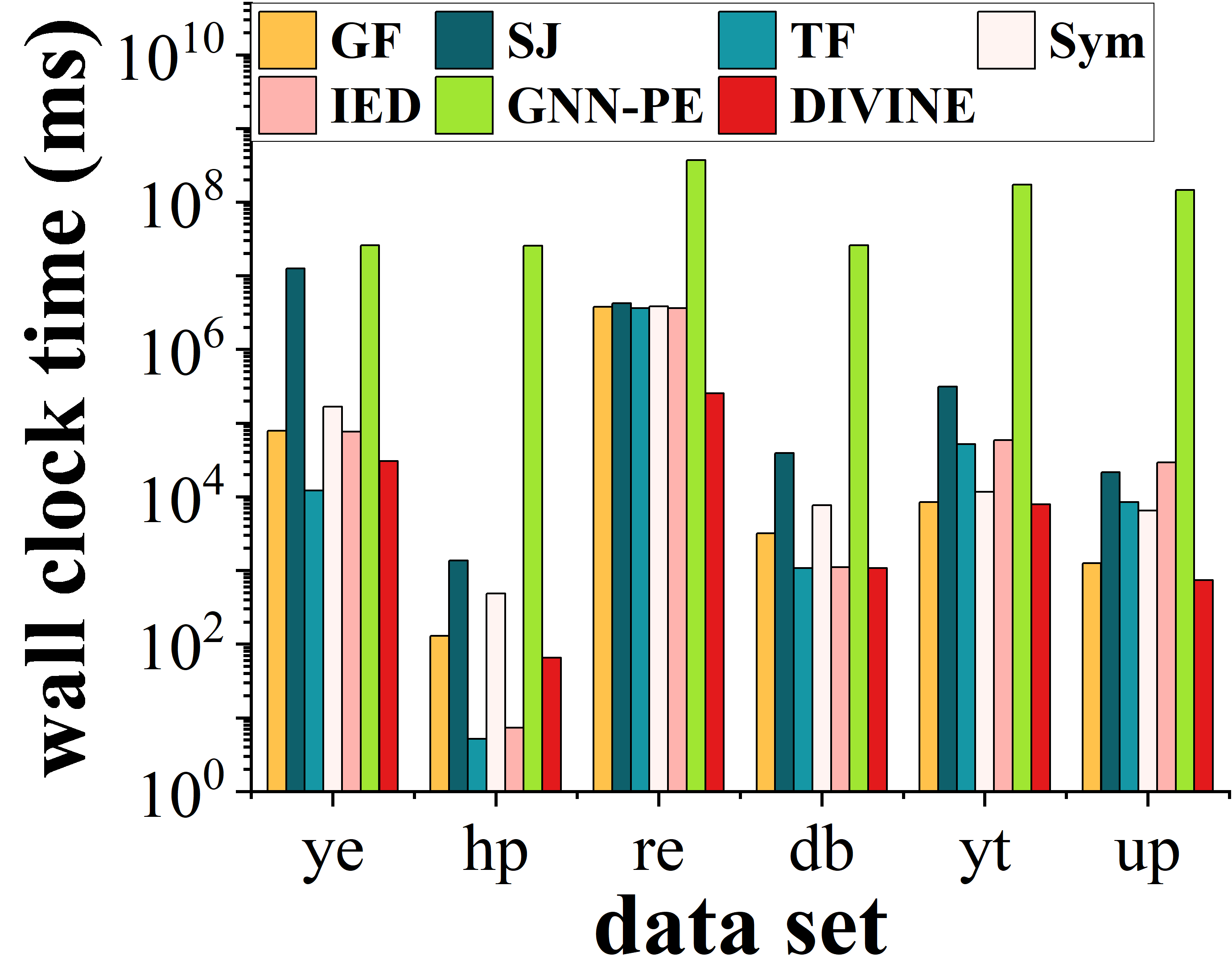}}\label{subfig:ins_real_300K}}
\\
\subfigure[][{\footnotesize synthetic graphs ($Ins_{25\%}$)}]{
\scalebox{0.15}[0.15]{\includegraphics{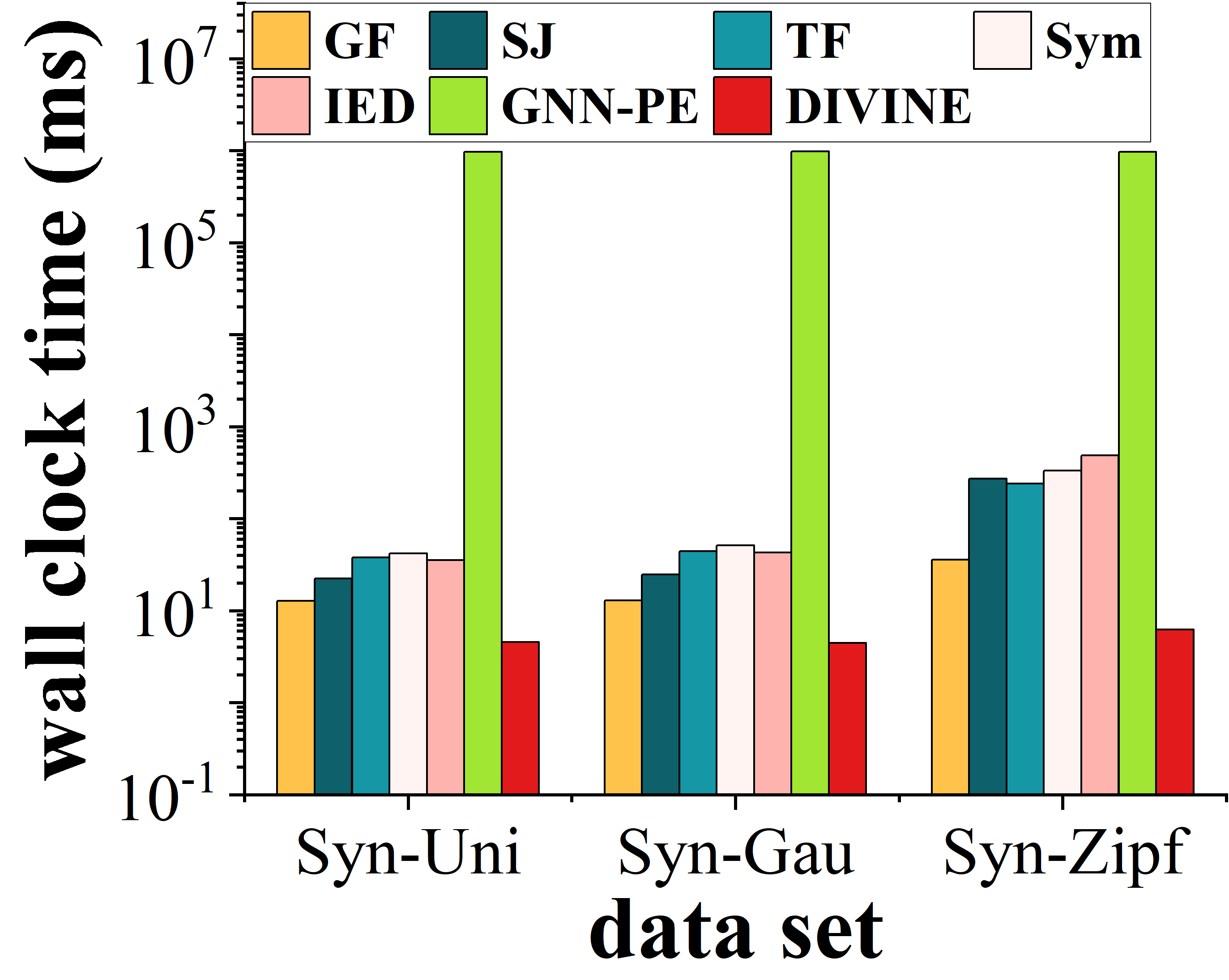}}\label{subfig:ins_syn_25}}
\qquad
\subfigure[][{\footnotesize synthetic graphs ($Ins_{50\%}$)}]{
\scalebox{0.15}[0.15]{\includegraphics{figures/ins-syn-50.png}}\label{subfig:ins_syn_50_1}}
\qquad
\subfigure[][{\footnotesize synthetic graphs ($Del_{25\%}$)}]{
\scalebox{0.15}[0.15]{\includegraphics{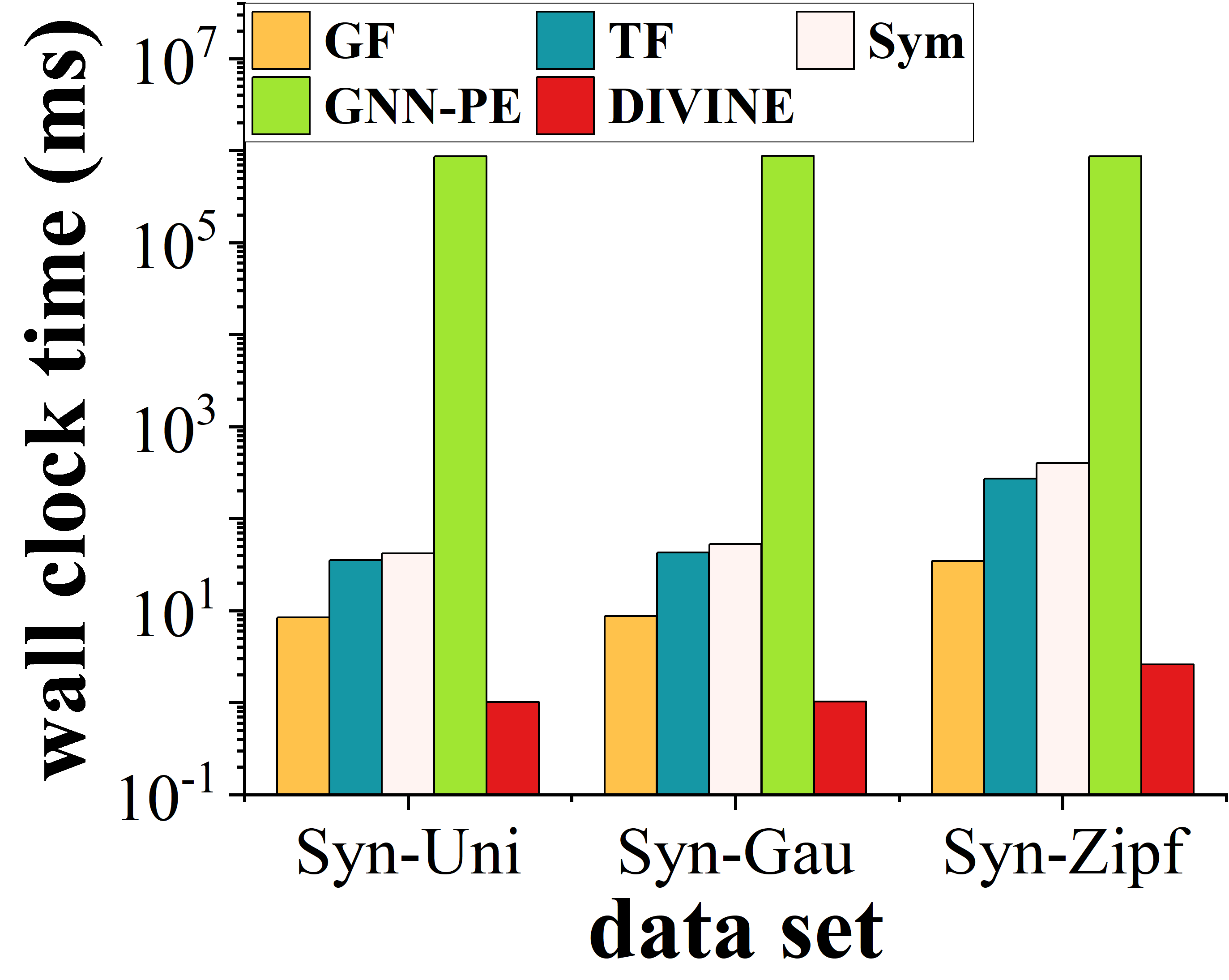}}\label{subfig:des_syn_25}}
\\
\subfigure[][{\footnotesize synthetic graphs ($Del_{50\%}$)}]{
\scalebox{0.15}[0.15]{\includegraphics{figures/des-syn-50.png}}\label{subfig:des_syn_50_1}}
\qquad
\subfigure[][{\footnotesize synthetic graphs ($Ins_{300K}$)}]{
\scalebox{0.15}[0.15]{\includegraphics{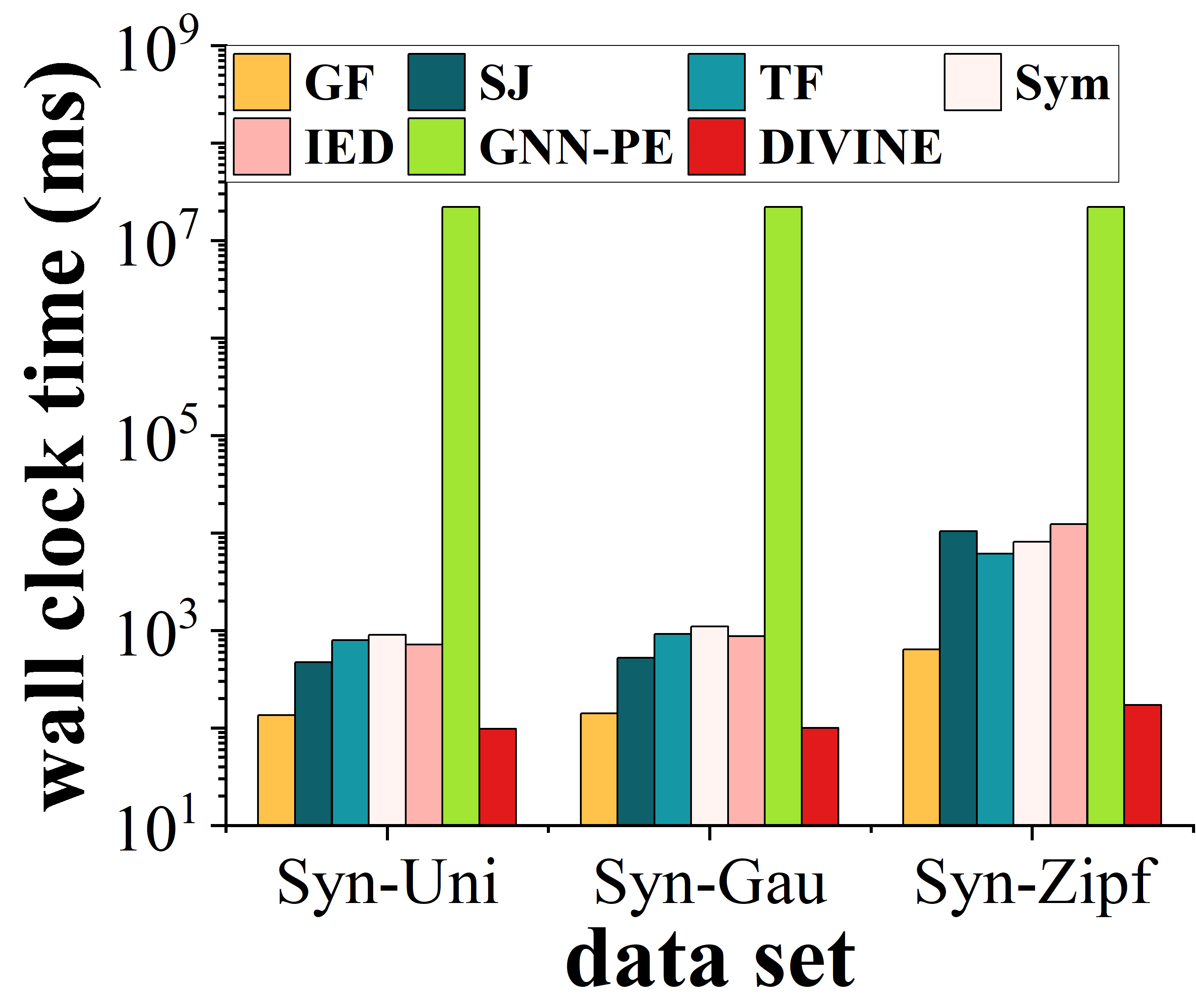}}\label{subfig:ins_syn_300K}}
\caption{The DIVINE efficiency with high/bursty edge update ratios, compared with baseline methods.}
\label{fig:dsm_update_ratio}
\end{figure}

\noindent{\bf The DIVINE Efficiency with Different Edge Update Ratios:}
Figure \ref{fig:dsm_update_ratio} shows the efficiency of our DIVINE approach and baselines over real/synthetic graphs, for high edge insertion rates ($Ins$): 25\% and 50\%, high edge deletion rates ($Del$): 25\% and 50\%, and high/bursty updates of $300K$ edge insertions ($Ins_{300K}$), where other parameters are set to default values. Similar to results with default 10\% edge insertion and deletion rates, respectively, our DIVINE approach always outperforms baselines by up to 1-5 orders of magnitude, for high or bursty edge update ratios.

\begin{figure}[t]
\centering
\subfigure[][{real-world graphs}]{                    
\scalebox{0.15}[0.15]{\includegraphics{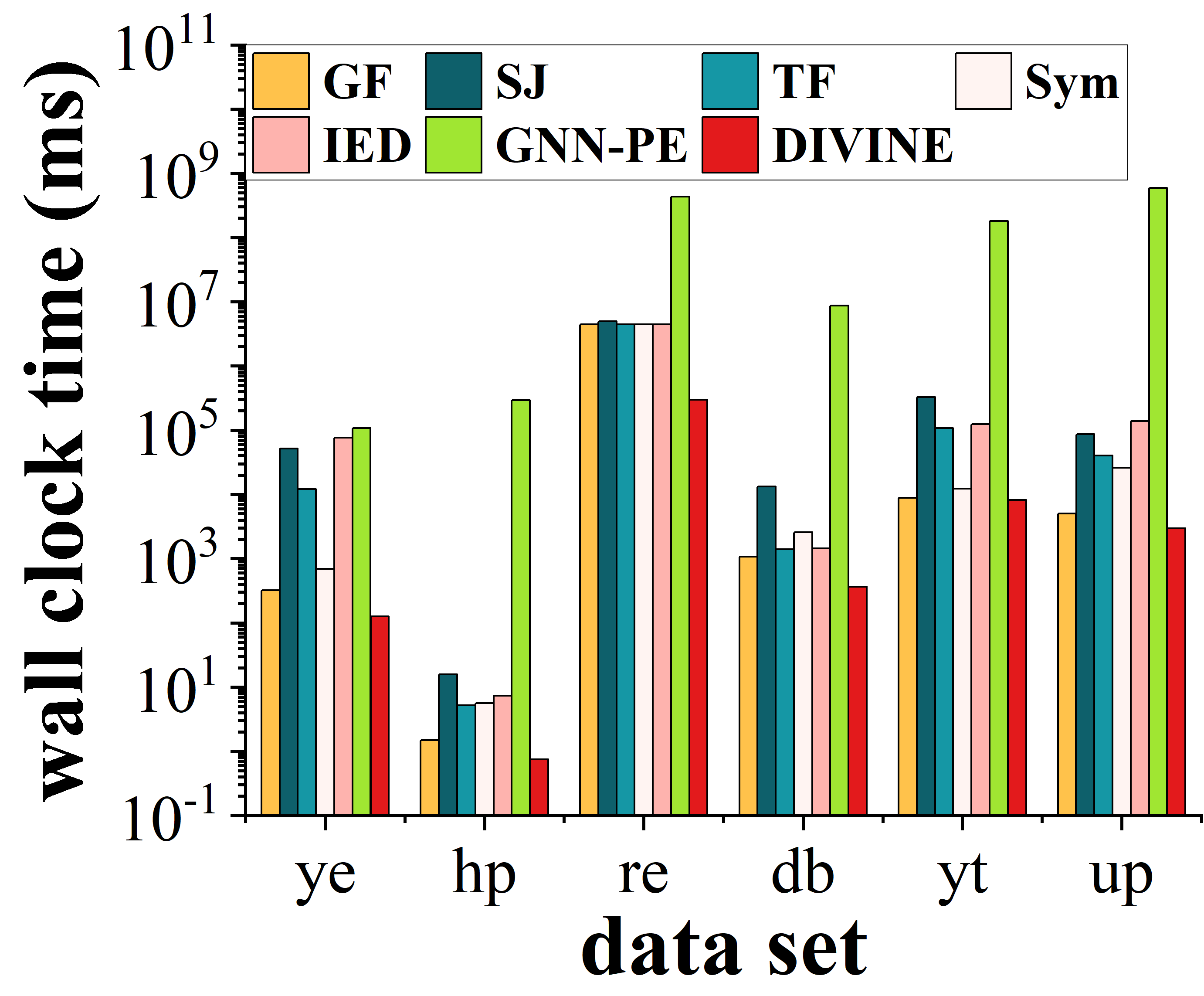}}\label{subfig:ins_real_lc}}
\qquad
\subfigure[][{synthetic graphs}]{
\scalebox{0.15}[0.15]{\includegraphics{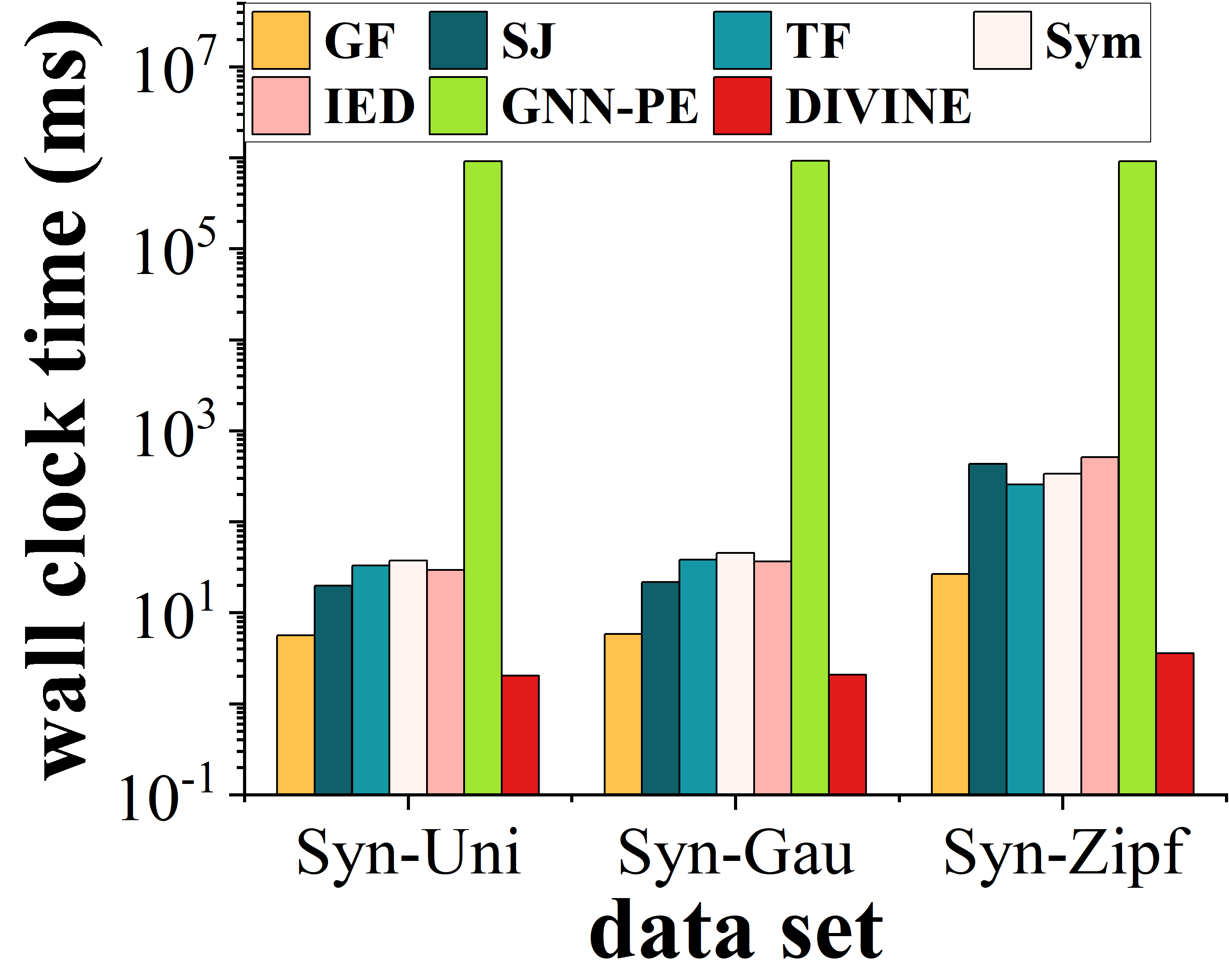}}\label{subfig:ins_syn_lc}}
\caption{The DIVINE efficiency on real/synthetic graphs with vertex label changes, compared with baseline methods.}
\label{fig:dsm_efficiency_vlabel}
\end{figure}

\noindent{\bf The DIVINE Efficiency on Dynamic Graphs with Vertex Label Changes:} Figure \ref{fig:dsm_efficiency_vlabel} evaluates our DIVINE approach over real and synthetic graphs with vertex label changes, where we change 1\% vertex labels upon every $10K$ updates, and default values are used for other parameters. As discussed in Section \ref{subsec:csm_query_answering}, our proposed DIVINE framework can naturally support dynamic vertex label changes by updating the embeddings of the vertex and its 1-hop neighbors. Similar to previous results (as shown in Figures \ref{fig:dsm_vs_graphs} and \ref{fig:dsm_deletion}), our DIVINE approach outperforms the baseline methods for all real/synthetic graphs.

\begin{figure}[t]
\centering
\subfigure[][{\footnotesize real-world graphs ($A_1$ mode)}]{                    
\scalebox{0.15}[0.15]{\includegraphics{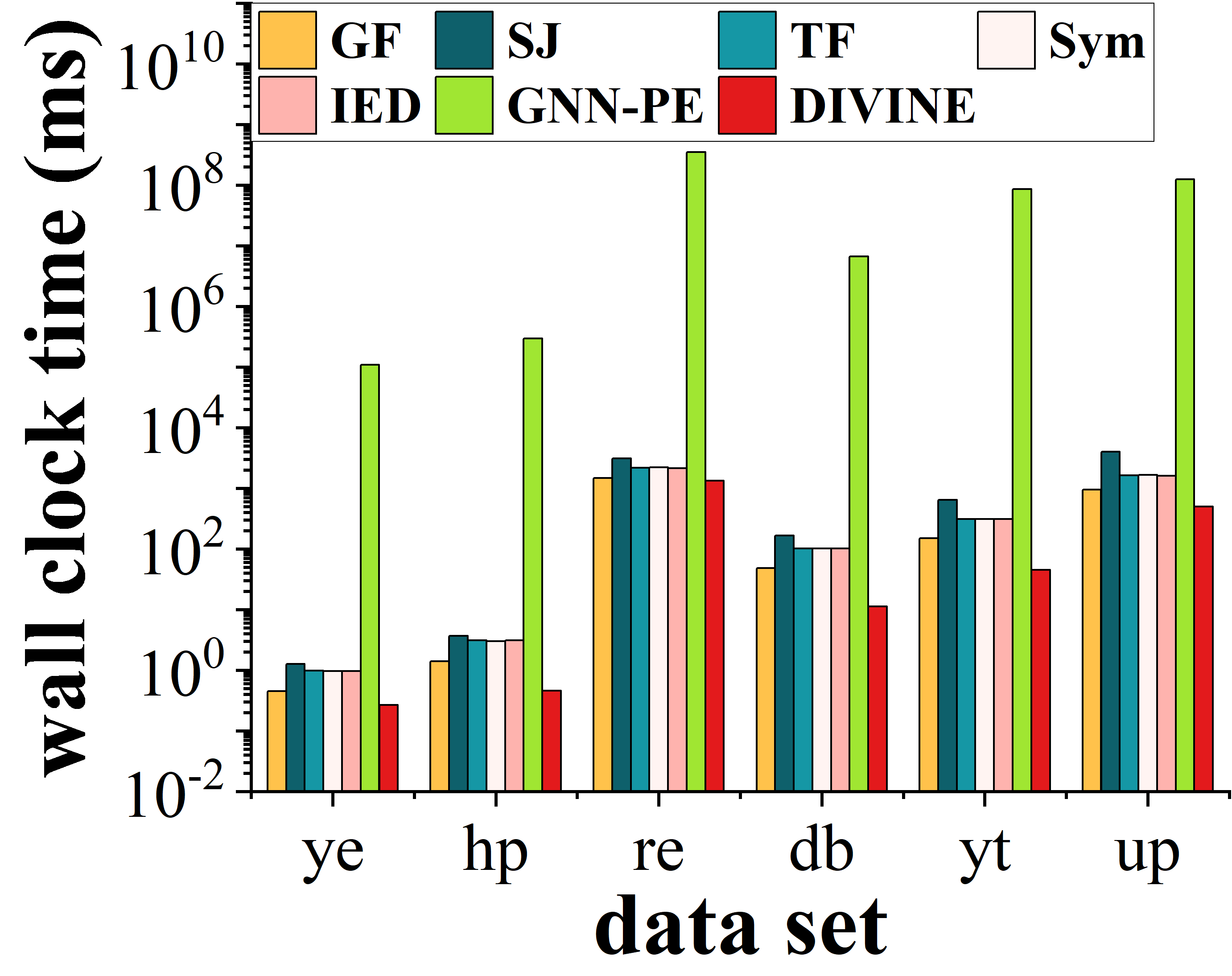}}\label{subfig:dsm_easy_real}}
\qquad
\subfigure[][{\footnotesize synthetic graphs ($A_1$ mode)}]{
\scalebox{0.15}[0.15]{\includegraphics{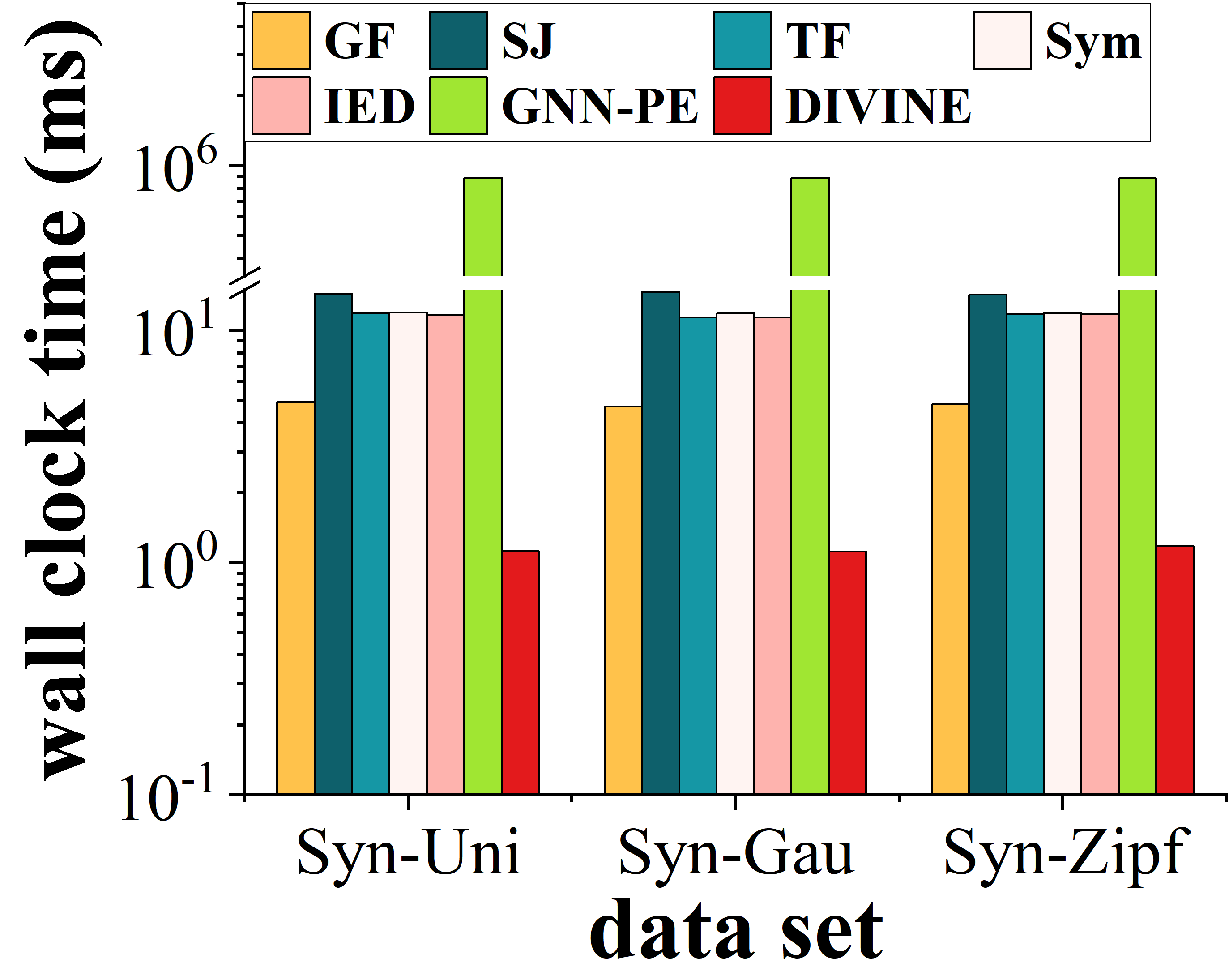}}\label{subfig:dsm_easy_syn}}
\\
\subfigure[][{\footnotesize real-world graphs ($A_2$ mode)}]{                    
\scalebox{0.15}[0.15]{\includegraphics{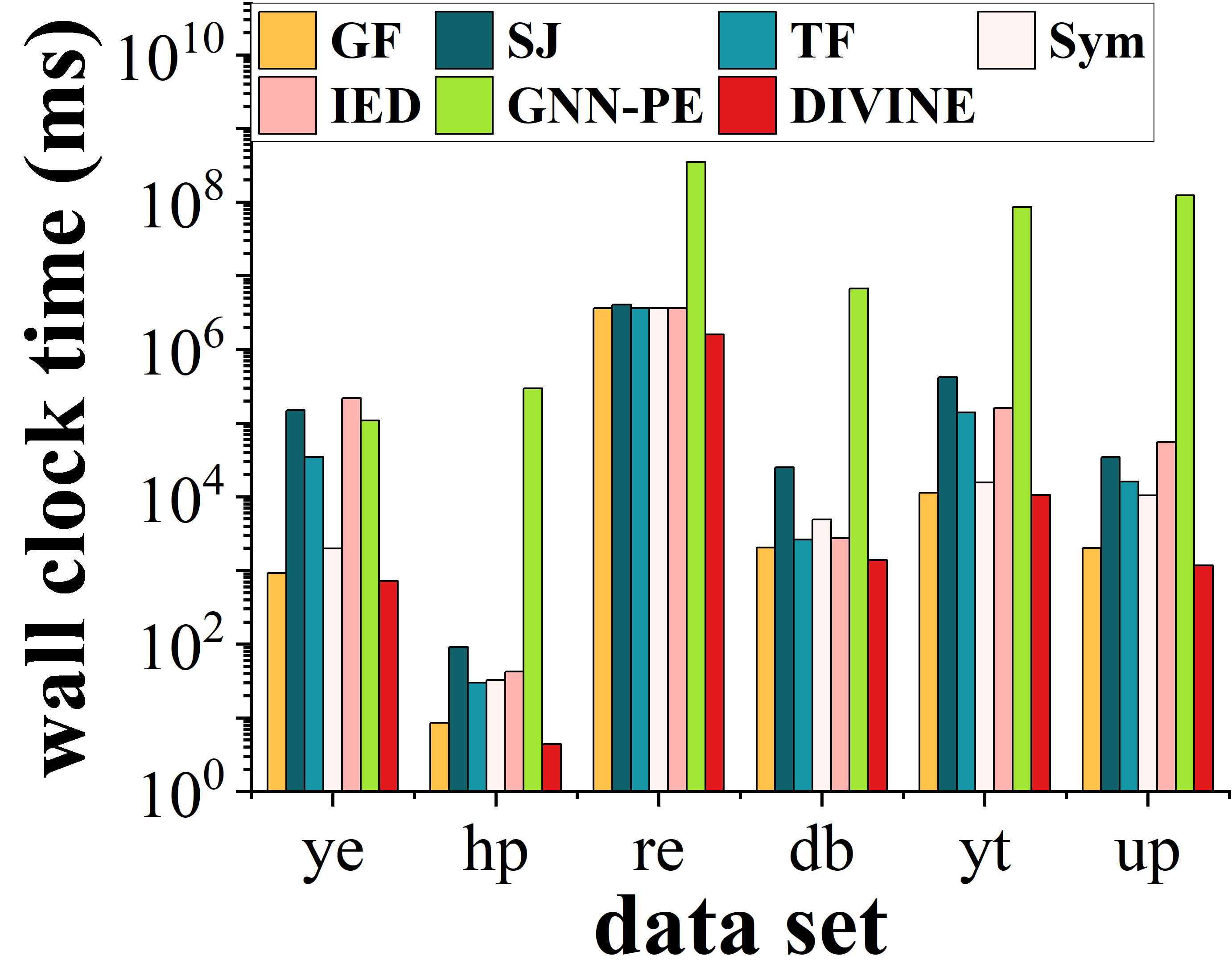}}\label{subfig:dsm_hard_real}}
\qquad
\subfigure[][{\footnotesize synthetic graphs ($A_2$ mode)}]{
\scalebox{0.15}[0.15]{\includegraphics{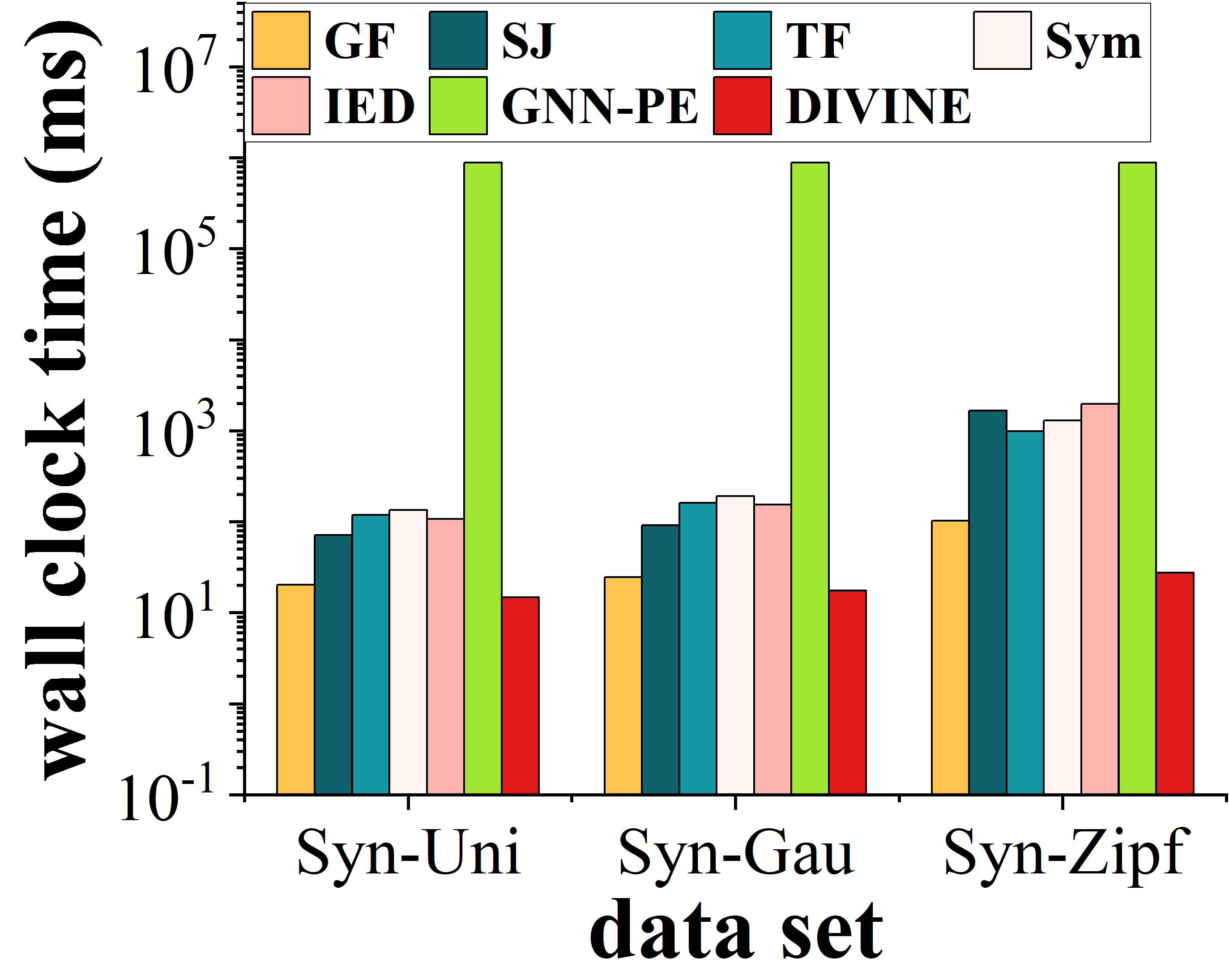}}\label{subfig:dsm_hard_syn}}
\caption{The DIVINE efficiency with adversarial queries ($A_1$ and $A_2$ modes), compared with baseline methods.}
\label{fig:dsm_efficiency_aquery}
\end{figure}

\noindent{\bf The DIVINE Efficiency with Adversarial Queries ($\bm{A_1}$ and $\bm{A_2}$ Modes):}
Figure \ref{fig:dsm_efficiency_aquery} shows our DIVINE query performance with adversarial queries over real/synthetic graphs, which have two modes, ``$A_1$'' and ``$A_2$''. The ``$A_1$'' mode tests 100 query graphs that do not occur in the data graph, whereas the ``$A_2$'' mode evaluates 100 query graphs that lead to large candidate vertex sets. From subfigures, we can see that our DIVINE approach can always outperform baseline methods under both modes.

\begin{figure}[t]
\centering
\subfigure[][{real-world graphs}]{                    
\scalebox{0.15}[0.15]{\includegraphics{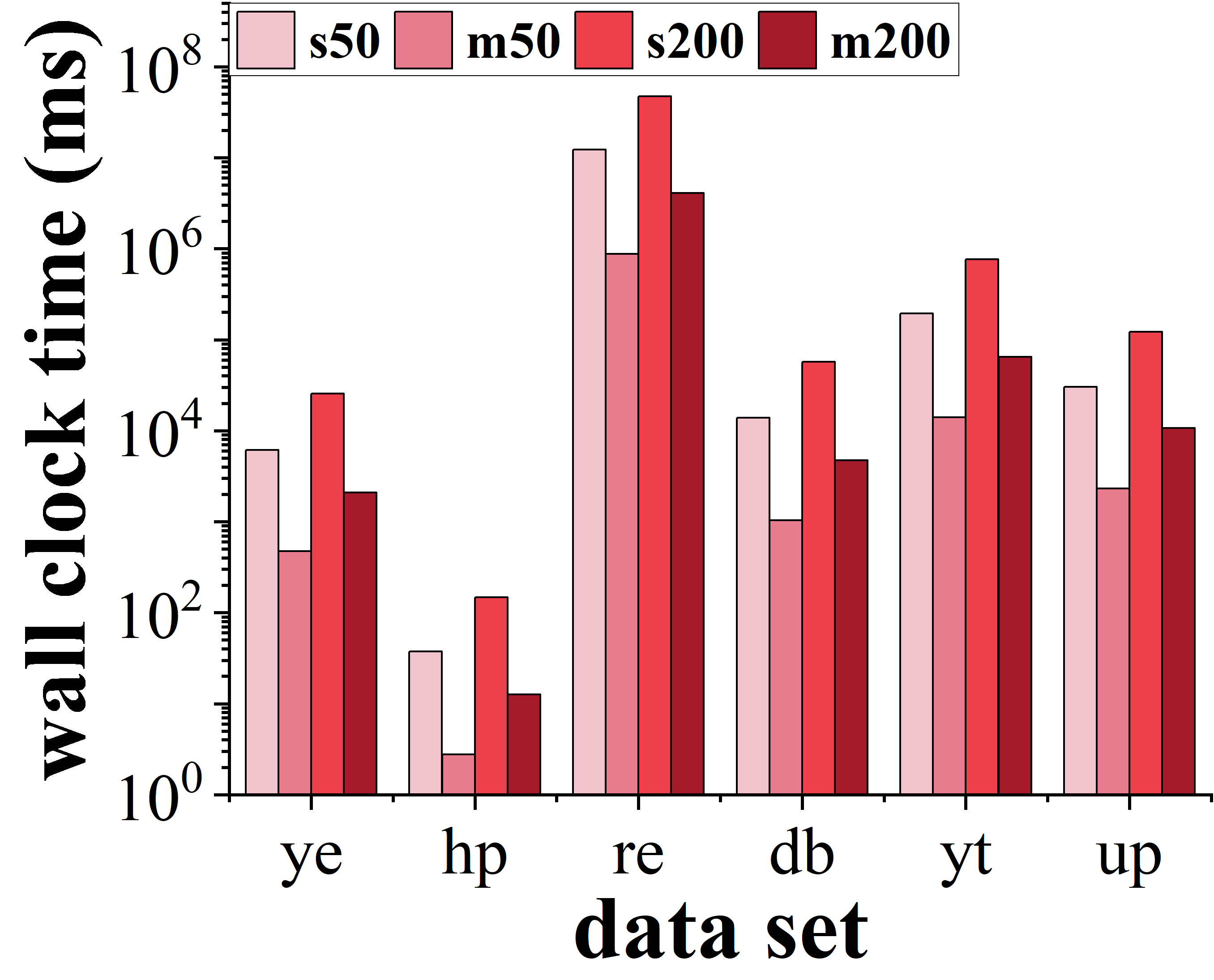}}\label{subfig:ins_real_pp}}
\qquad
\subfigure[][{synthetic graphs}]{
\scalebox{0.15}[0.15]{\includegraphics{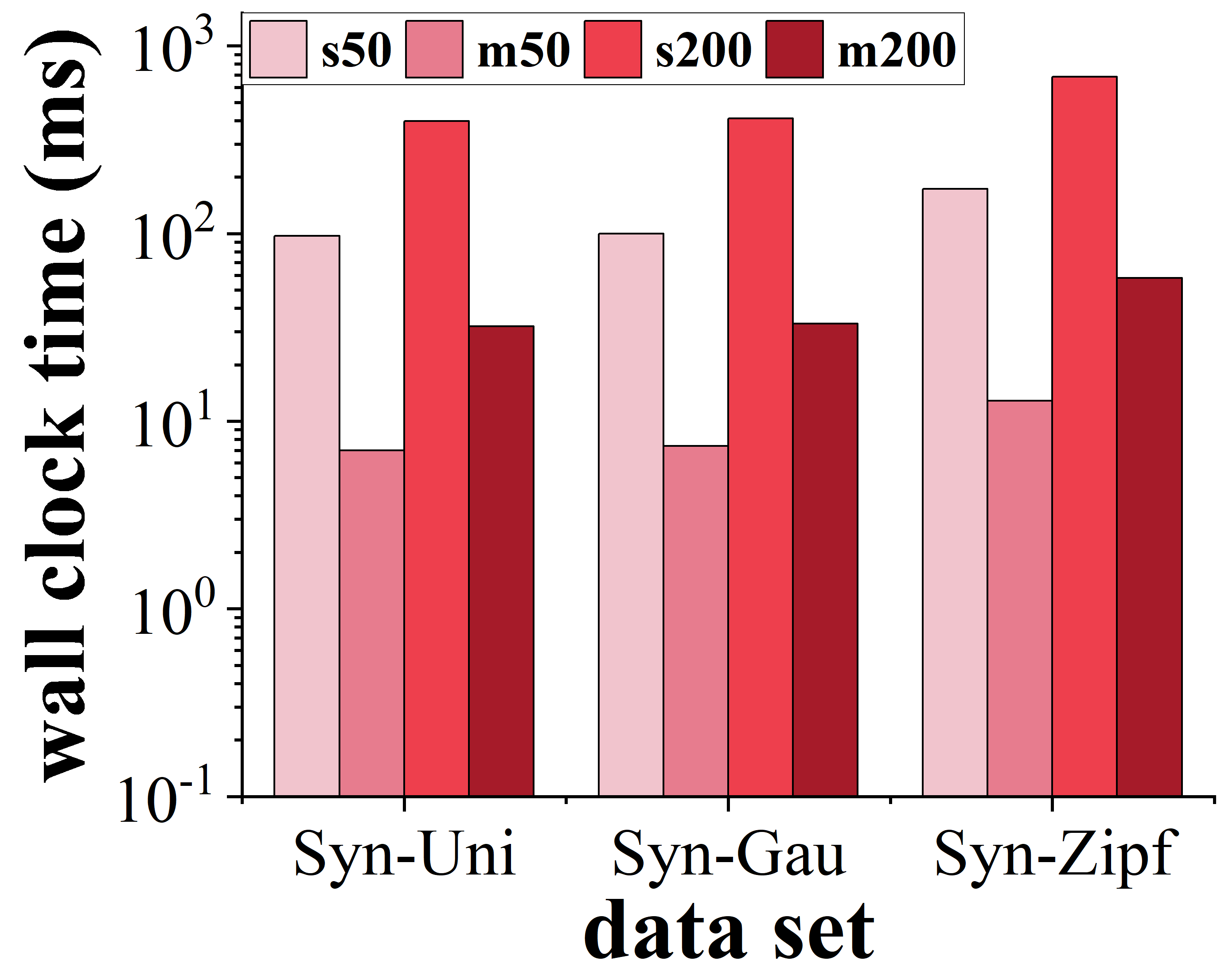}}\label{subfig:ins_syn_pp}}
\caption{The DIVINE efficiency with concurrent queries.}
\label{fig:dsm_efficiency_cquery}
\end{figure}

\noindent{\bf The DIVINE Efficiency with Concurrent Queries:} Figure \ref{fig:dsm_efficiency_cquery} illustrates the performance of our DIVINE approach, by using either a single thread (denoted as ``s'') or multiple (16) threads in parallel (denoted as ``m''), for 50 or 200 concurrent queries, which leads to four scenarios, ``s50'', ``m50'', ``s200'', and ``m200'', where default values are used for other parameters. Here, each thread is responsible for processing a query graph, including query embedding generation, synopsis search, refinement, and answer set update, such that CSM queries are independently answered in parallel.
From subfigures, we can see that concurrent CSM queries can achieve lower time cost with multithreading for our DIVINE  
approach (instead of using a single thread), confirming the effectiveness of our DAS$^3$ synopses to support parallel processing.

\begin{figure}[t]
\centering
\subfigure[][{real-world graphs}]{                    
\scalebox{0.15}[0.15]{\includegraphics{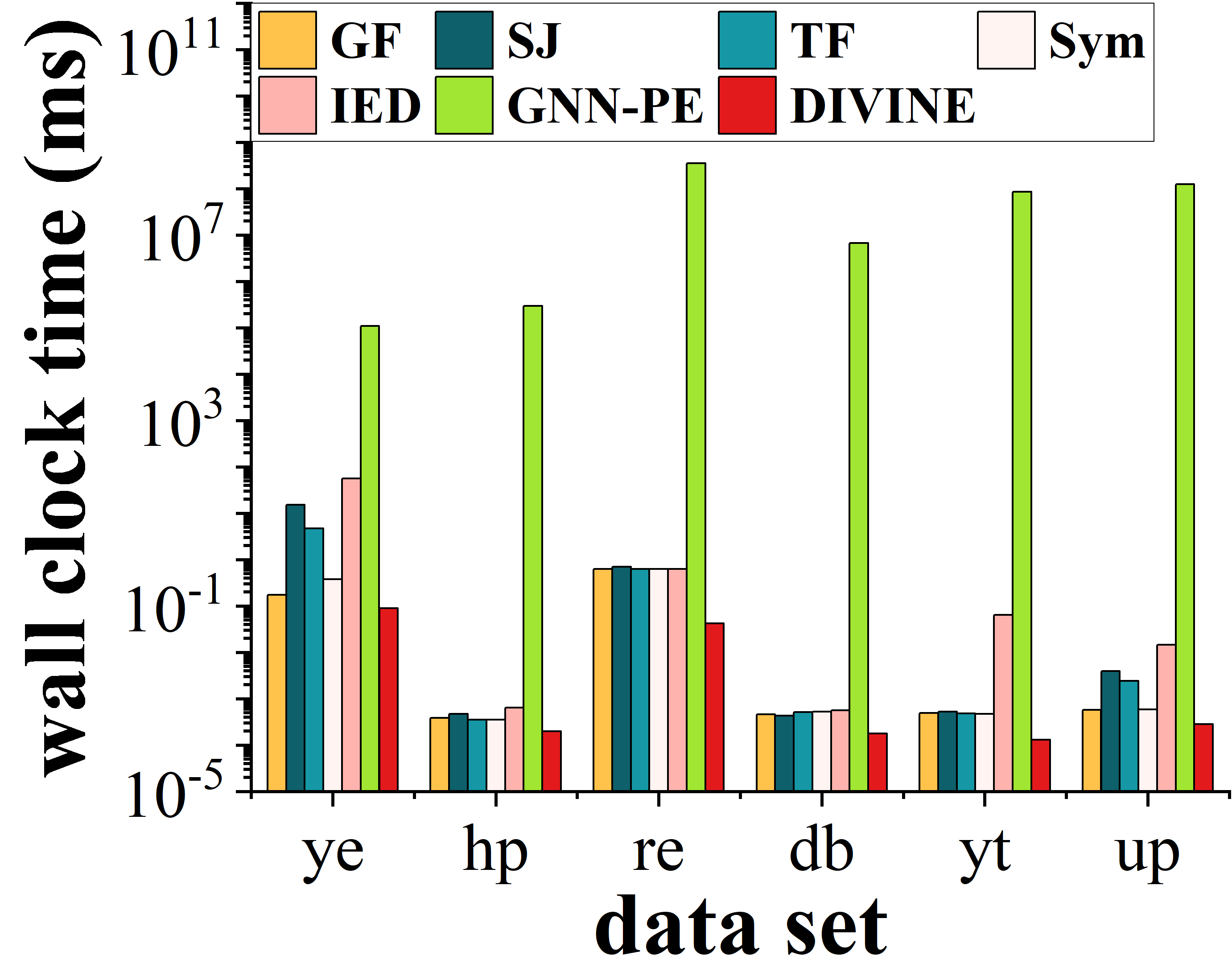}}\label{subfig:dsm_edge_real}}
\qquad
\subfigure[][{synthetic graphs}]{
\scalebox{0.15}[0.15]{\includegraphics{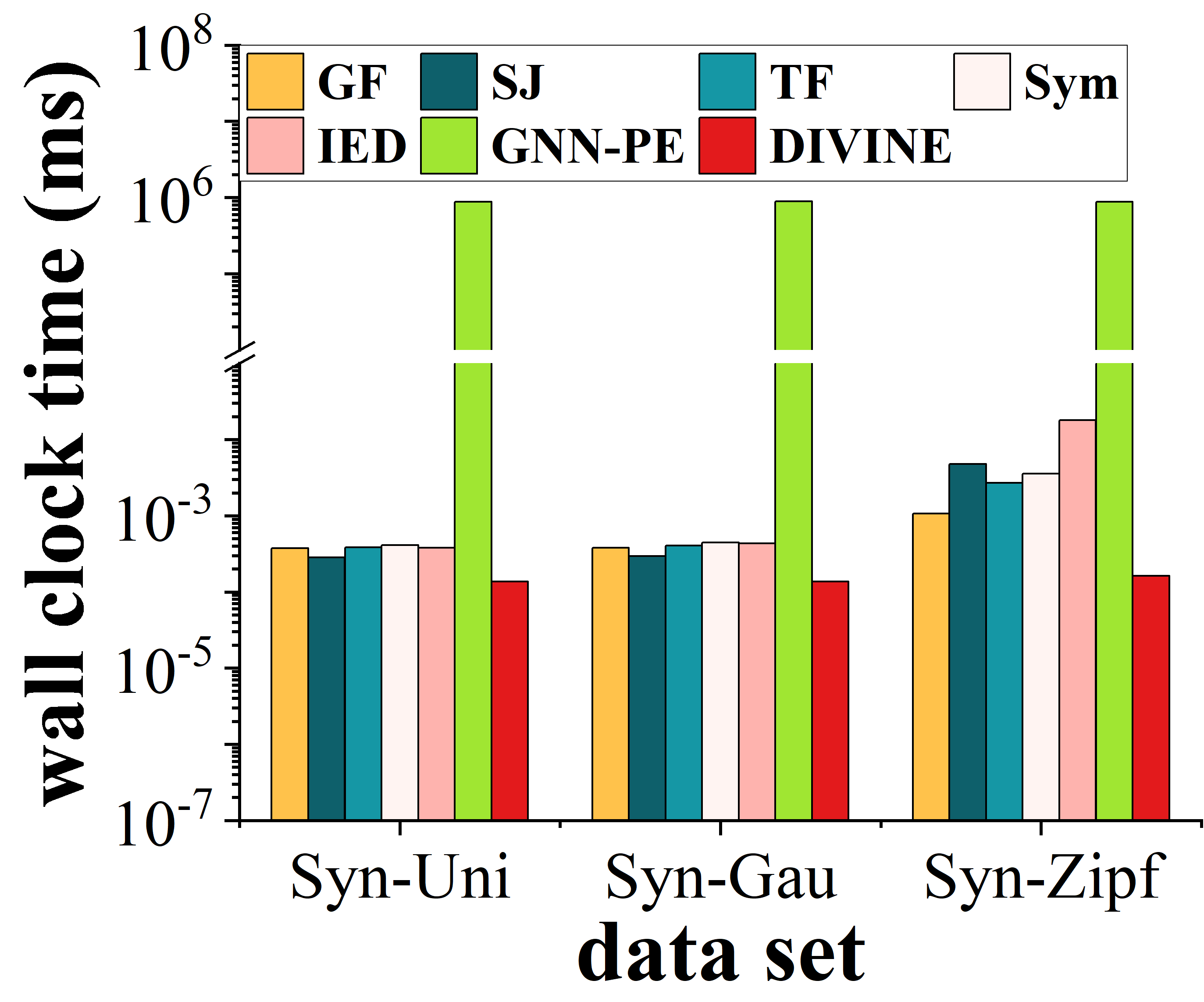}}\label{subfig:dsm_edge_syn}}
\caption{The DIVINE efficiency of per-edge update, compared with baseline methods.}
\label{fig:dsm_efficiency_edge}
\end{figure}

\noindent{\bf The DIVINE Efficiency of Per-Edge Update:}
Figure \ref{fig:dsm_efficiency_edge} illustrates our DIVINE per-edge update (insertion) cost, compared with baselines, where all parameters are set by default. From subfigures, we can observe that the per-edge update cost of our DIVINE approach is consistently lower than that of baselines over real/synthetic graphs, and remains low (i.e., 0.13 $\sim$ 89.11 $\mu s$).

To evaluate the robustness of our DIVINE approach, in the sequel, we test different scenarios, such as dynamic graph with edge deletions, vertex label changes, adversarial queries, and concurrent queries. 
We also vary parameter values on synthetic graphs (e.g., $|\Sigma|$, $avg\_deg(q)$, $|V(q)|$, $avg\_deg(G_D)$, and $|V(G_D)|$), and omit the results of baselines to better illustrate the trends of curves.

\begin{figure}[t]
\centering
\subfigure[][{\small DIVINE vs. $|\Sigma|$}]{                    
\scalebox{0.15}[0.15]{\includegraphics{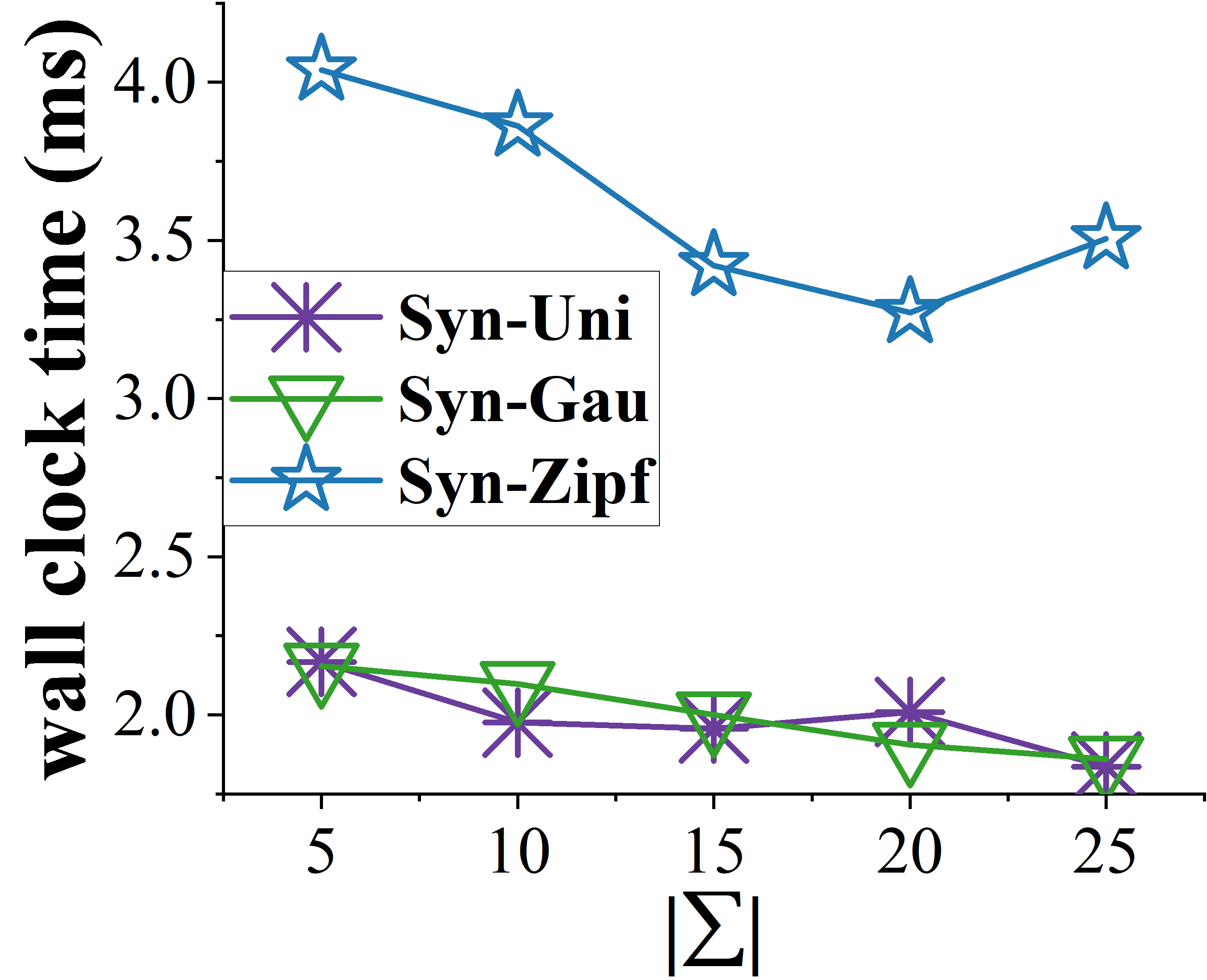}}\label{subfig:label_syn}}
\qquad
\subfigure[][{\small DIVINE vs. $avg\_deg(q)$}]{                    
\scalebox{0.15}[0.15]{\includegraphics{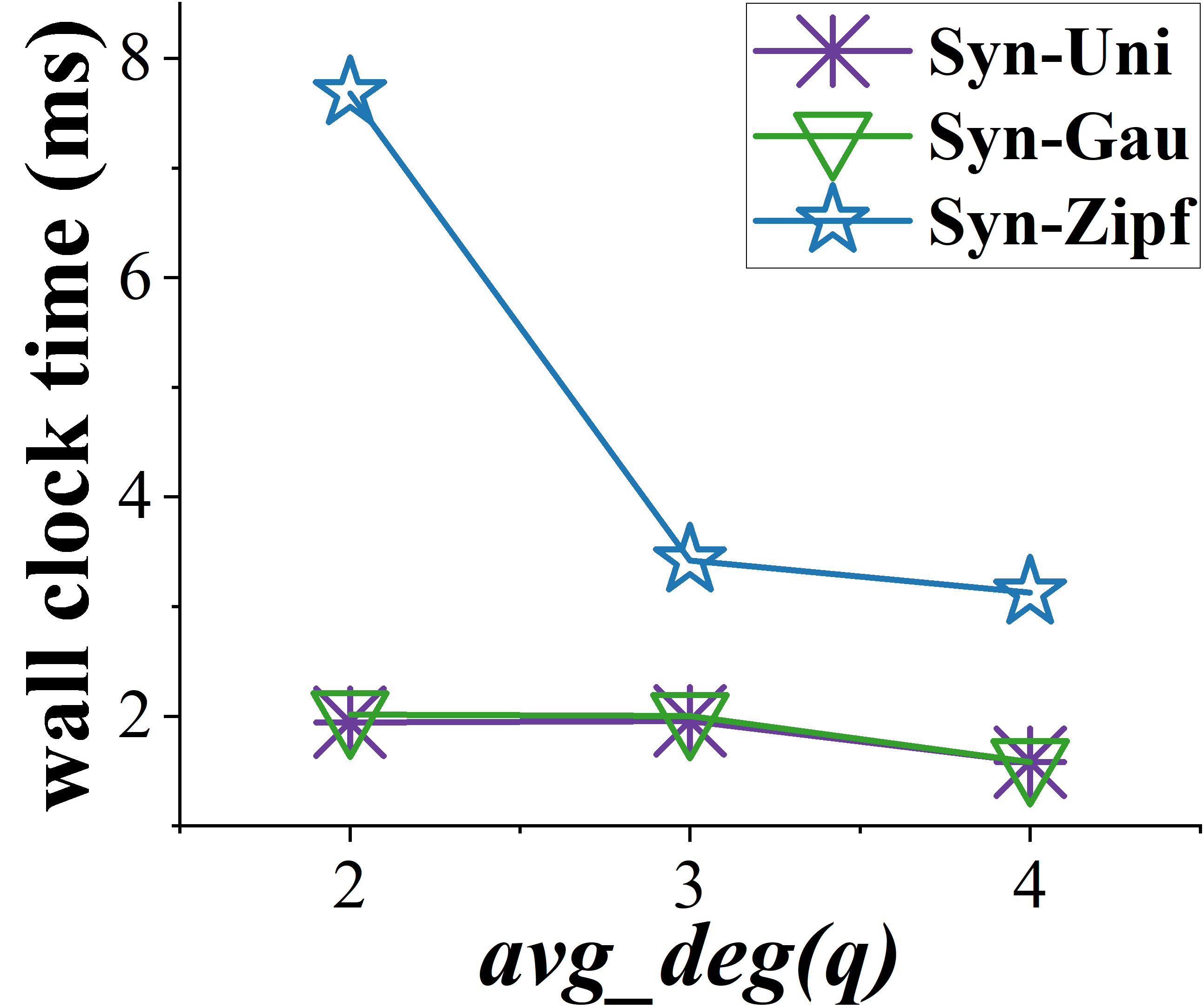}}\label{subfig:qdeg_syn}}
\qquad
\subfigure[][{\small DIVINE vs. $|V(q)|$}]{                    
\scalebox{0.15}[0.15]{\includegraphics{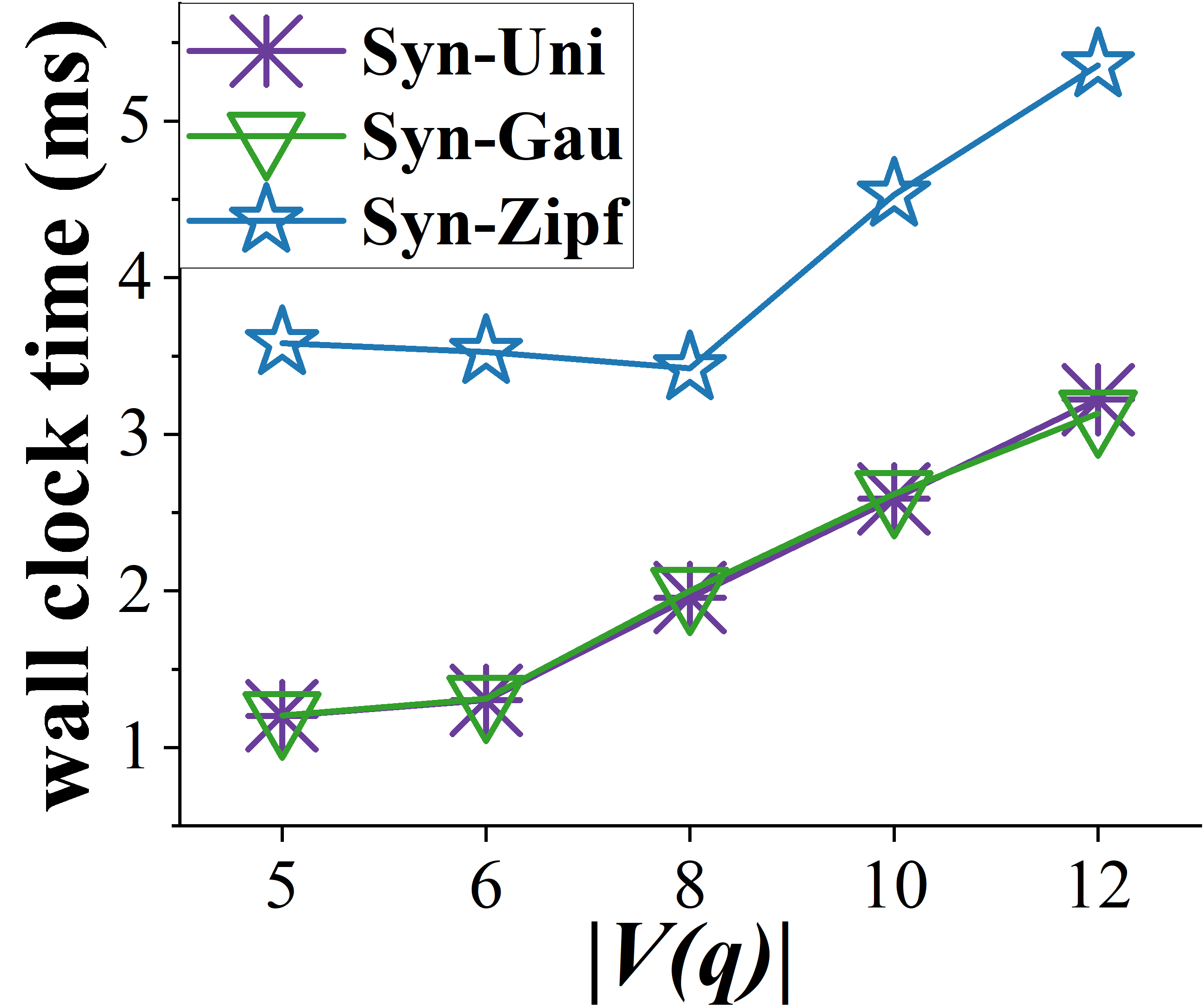}}\label{subfig:qsize_syn}}
\\
\subfigure[][{\footnotesize DIVINE vs. $avg\_deg(G_D)$}]{                    
\scalebox{0.15}[0.15]{\includegraphics{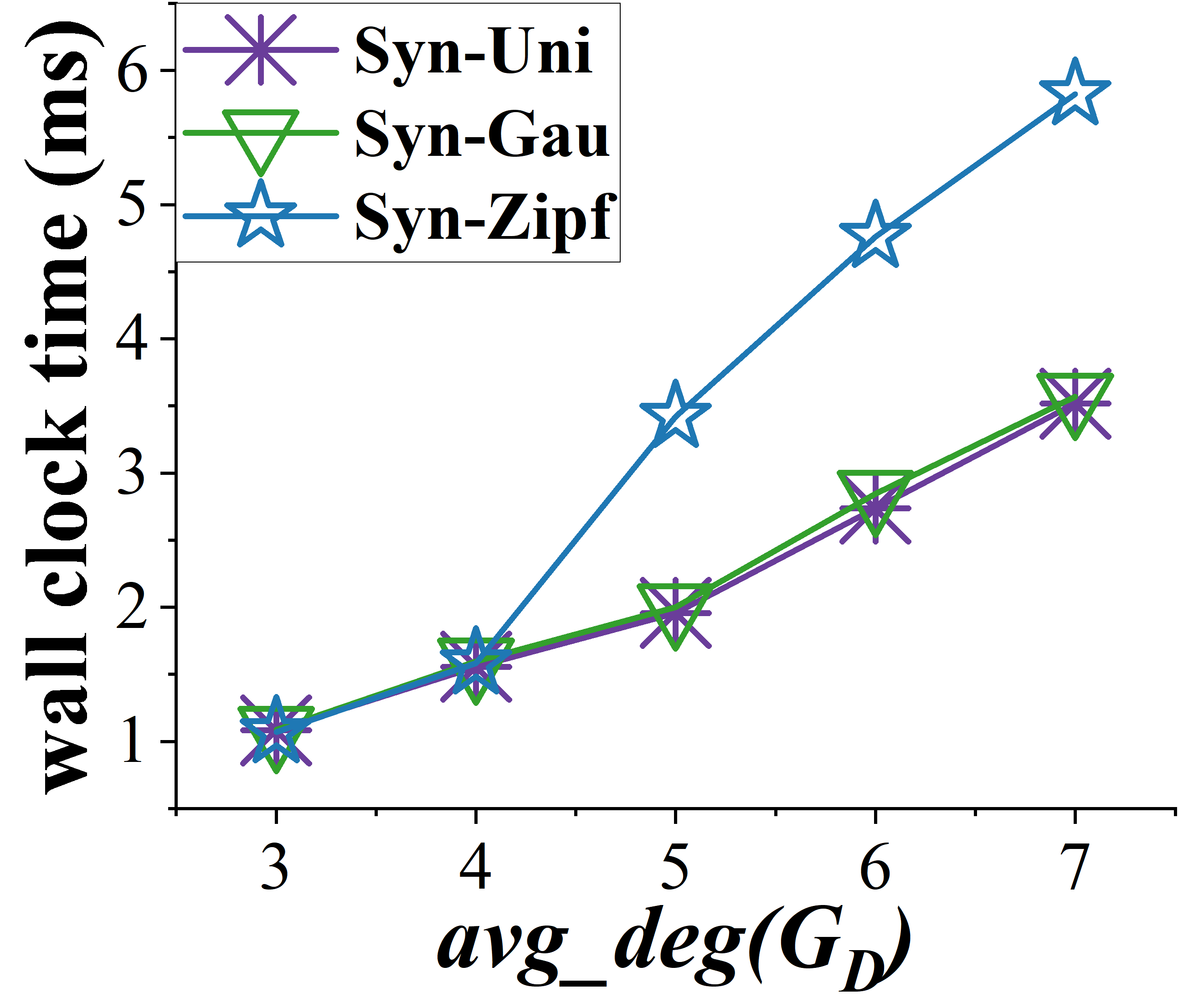}}\label{subfig:ddeg_syn}}
\qquad
\subfigure[][\small DIVINE vs. $|V(G_D)|$]{                    
\scalebox{0.15}[0.15]{\includegraphics{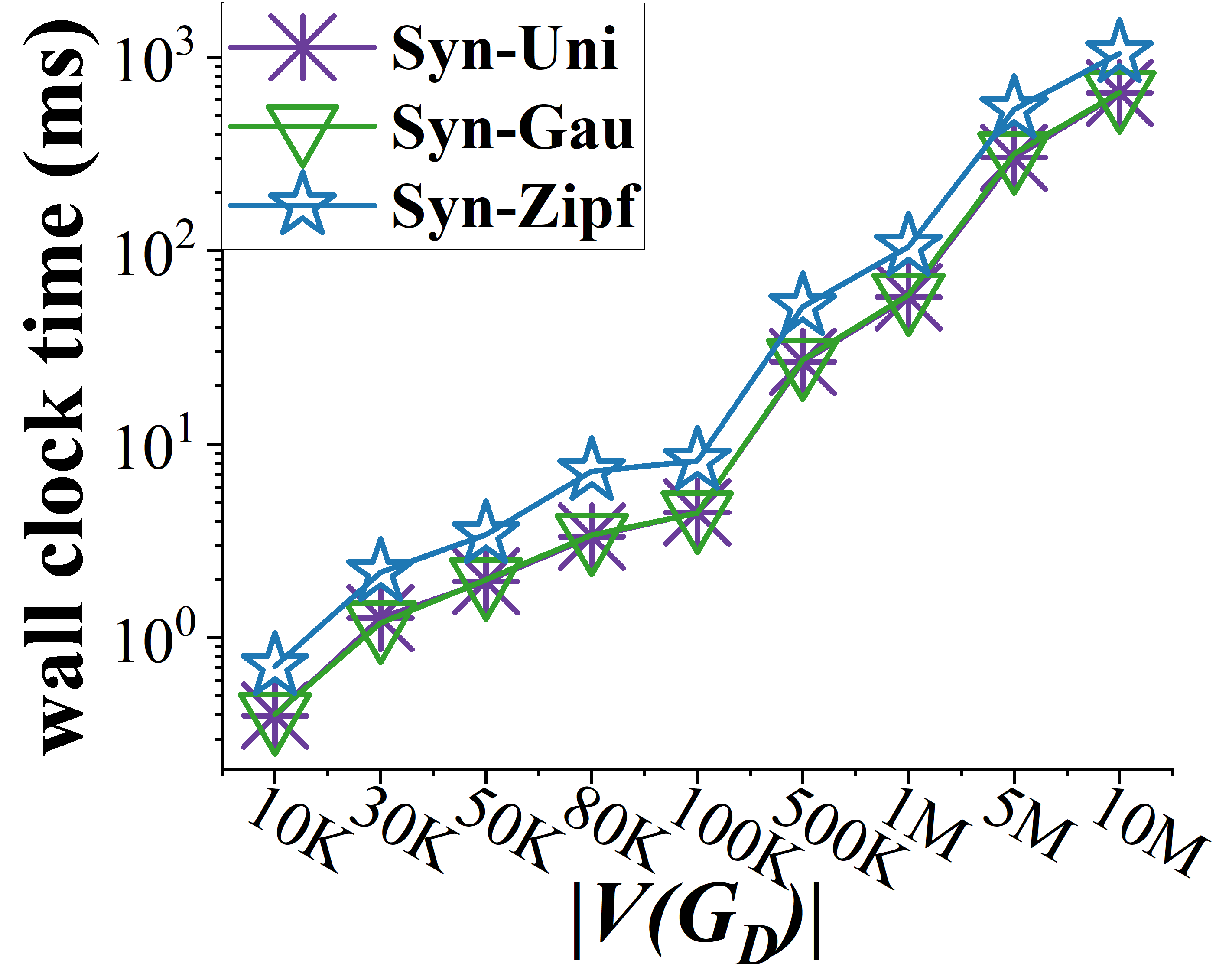}}\label{subfig:dsize_syn}}
\caption{The DIVINE efficiency w.r.t. parameters $|\Sigma|$, $avg\_deg(q)$, $|V(q)|$, $avg\_deg(G_D)$, and $|V(G_D)|$.}
\label{fig:label_syn}
\end{figure}

\noindent {\bf The DIVINE Efficiency w.r.t. $\bm{\#}$ of Distinct Vertex Labels, $\bm{|\Sigma|}$:}
Figure~\ref{subfig:label_syn} shows the wall clock time of our DIVINE approach, where $|\Sigma|$ varies from 5 to 25, and other parameters are set to default values. When the number, $|\Sigma|$, of distinct vertex labels increases, the pruning power also increases (i.e., with fewer candidate vertices). Moreover, the query cost is also affected by vertex label distributions. Overall, the DIVINE query cost remains low for different $|\Sigma|$ values (i.e., 1.83 $\sim$ 4.04 $ms$).

\noindent {\bf The DIVINE Efficiency w.r.t. Average Degree, $\bm{avg\_deg(q)}$, of the Query Graph $\bm{q}$:}
Figure~\ref{subfig:qdeg_syn} examines the DIVINE performance by varying the average degree, $avg\_deg(q)$, of the query graph $q$ from 2 to 4, where other parameters are set to default values. 
Higher degree $avg\_deg(q)$ of $q$ incurs higher pruning power of query vertices. Therefore, when $avg\_deg(q)$ increases, the DIVINE time cost decreases. For different $avg\_deg(q)$ values, the DIVINE query cost remains low (i.e., 1.58 $\sim$ 7.68 $ms$).

\noindent {\bf The DIVINE Efficiency w.r.t. Query Graph Size $\bm{|V(q)|}$:} Figure~\ref{subfig:qsize_syn} illustrates our DIVINE performance, by varying the query graph size, $|V(q)|$, from 5 to 12, where default values are used for other parameters. When the number, $|V(q)|$, of vertices in query graph $q$ increases, fewer candidate subgraphs are expected to match with larger query graph $q$. On the other hand, larger query graph size $|V(q)|$ will cause higher query costs in finding candidates for more query vertices, through synopsis traversal and refinement. Therefore, the query time is influenced by these two factors. Nevertheless, the time cost remains low for different query graph sizes (i.e., $<$$5.36$$ms$).

\noindent {\bf The DIVINE Efficiency w.r.t. Avg. Degree, $\bm{avg\_deg(G_D)}$, of Dynamic Graph $\bm{G_D}$:}
Figure~\ref{subfig:ddeg_syn} presents our DIVINE performance with different average degrees, $avg\_deg$ $(G_D)$, of dynamic graph $G_D$, where $avg\_deg(G_D)=3 \sim 7$, and default values are used for other parameters. Intuitively, a higher degree $avg\_deg(G_D)$ in the data graph $G_D$ incurs lower pruning power and more candidate vertices. Thus, when $avg\_deg(G_D)$ becomes higher, the wall clock time also increases, especially for $Syn\text{-}Zipf$ (due to its skewed vertex label distribution). Nevertheless, the DIVINE query time remains small for different $avg\_deg(G_D)$ values (i.e., 1.08 $\sim$ 5.82 $ms$).

\noindent {\bf The DIVINE Scalability Test w.r.t. Dynamic Graph Size $\bm{|V(G_D)|}$:}
Figure~\ref{subfig:dsize_syn} tests the scalability of our DIVINE approach with different dynamic graph sizes, $|V(G_D)|$, from $10K$ to $10M$, where default parameter values are used. A larger dynamic graph incurs more matching candidate vertices (and, in turn, candidate subgraphs). Thus, the time cost of our DIVINE approach increases with the increase of graph size $|V(G_D)|$, nonetheless, remains the lowest (i.e., 0.4 $\sim$ 1.05 $sec$ for graph sizes up to $10M$), which confirms the scalability of our DIVINE approach for large graph sizes.

\subsection{DAS$^3$ Synopsis Initialization Cost}
In this subsection, we report the memory cost of our DIVINE approach over real/synthetic graphs.

\begin{figure}[t]
\centering
\subfigure[][{real-world graphs}]{                    
\scalebox{0.15}[0.15]{\includegraphics{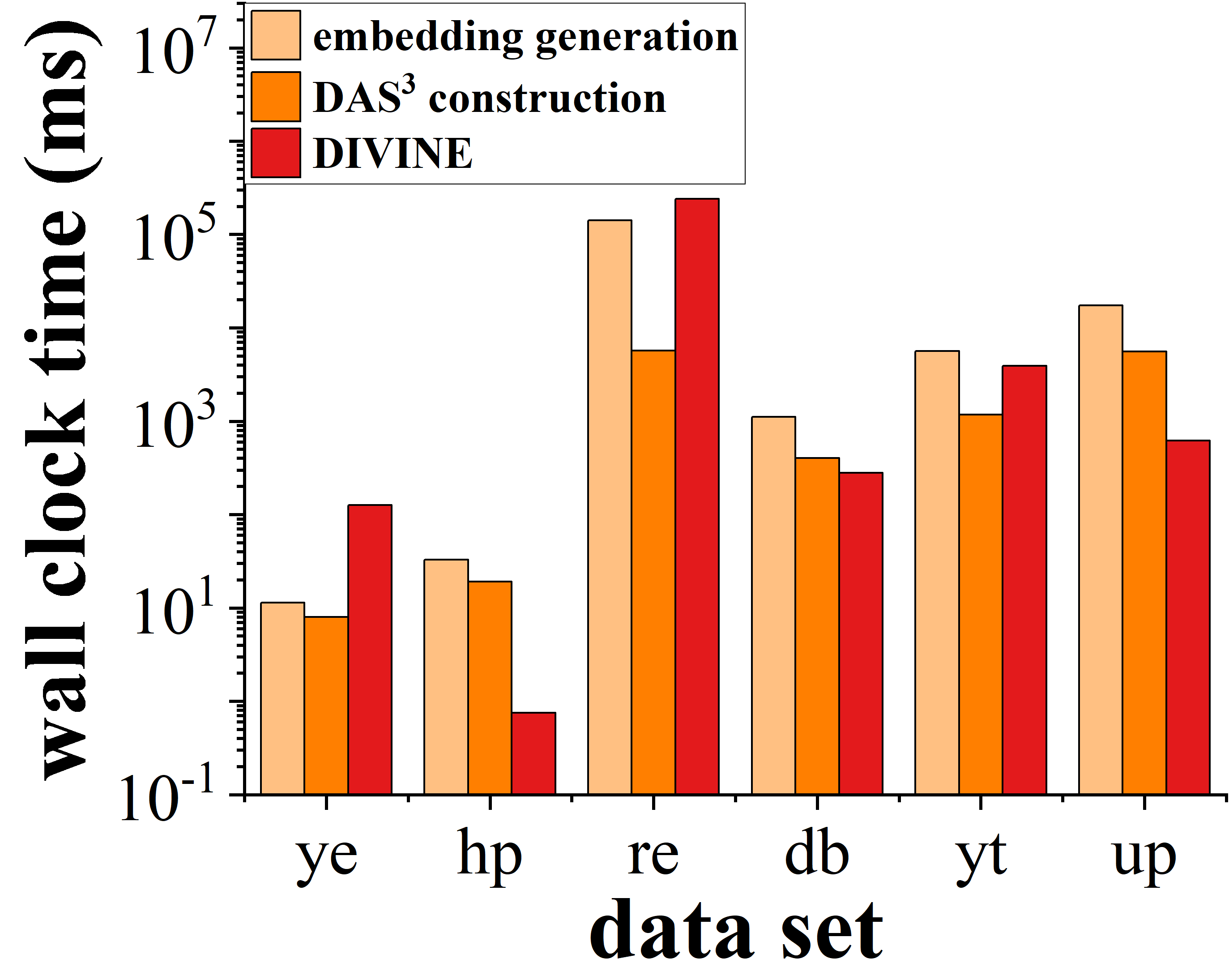}}\label{subfig:dsm_cost_real}}
\qquad
\subfigure[][{synthetic graphs}]{
\scalebox{0.15}[0.15]{\includegraphics{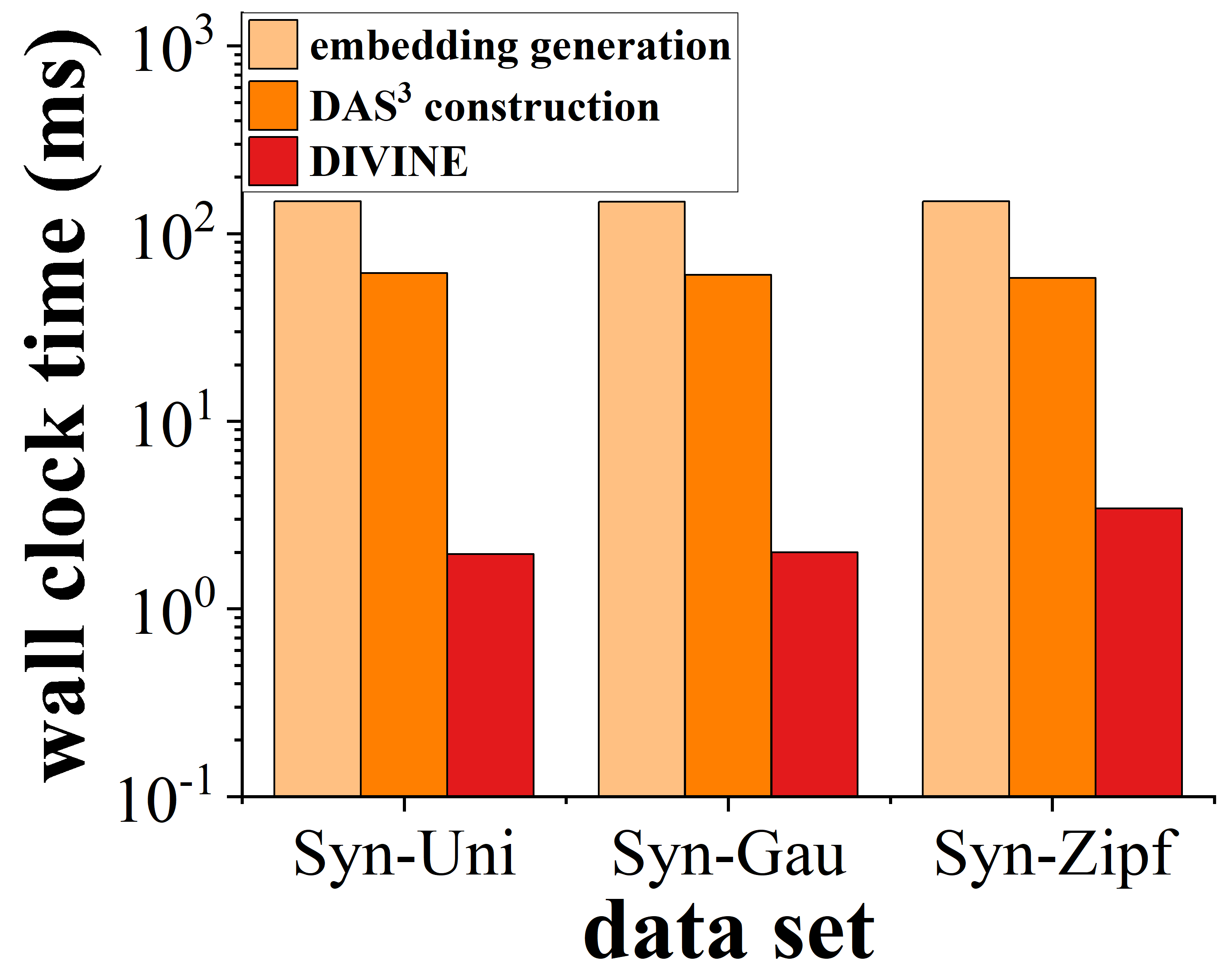}}\label{subfig:dsm_cost_syn}}
\caption{The comparison analysis of the DIVINE offline pre-computation and online query costs on real/synthetic graphs.}
\label{fig:dsm_cost}
\end{figure}

\noindent{\bf The DAS$\bm{^3}$ Synopsis Initialization Cost on Real/Synthetic Graphs.}
We compare the DAS$^3$ synopsis initialization cost of our DIVINE approach (including time costs of vertex dominance embedding generation and DAS$^3$ construction over vertex dominance embeddings) with online query time over real/synthetic graphs, where parameters are set to default values. 
In Figure~\ref{fig:dsm_cost}, for graph sizes from 3K to 3.77M, the overall offline pre-computation time varies from 19 $ms$ $\sim$ 147.52 $sec$. Specifically, the time costs of embedding generation and DAS$^3$ construction are 11 $ms$ $\sim$ 141.85 $sec$, 8 $ms$ $\sim$ 5.67 $sec$, respectively.
On the other hand, since the online query time includes the time cost of DAS$^3$ and embedding update, the maintenance cost of our DAS$^3$ is low.

\begin{figure}[t]
\centering
\subfigure[][{real-world graphs}]{                    
\scalebox{0.16}[0.15]{\includegraphics{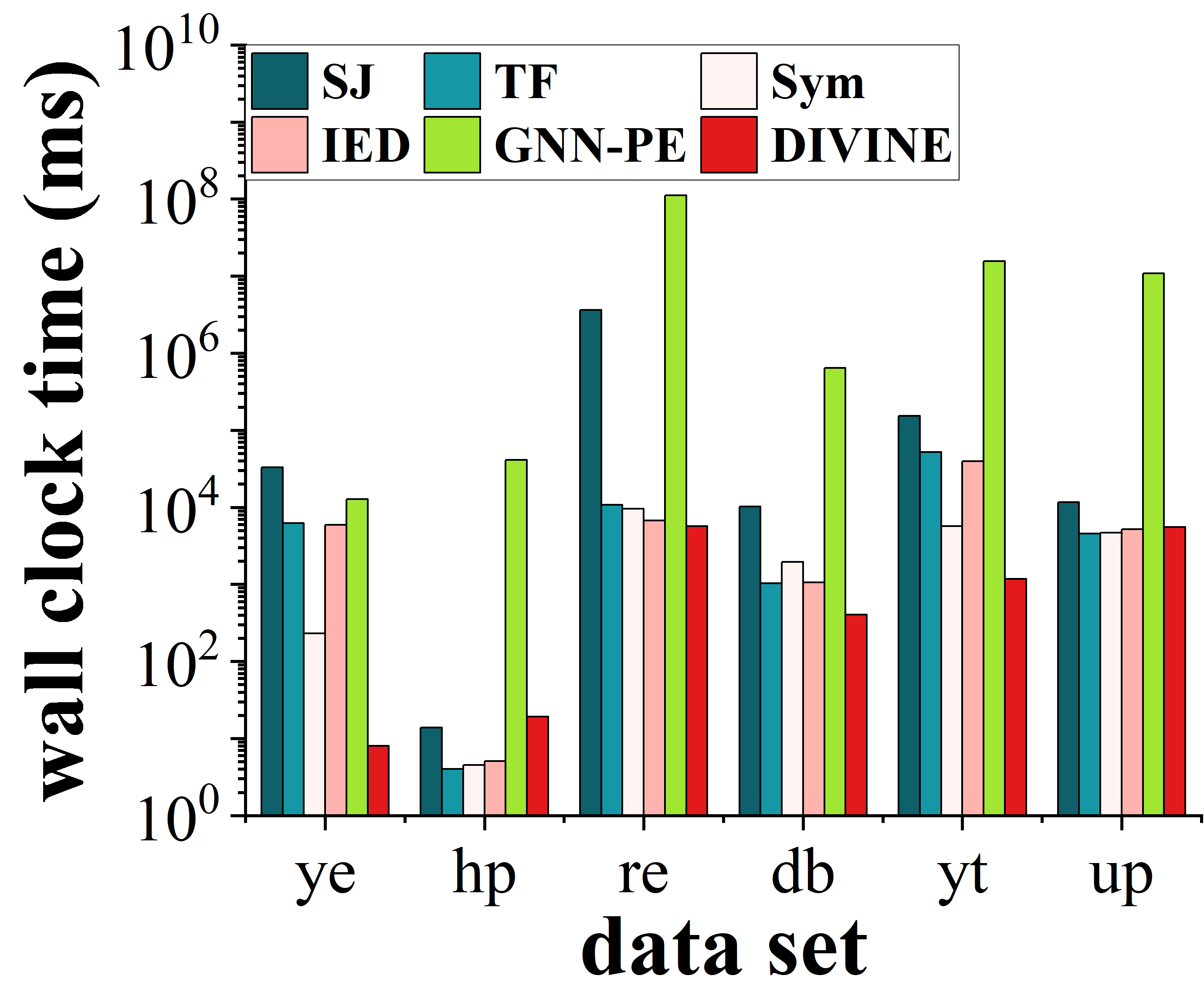}}\label{subfig:time_cost_real}}
\qquad
\subfigure[][{synthetic graphs}]{
\scalebox{0.15}[0.15]{\includegraphics{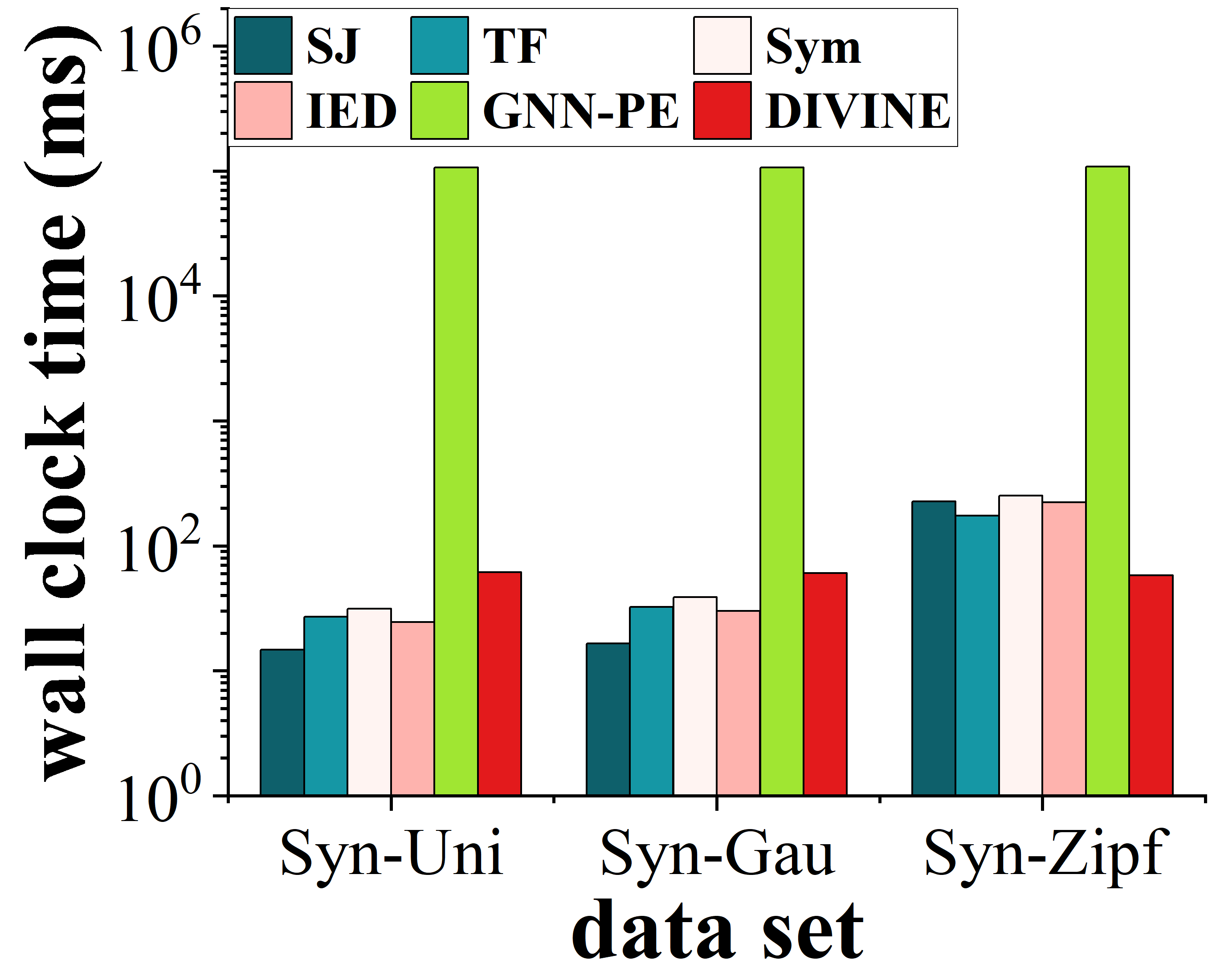}}\label{subfig:time_cost_syn}}
\caption{DAS$^3$ construction time on real/synthetic graphs, compared with baseline methods.}
\label{fig:time_cost}
\end{figure}

\noindent{\bf The DAS$\bm{^3}$ Index Time Cost on Real/Synthetic Graphs.}
Since GF directly enumerates matching answers over the original data graph without any auxiliary data structure \cite{kankanamge2017graphflow,sun2022depth}, Figure~\ref{fig:time_cost} compares the one-time construction cost of our DAS$^3$ synopses with baselines, where parameters are set to their default values.
From the figure, we can see that the time cost of our DAS$^3$ synopses is comparable to that of baselines.
Moreover, unlike baseline methods that construct an index over increasingly incoming (registered) query graphs $q$ (which may not be scalable), our DAS$^3$ construction is \textit{one-time only} over the initial data graph $G_0$. Thus, our DAS$^3$ synopses can be used to accelerate numerous online CSM query requests from users simultaneously with high throughput.

\begin{figure}[t]
\centering
\subfigure[][{real-world graphs}]{                    
\scalebox{0.15}[0.15]{\includegraphics{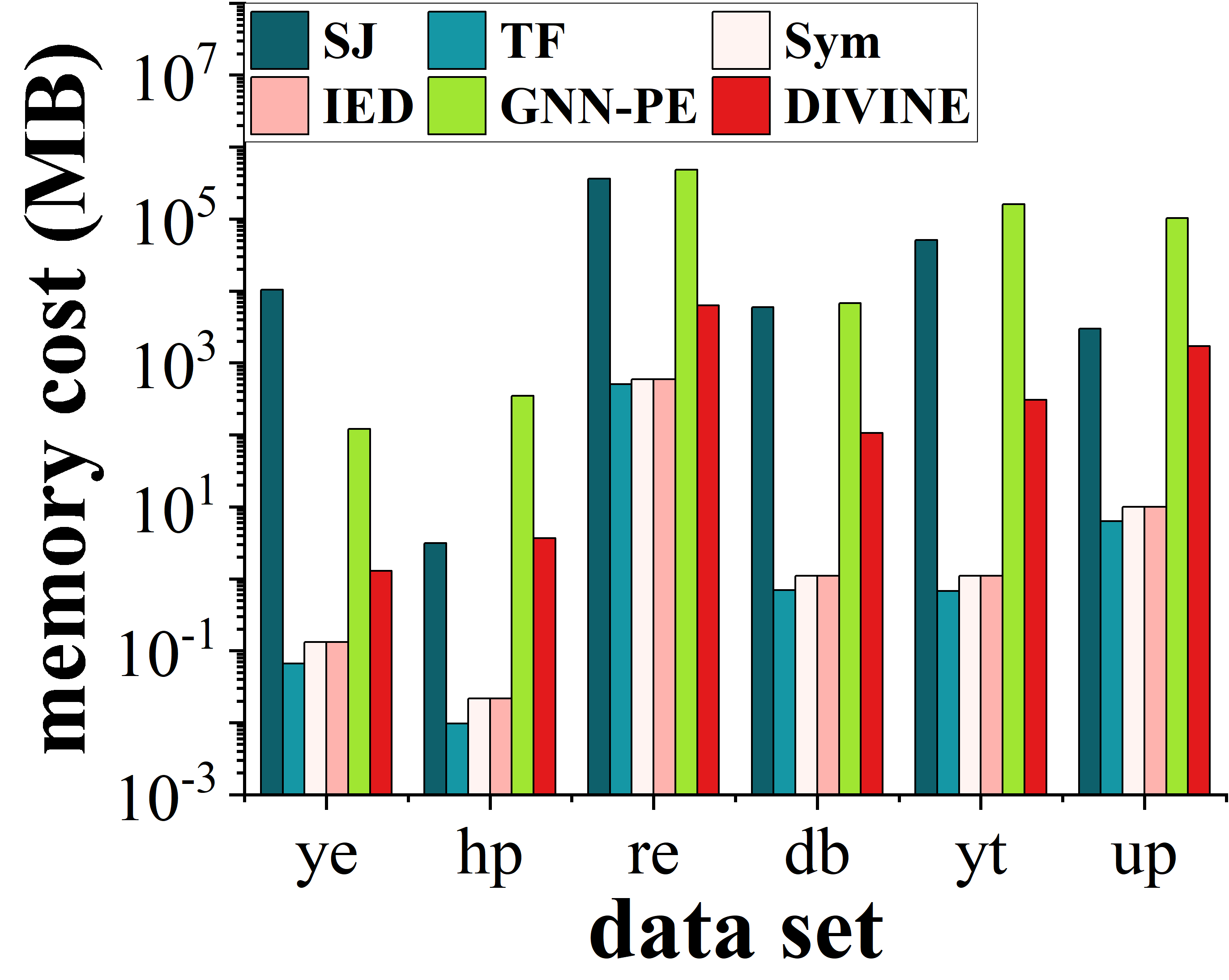}}\label{subfig:storage_cost_real}}
\qquad
\subfigure[][{synthetic graphs}]{
\scalebox{0.15}[0.15]{\includegraphics{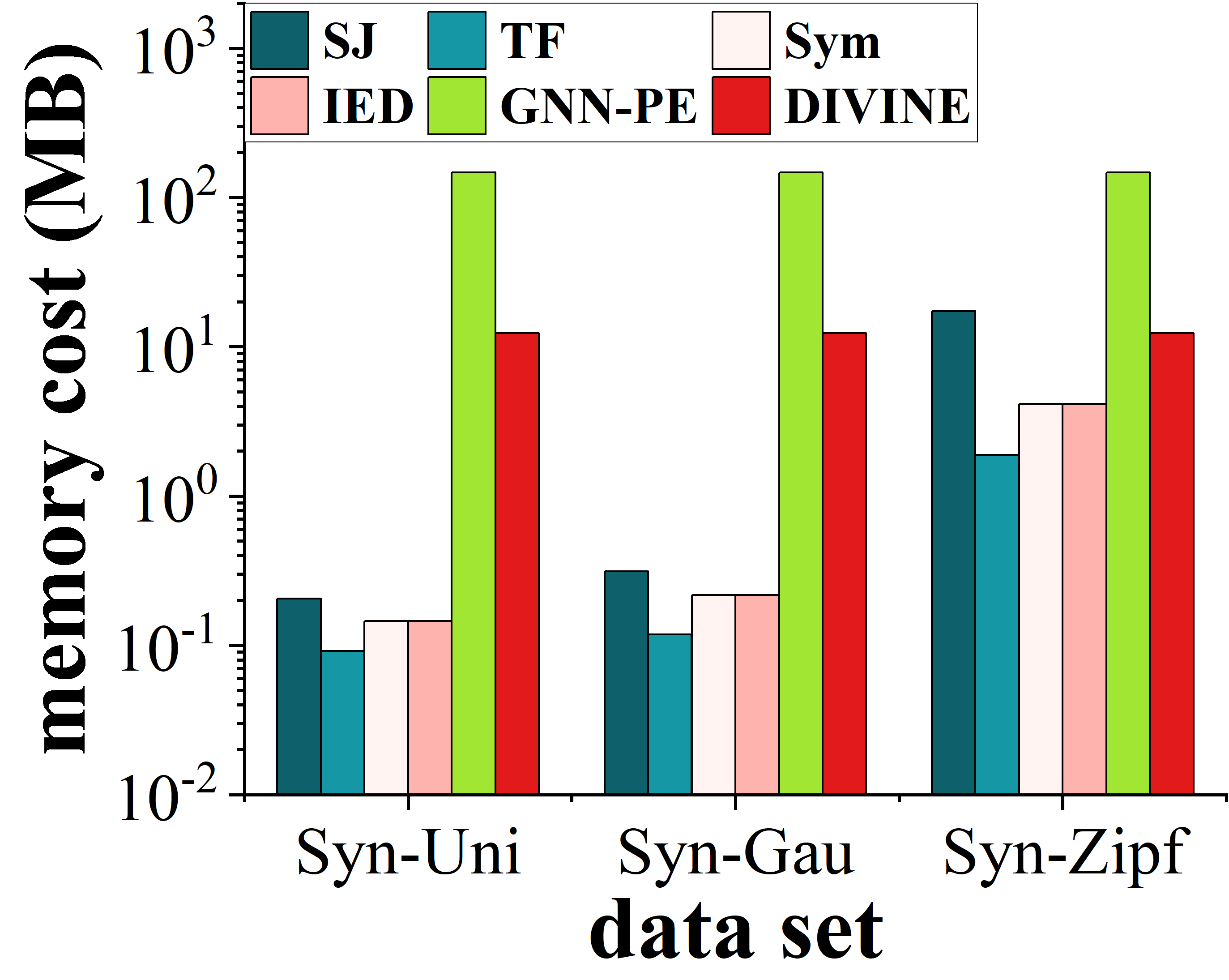}}\label{subfig:storage_cost_syn}}
\caption{The DAS$^3$ memory cost on real/synthetic graphs, compared with baseline methods.}
\label{fig:storage_cost}
\end{figure}

\noindent{\bf The DAS$\bm{^3}$ Space Cost on Real/Synthetic Graphs.}
Since GF directly enumerates matching answers over raw data graph without any auxiliary data structure \cite{kankanamge2017graphflow,sun2022depth}, Figure~\ref{fig:storage_cost} compares the storage cost of our DAS$^3$ with that of the other 5 baselines, where parameters are set to default values. Our DIVINE approach always has lower space consumption of DAS$^3$ synopses than GNN-PE over real/synthetic graphs, as well as SJ \cite{choudhury2015selectivity,sun2022depth} on most graphs. Moreover, DIVINE needs a bit higher space cost than other baselines. Nevertheless, different from these baselines that construct an index over more and more incoming (registered) query graphs $q$ (which may not be scalable), our DAS$^3$ construction is \textit{one-time only} over the initial graph $G_0$. Thus, our DAS$^3$ synopses can be used to accelerate numerous online CSM query requests from users simultaneously with high throughput.

\begin{figure}[t]
\centering
\subfigure[][{$Syn\text{-}Uni$}]{                    
\scalebox{0.15}[0.15]{\includegraphics{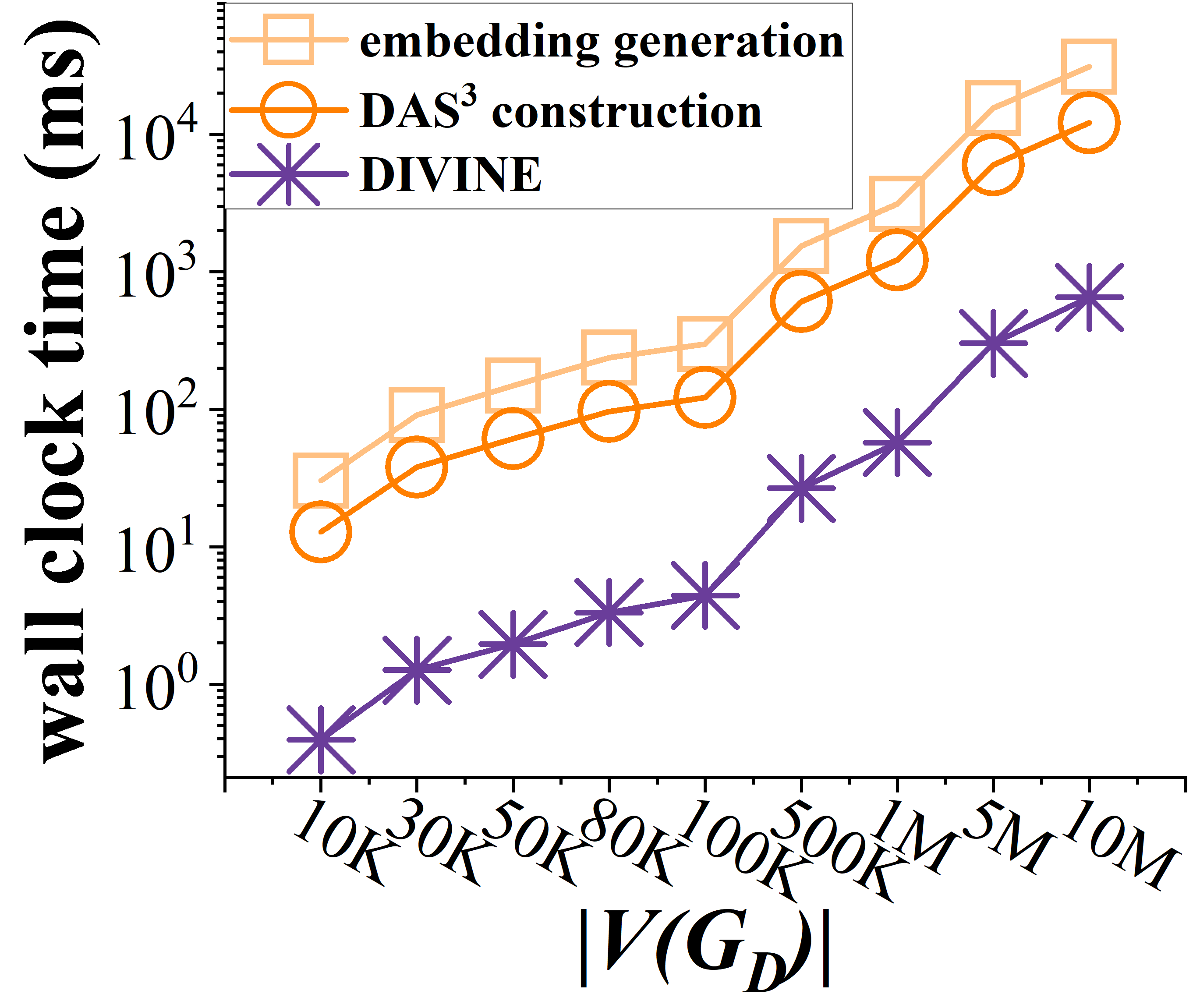}}\label{subfig:precost_uni}}
\qquad
\subfigure[][{$Syn\text{-}Gau$}]{
\scalebox{0.15}[0.15]{\includegraphics{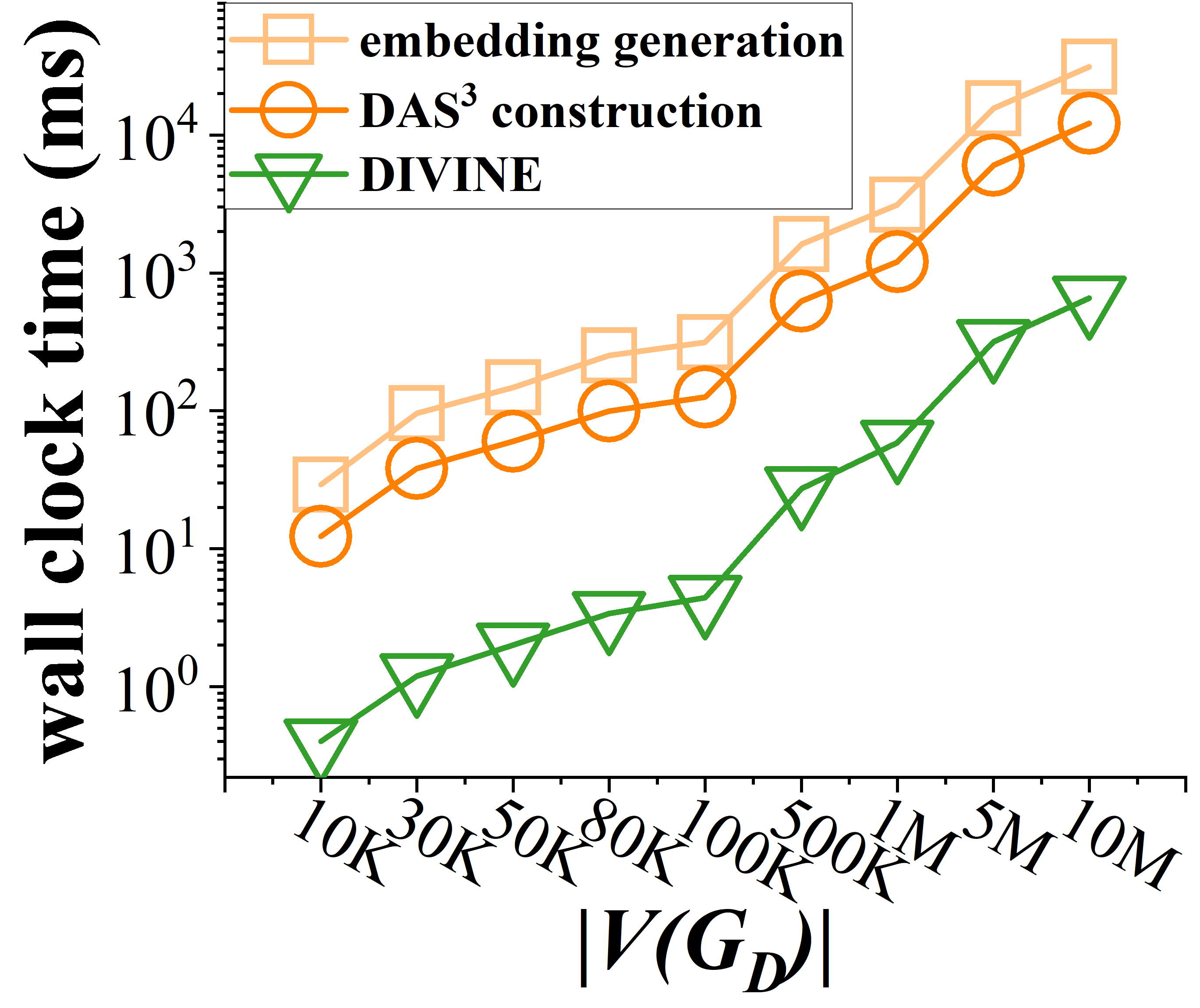}}\label{subfig:precost_gau}}
\qquad
\subfigure[][{$Syn\text{-}Zipf$}]{
\scalebox{0.15}[0.15]{\includegraphics{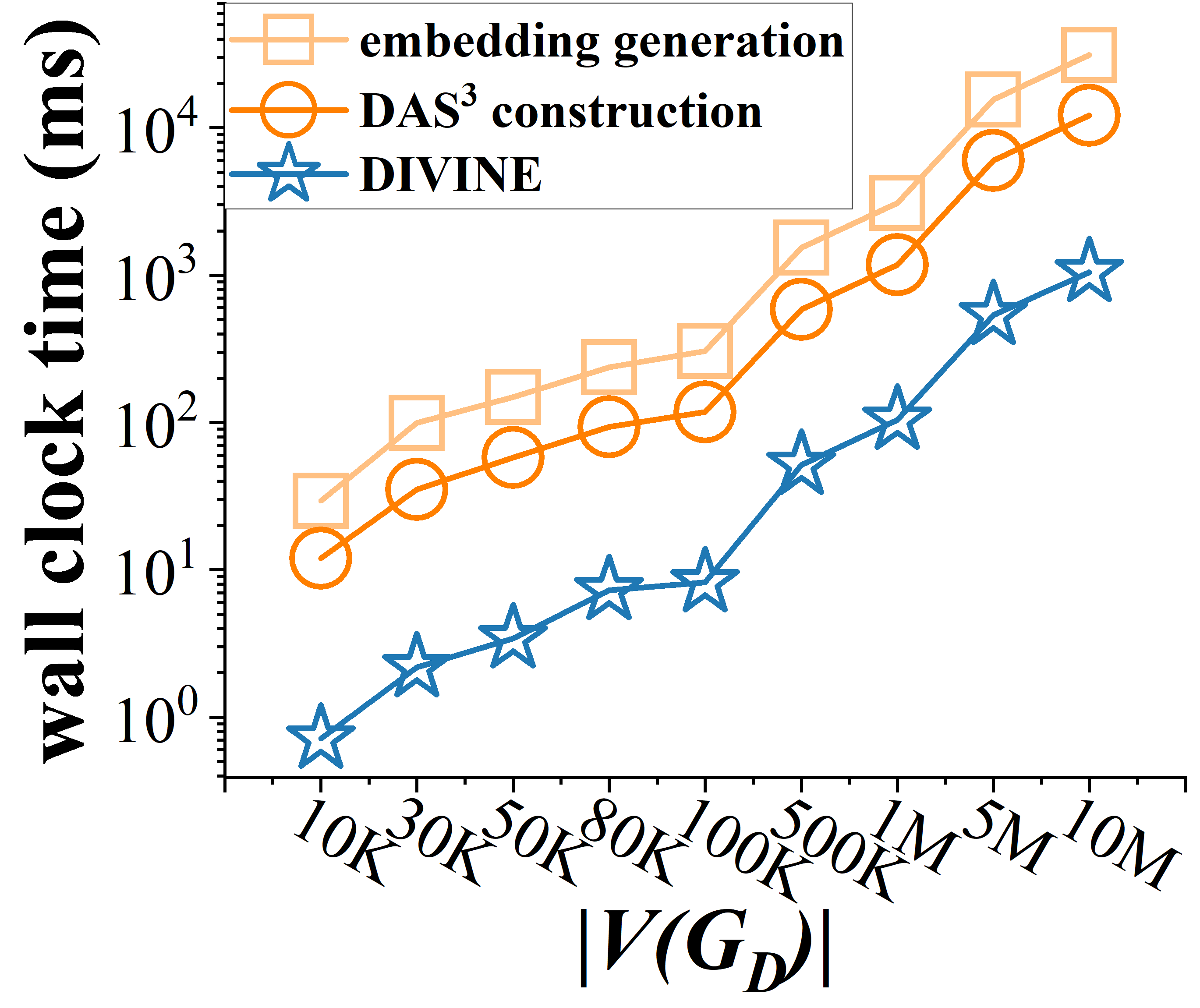}}\label{subfig:precost_zipf}}
\caption{The DIVINE pre-computation and query costs w.r.t data graph size $|V(G_D)|$.}
\label{fig:cost_vs_size}
\end{figure}

\noindent{\bf The DAS$\bm{^3}$ Synopsis Initialization Cost and Online Query Costs w.r.t. Data Graph Size $\bm{|V(G_D)|}$.}
Figure~\ref{subfig:precost_uni} evaluates the DAS$^3$ synopsis initialization cost of our DIVINE approach, including time costs of the vertex dominance embedding generation and DAS$^3$ construction over embeddings, compared with online DIVINE query time, over synthetic graphs $Syn\text{-}Uni$, where we vary the graph size $|V(G)|$ from $10K$ to $10M$ and other parameters are set to default values.
Specifically, for graph sizes from $10K$ to $10M$, the time costs of the embedding generation and DAS$^3$ construction are 0.03$\sim$31.19 $sec$ and 0.01$\sim$12.15 $sec$, respectively. 
The overall offline pre-computation time varies from 0.04 $sec$ to 43.34 $sec$, and the dynamic subgraph matching query cost is much smaller (i.e., 0.4$\sim$653.17 $ms$). 

In Figures \ref{subfig:precost_gau} and \ref{subfig:precost_zipf}, we can see similar experimental results over $Syn\text{-}Gau$ and $Syn\text{-}Zipf$ graphs, respectively, where the subgraph matching query cost is much smaller than the overall offline pre-computation time.

\section{Related Work}
\label{sec:related_work}

\noindent {\bf Continuous Subgraph Matching:}
Due to the NP completeness of the subgraph isomorphism \cite{lewis1983michael,cordella2004sub,grohe2020graph}, continuous subgraph matching (CSM) is not tractable. Several exact algorithms over dynamic graphs have been proposed \cite{fan2013incremental,choudhury2015selectivity,kankanamge2017graphflow,idris2017dynamic,idris2020general,kim2018turboflux,min2021symmetric}, which can be classified into 3 categories: \textit{recomputation-based} \cite{fan2013incremental}, \textit{direct-incremental} \cite{kankanamge2017graphflow}, and \textit{index-based incremental algorithms} \cite{choudhury2015selectivity,idris2017dynamic,idris2020general,kim2018turboflux,min2021symmetric}. The recomputation-based algorithms re-compute all the subgraph matching answers at each timestamp from scratch, the direct-incremental algorithms incrementally calculate new matching results from dynamic graphs upon updates, and index-based incremental ones incrementally maintain matching results via auxiliary indexes built over query answers.

Our proposed DIVINE approach falls into the second category of the direct-incremental algorithm. As shown in Algorithm~\ref{alg4}, for already registered query graphs, our method incrementally maintains the answer set by starting from the updated edge and leveraging vertex dominance embeddings to efficiently identify affected candidates. This update procedure is executed on-the-fly and does not rely on any query-specific or materialized index structure, which aligns with the characteristics of a direct-incremental approach. However, unlike existing works that directly use structural information (e.g., vertex label and neighbors' label set) to filter out vertex candidates, our work designs a novel and effective \textit{vertex dominance embedding} technique for candidate vertex/subgraph retrieval, which can effectively prune vertex/subgraph candidates and improve query efficiency.

Furthermore, there exist graph-based SPARQL engines such as gStore \cite{zou2014gstore}, which transform RDF graphs and SPARQL queries to labeled data and query graphs, respectively, and apply a signature-based filtering technique followed by the subgraph isomorphism verification. Although gStore supports online updates over RDF repositories, it needs to recompute the matching answer after each update for dynamic query maintenance. In contrast, our DIVINE approach is designed for general-purpose, dynamic graphs, and leverages vertex dominance embeddings to support incremental subgraph matching with provable correctness and high efficiency.

There are some studies on CSM w.r.t. specific query graph topologies (e.g., paths \cite{qin2019towards,sun2021pathenum}, cycles \cite{qiu2018real}, and cliques \cite{mondal2016casqd}) or approximate matching \cite{chen2010continuous,fan2013incremental,fan2010graph,henzinger1995computing,song2014event}. In contrast, we consider exact CSM over dynamic graphs, w.r.t. arbitrary query graph structures.

\noindent {\bf Graph Embeddings:}
Traditional heuristic-based graph embedding methods \cite{perozzi2014deepwalk,tang2015line,wang2016structural,grover2016node2vec,ribeiro2017struc2vec} are designed for static graphs and generate graph/node embeddings for fixed graph structures, which cannot be directly used for dynamic graphs. Some previous works \cite{li2019graph,bai2019simgnn,duong2021efficient,ye2024efficient} proposed to use GNNs  \cite{sun2022self,sun2024motif,wang2020gognn,hao2021ks,wang2021binarized,wang2022powerful,huang2022able} to generate graph embeddings for graph matching. However, these works either cannot guarantee the accuracy of tasks over unseen test graphs \cite{li2019graph,bai2019simgnn} (due to limitations of neural networks), or cannot efficiently and incrementally maintain embeddings in dynamic graphs with continuous updates \cite{duong2021efficient,ye2024efficient}. In contrast, our proposed vertex dominance embeddings do not use learning-based graph embedding, which can guarantee exact CSM without false negatives and enable incremental embedding updates over dynamic graphs. 

\section{Conclusion}
\label{sec:conclusion}
In this paper, we formulate and tackle the continuous subgraph matching (CSM) problem, which continuously monitors the matching subgraph answers over a large-scale dynamic graph. We propose a general framework for efficiently processing CSM queries, based on our carefully-designed \textit{\underline{D}ynam\underline{I}c \underline{V}ertex Dom\underline{IN}ance \underline{E}mbedding} (DIVINE).
We also provide an effective degree grouping technique and pruning strategies to facilitate our efficient algorithms of retrieving/maintaining CSM subgraph answers. Most importantly, we devise an effective cost model for guiding the design of dynamic vertex dominance embeddings and further enhance the pruning power of our CSM query processing. 
Extensive experiments on real/synthetic graphs confirm our DIVINE query performance. 

\section*{Acknowledgments}
This work was supported by the Natural Science Foundation of China (62272170) and the Shanghai International Joint Lab of Trustworthy Intelligent Software (22510750100). Mingsong Chen is the corresponding author.

\clearpage

\balance
\bibliographystyle{ACM-Reference-Format}
\bibliography{sample}

\end{document}